\newcommand{\rleq}{\trianglelefteq}
\newcommand{\rgeq}{\trianglerighteq}
\newtheorem{proposition}{Proposition}
\newtheorem{theorem}[proposition]{Theorem}
\newtheorem{lemma}[proposition]{Lemma}
\theoremstyle{definition}
\newtheorem{definition}[proposition]{Definition}
\newtheorem{example}[proposition]{Example}
\newtheorem{remark}[proposition]{Remark}
\definecolor{lighter}{RGB}{10.0,146.0,35.0}
\definecolor{darker}{RGB}{2.0,100.0,20.0}
\definecolor{sininen}{RGB}{89.5,101.8,238.1}
\definecolor{punainen}{RGB}{227.0,12.0,12.0}
\newcommand{\darker}[1]{\textcolor{darker}{#1}}
\newcommand{\punainen}[1]{\textcolor{punainen}{#1}}
\newcommand{\sininen}[1]{\textcolor{sininen}{#1}}
\newcommand{\mc}[1]{\mathcal{#1}}
\newcommand{\ms}[1]{\mathsf{#1}}
\newcommand{\mf}[1]{\mathfrak{#1}}
\newcommand{\mb}[1]{\mathbb{#1}}
\newcommand{\N}{\mathbb N}
\newcommand{\R}{\mathbb R}
\newcommand{\Z}{\mathbb Z}
\newcommand{\C}{\mathbb C}
\newcommand{\Q}{\mathbb Q}
\newcommand{\T}{\mathbb T}
\newcommand{\hil}{\mathcal{H}}
\newcommand{\tr}[1]{\mathrm{tr}\left[#1\right]} 
\def\<{\langle}
\def\>{\rangle}
\def\d{{\mathrm d}}
\newcommand{\id}{\mathbbm{1}} 
\newcommand{\fii}{\varphi}
\newcommand{\tj}{\vartheta}
\newcommand{\sis}[2]{\left\langle #1\middle| #2\right\rangle}
\let\originalleft\left
\let\originalright\right
\renewcommand{\left}{\mathopen{}\mathclose\bgroup\originalleft}
\renewcommand{\right}{\aftergroup\egroup\originalright}
\newcommand*{\saved@uline}{}
\let\saved@uline\uline
\newcommand*{\mathuline}{%
  \mathpalette{\math@uline\saved@uline}%
}
\newcommand*{\math@uline}[3]{%
  \mbox{#1{$#2#3\m@th$}}%
}
\begin{document}

\title{Barycentric decompositions\\ for extensive monotone divergences}

\author{Erkka Haapasalo}
\address{Centre for Quantum  Technologies,  National University of Singapore}

\begin{abstract}
We study sets of divergences or dissimilarity measures in a generalized real-algebraic setting which includes the cases of classical and quantum multivariate divergences. We show that a special subset of divergences, the so-called test spectrum, characterizes the rest of the divergences through barycentres and that the extreme points of relevant convex subsets of general divergences are contained within the test spectrum. Only some special parts of the test spectrum may contain non-extreme elements. We are able to fully characterize the test spectrum in the case of classical multivariate divergences. The quantum case is much more varied, and we demonstrate that essentially all the bivariate and multivariate quantum divergences suggested previously in literature are within the test spectrum and extreme within the set of all quantum (multivariate) divergences. This suggests that the variability of quantum divergences is real since all the previously suggested divergences are independent of each other.
\end{abstract}

\maketitle

\section{Introduction}

Both in classical and quantum information theory we need to compare the information contained in a set of states (i.e.,\ probability measures in the classical case and density operators in the quantum case). In the classical case a set $P=\{p^{\tj}\}_{\tj\in\Theta}$ of probability measures, i.e.,\ a statistical experiment, is more informative than another statistical experiment $Q=\{q^\tj\}_{\tj\in\Theta}$ if $P$ attains a higher expected payoff in different tests that $Q$. According to Blackwell \cite{Blackwell51,Blackwell53}, this is equivalent with the existence of a stochastic operator (Markov kernel) $T$ such that $Tp^\tj=q^\tj$ for all $\tj\in\Theta$. In this case we denote $P\succeq Q$. We may study this comparison of experiments also in the quantum regime by defining that a quantum experiment $\vec{\rho}=\{\rho^\tj\}_{\tj\in\Theta}$ with states (density operators) $\rho^\tj$ is more informative than another experiment $\vec{\sigma}=\{\sigma^\tj\}_{\tj\in\Theta}$, symbolically $\vec{\rho}\succeq\vec{\sigma}$, if there is a quantum channel (a completely positive trace-preserving linear map) $\Phi$ such that $\Phi(\rho^\tj)=\sigma^\tj$ for all $\tj\in\Theta$.

We are often more interested in more relaxed settings of the comparison of experiments where we have access to multiple copies of the experiments or to suitable `catalytic' experiments. This means that we may ask when $\vec{\rho}=\big\{\rho^\tj\big\}_{\tj\in\Theta}$ yields more information than $\vec{\sigma}=\big\{\sigma^\tj\big\}_{\tj\in\Theta}$ {\it in large samples}, i.e.,
$$
\big\{\big(\rho^\tj\big)^{\otimes n}\big\}_{\tj\in\Theta}\succeq\big\{\big(\sigma^\tj\big)^{\otimes n}\big\}_{\tj\in\Theta}
$$
for any $n\in\N$ sufficiently large. On the other hand, we say that $\vec{\rho}$ yields more information than $\vec{\sigma}$ {\it catalytically} if there is an experiment $\vec{\tau}=\big\{\tau^\tj\big\}_{\tj\in\Theta}$ such that
$$
\big\{\rho^\tj\otimes\tau^\tj\big\}_{\tj\in\Theta}\succeq\big\{\sigma^\tj\otimes\tau^\tj\big\}_{\tj\in\Theta}.
$$
Some catalysts are essentially universial, i.e.,\ they can catalyze the information content comparison order for any inputs and outputs. This is the case, e.g.,\ for a fully orthogonal catalyst where, for any $\tj,\tj'\in\Theta$, $\tj\neq\tj'$, $\tau^\tj$ and $\tau^{\tj'}$ are orthogonally supported. This is why such trivial catalytic experiments are typically excluded.

We may attempt to approach the problem of comparing the information content in classical and quantum experiments (in the large-sample or catalytic regime) by using particular quantifiers which we call {\it divergences}: A map $D$ on quantum experiments is a divergence if $\vec{\rho}\succeq\vec{\sigma}$ implies $D(\vec{\rho})\geq D(\vec{\sigma})$ (data-processing inequality) and, for any experiments $\vec{\rho}=\{\rho^\tj\}_{\tj\in\Theta}$ and $\vec{\sigma}=\{\sigma^\tj\}_{\tj\in\Theta}$,
$$
D\left(\{\rho^\tj\otimes\sigma^\tj\}_{\tj\in\Theta}\right)=D(\vec{\rho})+D(\vec{\sigma})\quad\textrm{(tensor-additivity or extensivity)}.
$$
The data-processing inequality is a natural requirement for a measure of information content since such a measure should reflect the information ordering $\succeq$ of experiments. Extensivity means that the information content is additive under non-correlated products of experiments. The insistence on extensivity comes from the large-sample and catalytic comparison of experiments: ultimately we want to characterize large-sample or asymptotic majorization through inequalities $D(\vec{\rho})>D(\vec{\sigma})$ which are obviously unchanged in the large-sample and catalytic regimes if $D$ is extensive.

Let us recall that, if $\Theta=\{1,2\}$, divergences as defined above are often also called {\it relative entropies} which is arguably the term of choice in this particular setting where the term `divergence' is often used for any monotone map vanishing on repeating pairs $(\rho,\rho)$. This means that also certain one-shot quantities, e.g.,\ smoothed hypothesis-testing quantities, are often called divergences although they are not extensive. However, for us, all the divergences are henceforth assumed to be extensive.

In a generalized setting which includes both the classical and quantum information ordering of experiments, where the set $\Theta=\{1,\ldots,d\}$ is finite (and the sample spaces of the probability measures are finite and the Hilbert spaces are finite dimensional), it turns out that there is a special subset $\hat{\mf D}$ within the set of divergences which gives a necessary condition for the large-sample ordering which, in the quantum case, has the following form: whenever $\vec{\rho}=\big(\rho^{(1)},\ldots,\rho^{(d)}\big)$ and $\vec{\sigma}=\big(\sigma^{(1)},\ldots,\sigma^{(d)}\big)$ are (finite) quantum experiments such that $\Delta(\vec{\rho})>\Delta(\vec{\sigma})$ for all $\Delta\in\hat{\mf D}$, then
$$
\left(\big(\rho^{(1)}\big)^{\otimes n},\ldots,\big(\rho^{(d)}\big)^{\otimes n}\right)\succeq\left(\big(\sigma^{(1)}\big)^{\otimes n},\ldots,\big(\sigma^{(d)}\big)^{\otimes n}\right)
$$
for any $n\in\N$ sufficiently large. This result is a consequence of a real-algebraic Vergleichsstellensatz derived in \cite{FritzII}. Related results have also been proven in \cite{FritzI}.

This condition for large-sample information ordering implies in a straightforward manner that, given finite quantum (or classical) experiments $\vec{\rho}$ and $\vec{\sigma}$, the condition $\Delta(\vec{\rho})\geq\Delta(\vec{\sigma})$ for all $\Delta\in\hat{\mf D}$ implies that  $D(\vec{\rho})\geq D(\vec{\sigma})$ for all divergences $D$. In the classical case, for $d=2$, this was used to prove that any classical relative entropy (i.e.,\ a bivariate extensive divergence) can be considered as a barycentre over $\hat{\mf D}$ in \cite{Mu_et_al_2021}. In this special case the set $\hat{\mf D}$ consists of the maps $(p,q)\mapsto D_\alpha(p\|q)$ and $(p,q)\mapsto D_\alpha(q\|p)$ for all $\alpha\in[1/2,\infty]$, where the maps $D_\alpha$ are the bivariate R\'{e}nyi relative entropies \cite{Renyi61}. Recall that, for probability measures $p$ and $q$ on a standard Borel measurable space with supports ${\rm supp}\,p$ and ${\rm supp}\,q$ and a dominating measure $\nu$ (e.g.,\ $\nu=p+q$), we have
\begin{equation}\label{eq:RenyiRelEntr}
D_\alpha(p\|q)=\left\{\begin{array}{ll}
-\log{q({\rm supp}\,p)},&\alpha=0,\\
\frac{1}{\alpha-1}\log{\int \left(\frac{dp}{d\nu}\right)^\alpha\left(\frac{dq}{d\nu}\right)^{1-\alpha}}\,d\nu,&\alpha\in(0,1)\\
&{\rm or}\ \alpha\in(1,\infty)\ {\rm and}\ p\ll q,\\
\int \frac{dp}{dq}\log{\frac{dp}{dq}}\,dq,&\alpha=1\ {\rm and}\ p\ll q,\\
\log{\left\{q-{\rm ess}\,\sup \frac{dp}{dq}\right\}},&\alpha=\infty\ {\rm and}\ p\ll q,\\
\infty&{\rm otherwise},
\end{array}\right.
\end{equation}
with the convention $\log{0}=-\infty$ and $p\ll q$ means that $p$ is absolutely continuous with respect to $q$. Recall that $D_1=:D_{\rm KL}$ is also called the {\it Kullback-Leibler relative entropy} and $D_\infty=:D_{\rm max}$ is called the {\it max-relative entropy}. Thus, according to \cite{Mu_et_al_2021}, for any relative entropy $D$, there are inner and outer regular finite measures $\mu_\pm:\mc B\big([1/2,\infty]\big)\to[0,\infty)$ such that
\begin{equation}\label{eq:RenyiBarycentre}
D(p\|q)=\int_{[1/2,\infty]}D_\alpha(p\|q)\,d\mu_+(\alpha)+\int_{[1/2,\infty]}D_\alpha(q\|p)\,d\mu_-(\alpha).
\end{equation}

In the present work we generalize this result, so that, as special cases, we may characterize multivariate ($d$-variate) classical and quantum divergences as barycentres over special sets $\hat{\mf D}$ of divergences. In the classical case, the set $\hat{\mf D}$ can be identified with the set of classical multivariate R\'{e}nyi divergences studied in \cite{Farooq_et_al_2024,Verhagen_et_al_2024} where also the above large-sample result was derived. In the quantum case (as well as in more general settings), the set $\hat{\mf D}$ consists of a few parts: a subset we denote $\mf D_{\rm nd}$ (`nd' for `nondegenerate' or `non-derivation') and subsets $\mf D_k$, $k=1,\ldots,d$, of `derivations'. We have a normalization function $N:\mf D_{\rm nd}\to\R\setminus\{0\}$ such that, for all $\Delta\in\mf D_{\rm nd}$, there is a positive function $\Phi_\Delta$ on non-normalized experiments (i.e.,\ $d$-tuples of positive semi-definite operators of any non-zero trace) such that
\begin{equation}\label{eq:DeltaND}
\Delta(\vec{\rho})=N(\Delta)\log{\Phi_\Delta(\vec{\rho})}.
\end{equation}
The function $\Phi_\Delta$ is a monotone homomorphism in the sense that
\begin{itemize}
\item[(ND1)] we have either
\begin{align*}
\Phi_\Delta\left(\rho^{(1)}\oplus\sigma^{(1)},\ldots,\rho^{(d)}\oplus\sigma^{(d)}\right)&=\Phi_\Delta(\vec{\rho})+\Phi_\Delta(\vec{\sigma})\qquad {\rm or}\\
\Phi_\Delta\left(\rho^{(1)}\oplus\sigma^{(1)},\ldots,\rho^{(d)}\oplus\sigma^{(d)}\right)&=\max\left\{\Phi_\Delta(\vec{\rho}),\Phi_\Delta(\vec{\sigma})\right\}
\end{align*}
for all experiments $\vec{\rho}=\big(\rho^{(1)},\ldots,\rho^{(d)}\big)$ and $\vec{\sigma}=\big(\sigma^{(1)},\ldots,\sigma^{(d)}\big)$ which can be non-normalized,
\item[(ND2)] we have
$$
\Phi_\Delta\left(\rho^{(1)}\otimes\sigma^{(1)},\ldots,\rho^{(d)}\otimes\sigma^{(d)}\right)=\Phi_\Delta(\vec{\rho})\Phi_\Delta(\vec{\sigma})
$$
for all experiments $\vec{\rho}=\big(\rho^{(1)},\ldots,\rho^{(d)}\big)$ and $\vec{\sigma}=\big(\sigma^{(1)},\ldots,\sigma^{(d)}\big)$ which can be non-normalized,
\item[(ND3)] $\Phi_\Delta(1,\ldots,1)=1$ where we consider $1$ as a state on a one-dimensional Hilbert space, and
\item[(ND4)] whenever $\vec{\rho}\succeq\vec{\sigma}$, then $\Phi_\Delta(\vec{\rho})\geq\Phi_\Delta(\vec{\sigma})$ when $N(\Delta)>0$ and $\Phi_\Delta(\vec{\rho})\leq\Phi_\Delta(\vec{\sigma})$ when $N(\Delta)<0$.
\end{itemize}
The set $\mf D_k$ consists of divergences $\Delta$ which has the form $\Delta(\vec{\rho})=\Delta'(\vec{\rho})/\tr{\rho^{(k)}}$ for any non-normalized $\vec{\rho}=\big(\rho^{(1)},\ldots,\rho^{(d)}\big)$ where the map $\Delta'$ additionally satisfies
\begin{itemize}
\item[(k1)] $\Delta'\left(\rho^{(1)}\oplus\sigma^{(1)},\ldots,\rho^{(d)}\oplus\sigma^{(d)}\right)=\Delta'(\vec{\rho})+\Delta'(\vec{\sigma})$ and
\item[(k2)] $\Delta'\left(\rho^{(1)}\otimes\sigma^{(1)},\ldots,\rho^{(d)}\otimes\sigma^{(d)}\right)=\Delta'(\vec{\rho})\tr{\sigma^{(k)}}+\tr{\rho^{(k)}}\Delta'(\vec{\sigma})$
\end{itemize}
for all non-normalized experiments $\vec{\rho}=\big(\rho^{(1)},\ldots,\rho^{(d)}\big)$ and $\vec{\sigma}=\big(\sigma^{(1)},\ldots,\sigma^{(d)}\big)$.

The set $\mf D_{\rm nd}$ is further divided into four subsets depending which of the two instances in conditions (ND1) and (ND4) hold. Note that, when $\Delta\in\mf D_{\rm nd}$, conditions (ND2) -- (ND4) just mean that $\Delta$ is a divergence. Condition (ND1) is the special extra condition which characterizes $\mf D_{\rm nd}$ within the special set $\hat{\mf D}$. Conditions (k1) and (k2) are both extra conditions for any $\Delta\in\mf D_k$, $k=1,\ldots,d$. We may impose an additional normalization condition for all (quantum) divergences: Let us pick an experiment $\vec{\tau}=\big(\tau^{(1)},\ldots,\tau^{(d)}\big)$ of states $\tau^{(k)}$ such that $k\neq\ell$ implies $\tau^{(k)}\neq\tau^{(\ell)}$. Let us consider the convex set $\mf D$ of general (quantum) divergences $D$ such that $D(\vec{\tau})=1$. For the function $N:\mf D_{\rm nd}\to\R\setminus\{0\}$ it is now opportune to choose $N(\Delta)=\big(\log{\Phi_\Delta(\vec{\tau})}\big)^{-1}$ for all $\Delta\in\mf D_{\rm nd}$. We show that, in the general setting including the classical and quantum cases,
\begin{equation}\label{eq:extresult}
{\rm ext}\,\mf D\subseteq\hat{\mf D},\quad \mf D_{\rm nd}\subseteq{\rm ext}\,\mf D
\end{equation}
for the set ${\rm ext}\,\mf D$ of convex extreme points of the normalized set $\mf D$ of divergences. The latter inclusion above means that, in the quantum case, all the quantum divergences $\Delta$ having the form presented in \eqref{eq:DeltaND} where $\Phi_\Delta$ satisfies (ND1) -- (ND4) (especially the special property (ND1)) are extreme divergences. This means that (ND1) can be seen as an extremality condition for a divergence. Particularly all these special quantum divergences are independent of each other as none of them can be expressed as a barycentre with support outside a singleton. This prompts one to think that at least all $\Delta\in\mf D_{\rm nd}$ (and all $\Delta\in{\rm ext}\,\mf D_k$, $k=1,\ldots,d$) are relevant when characterizing the information order in quantum experiments. Since most of the proposed bivariate and multivariate quantum divergences are within $\mf D_{\rm nd}$ or $\mf D_k$ ($k=1,\ldots,d$) (as we shall demonstrate), this means that the large variability of quantum divergences presented in existing literature is non-redundant. Examples of $\Delta\in\mf D_{\rm nd}$ include the $\alpha$-$z$ quantum R\'{e}nyi divergences ($d=2$) \cite{Audenaert_Datta_2015,Jaksic2012} including the quantum divergences of the Petz-type \cite{Petz_85,Petz_1986} and of the `sandwiched' type \cite{Muller-Lennert_et_al_2013} as well as the Kubo-Ando divergences ($d=2$) \cite{Kubo_Ando_79} and the general multivariate divergences arising from matrix means \cite{Bhatia_Karandikar_2012}. Examples of $\Delta\in\mf D_k$ include the Umegaki quantum divergence and the Belavkin-Staszewski divergence \cite{Belavkin_Staszewski_82} ($d=2$ in both). Several multivariate generalizations of these were introduced, e.g.,\ in \cite{BuVra2021,mosonyi2024geometric}.

The theoretical background of this work lies in the theory of preordered semirings and asymptotic spectra. This theory has been highly useful especially for information theory ever since Volker Strassen found the first subcubic matrix multiplication algorithm, a find giving rise to his theory of asymptotic spectra \cite{strassen1986,strassen1987,strassen1991,strassen1998}. This theory has been applied to varied fields of information theory and computer science since then, even quantum information theory, and some extensions to Strassen's {\it Positivstellensatz} are represented by the results of \cite{FritzI,FritzII,Vrana2022}. An overview of the theory of asymptotic spectra and its uses is provided by \cite{WigZuid}. In this current work, we especially utilize a {\it Vergleichsstellensatz} (`comparison set theorem') from \cite{FritzII} which characterizes large sample ordering in a preordered semidomain by inequalities involving a (suitably compact) set of conditions (a `test spectrum') whose elements can be seen as special divergences and it coincides in the quantum and classical cases with the set $\hat{\mf D}$ of those $\Delta$ (or $\Phi_\Delta$) satisfying (ND1)--(ND4) or those $\Delta$ satisfying (k1) and (k2), $k=1,\ldots,d$. We apply similar machinery to this large-sample result as that used in \cite{Mu_et_al_2021} to derive the barycentric decompositions for general divergences. In the quantum case, this can be briefly stated as follows: for any (multivariate) quantum divergence $D$ there exists a finite positive (inner and outer regular) measure $\mu$ on the Borel $\sigma$-algebra of $\hat{\mf D}$ (w.r.t.\ to a particular compactifying topology of $\hat{\mf D}$) such that
$$
D(\vec{\rho})=\int_{\hat{\mf D}}\Delta(\vec{\rho})\,d\mu(\Delta)
$$
for all quantum experiments $\vec{\rho}$ consisting of trace-one density operators. Let us here underline the fact that although we know that the above barycentric expression exists for quantum divergences, we are currently unable to fully characterize the set $\hat{\mf D}$. This set however contains the majority of quantum divergences proposed in literature as we have just specified. The set $\hat{\mf D}$ for the classical multivariate case is characterized in detail in Example \ref{ex:classical} based on earlier results in \cite{Farooq_et_al_2024}.
The barycentric result is then used to prove the extremality result summed up in \eqref{eq:extresult}.

This paper is organized as follows: In Section \ref{sec:Background} we make some definitions used throughout this work and briefly review the machinery introduced in \cite{FritzI,FritzII} needed for the {\it Vergleichsstellensatz} establishing the large-sample comparison result laying the foundation on later results. Using the large-sample result and following the proof strategy presented in \cite{Mu_et_al_2021}, we derive a barycentric decomposition for any generalized divergence in Subsection \ref{subsec:barycentres}. We prove the (partial) characterization of extreme divergences presented in \eqref{eq:extresult} in Subsection \ref{subsec:ext}. In Section \ref{sec:quantum} we discuss the implications of these results to the quantum case, reviewing the divergences presented in existing literature and demonstrating their connections to the sets $\hat{\mf D}$ and $\mf D_{\rm nd}$. In Subsection \ref{subsec:quantumsemiring} we explicitly show that the quantum case is truly within the framework of the theory presented in \cite{FritzI,FritzII}, so that our results also apply to general multivariate quantum divergences.

\section{Background}\label{sec:Background}

We use the following basic notations throughout this work:
\begin{itemize}
\item $\N=\{1,2,3,\ldots\}$
\item $\R_{>0}$: the set of all strictly positive real numbers.
\item $\R_+$: The set of non-negative real numbers seen as a preordered semiring (see the definition of this notion later) with the natural addition, multiplication, and (total) order $\geq$.
\item $\R_+^{\rm op}$: as a set the same as $\R_+$ above but with the opposite order $a\geq^{\rm op}b$ $\Leftrightarrow$ $b\geq a$. $\R_+$ and $\R_+^{\rm op}$ together are often called {\it temperate non-negative reals}.
\item $\T\R_+$: otherwise the same as $\R_+$ above but with the tropical sum $a+'b=\max\{a,b\}$.
\item $\T\R_+^{\rm op}$: the same as $\T\R_+$ above but with the opposite order. $\T\R_+$ and $\T\R_+^{\rm op}$ together are often called as {\it tropical non-negative reals}.
\end{itemize}

For any $n\in\N$, we define the 1-norm $\|\cdot\|_1$ on $\R^n$ through
$$
\|x\|_1=|x_1|+\cdots+|x_n|,\qquad x=(x_1,\ldots,x_n)\in\R^n.
$$
We also define, for all $x=(x_1,\ldots,x_n)\in\R^n$, ${\rm supp}\,x$ as the set of those $k\in\{1,\ldots,n\}$ such that $x_k\neq0$. For $x=(x_1,\ldots,x_n)\in\R^n$ and $y=(y_1,\ldots,y_m)\in\R^m$, we define $x\oplus y\in\R^{n+m}$ and $x\otimes y\in\R^{nm}$ through
\begin{align*}
x\oplus y&=(x_1,\ldots,x_n,y_1,\ldots,y_m),\\
x\otimes y&=(x_iy_j)_{i,j}
\end{align*}
where, in the latter formula, we identify $\{1,\ldots,n\}\times\{1,\ldots,m\}$ with $\{1,\ldots,nm\}$ in an unspecified way which we keep fixed throughout this work. In the sequel, we usually restrict these definitions for $x\in\R_+^d$.

\subsection{Preordered semirings and a Vergleichsstellensatz}

Preordered semirings are essentially otherwise like commutative rings with the exception that the elements of a (preordered) semiring typically do not have additive inverses and there is a preorder which respects the algebraic structure. To formalize this, a preordered semiring is an ordered tuple $(S,0,1,+,\cdot,\rgeq)$ where $S\neq\emptyset$, $(S,0,+)$ and $(S,1,\cdot)$ are commutative semigroups where the multiplication distributes over the addition, and $\rgeq$ is a preorder (a reflexive and transitive binary relation) on $S$ such that
$$
x\rgeq y\ \Rightarrow\ \left\{\begin{array}{ll}
x+a\rgeq y+a,&\forall a\in S,\\
xa\rgeq ya,&\forall a\in S.
\end{array}\right.
$$
For a preordered semiring $S$ and $x,y\in S$, we denote $x\sim y$ whenever there are $z_1,\ldots,z_n\in S$ such that
$$
x\rgeq z_1\rleq z_2\rgeq\cdots\rleq z_n\rgeq y.
$$
We say that a preordered semiring $S$ is of {\it polynomial growth} when there is $u\in S$ such that
$$
x\rgeq y\ \Rightarrow\ \exists k\in\N:\ yu^k\rgeq x.
$$
Such an element $u$ is called a {\it power universal}. A preordered semiring $S$ is {\it zerosumfree} if
$$
x+y=0\ \Rightarrow\ x=0=y.
$$
We say that a preordered semiring $S$ is a {\it preordered semidomain} if
$$
xy=0\ \Rightarrow\ x=0\ {\rm or}\ y=0
$$
and
$$
0\rgeq x\rgeq 0\ \Rightarrow\ x=0.
$$

Given two preordered semirings $(S,0_S,1_S,+,\cdot,\rgeq_S)$ and $(T,0_T,1_T,+,\cdot,\rgeq_T)$, we say that a map $\Phi:S\to T$ is a {\it monotone homomorphism} if
\begin{itemize}
\item $\Phi(0_S)=0_T$, $\Phi(1_S)=1_T$,
\item $\Phi(x+y)=\Phi(x)+\Phi(y)$ for all $x,y\in S$,
\item $\Phi(xy)=\Phi(x)\Phi(y)$ for all $x,y\in S$, and
\item $x\rgeq_S y$ $\Rightarrow$ $\Phi(x)\rgeq_T\Phi(y)$.
\end{itemize}
We say that a monotone homomorphism $\Phi:S\to T$ is {\it degenerate} if $x\rgeq y$ implies $\Phi(x)=\Phi(y)$. Otherwise $\Phi$ is {\it nondegenerate}. We are primarily interested in monotone homomorphisms $\Phi:S\to\mb K$ where $\mb K\in\{\R_+,\R_+^{\rm op},\T\R_+,\T\R_+^{\rm op}\}$. Given a monotone homomorphism $\Phi:S\to\R_+$, we say that a map $\Delta:S\to\R$ is a {\it monotone derivation at $\Phi$} if $x\rgeq y$ implies $\Delta(x)\geq\Delta(y)$, $\Delta(x+y)=\Delta(x)+\Delta(y)$ for all $x,y\in S$, and
$$
\Delta(xy)=\Delta(x)\Phi(y)+\Phi(x)\Delta(y)\qquad{\rm (Leibniz\ rule)}
$$
for all $x,y\in S$. We are only interested in monotone derivations at degenerate homomorphisms $\Phi$ in which case we may view $\Phi$ as a monotone homomorphism of $S$ to $\R_+$ or, equivalently, $\R_+^{\rm op}$.

\begin{definition}\label{def:deg}
We say that a preordered semiring $S$ is {\it of degeneracy $d$} for some $d\in\N$ if there is a surjective homomorphism $\|\cdot\|:S\to\R_{>0}^d\cup\{(0,\ldots,0)\}$ with trivial kernel such that
$$
a\rgeq b\ \Rightarrow\ \|a\|=\|b\|,\quad \|a\|=\|b\|\ \Rightarrow\ a\sim b.
$$
In this situation, we denote the component homomorphisms of $\|\cdot\|$ by $\|\cdot\|_{(k)}$ for $k=1,\ldots,d$.
\end{definition}

Let us briefly note that, if $S$ is of degeneracy $d$, we have $x\sim y$ if and only if $\|x\|=\|y\|$. Naturally, $\|x\|=\|y\|$ $\Rightarrow$ $x\sim y$. For the converse, assume that $x\sim y$ and $z_1,\ldots,z_n\in S$ are such that $x\rgeq z_1\rleq z_2\rgeq\cdots\rleq z_n\rgeq y$. The first inequality in the chain implies $\|x\|=\|z_1\|$, the second implies $\|z_1\|=\|z_2\|$, and so on until $\|z_{n-1}\|=\|z_n\|$ and $\|z_n\|=\|y\|$. Thus,
$$
\|x\|=\|z_1\|=\|z_2\|=\cdots=\|z_{n-1}\|=\|z_n\|=\|y\|.
$$

\begin{definition}\label{def:TestSpectrum}
Let $S$ be a preordered semiring of polynomial growth and of degeneracy $d$ where we fix a power universal $u$ and the vector-valued homomorphism $\|\cdot\|$ of Definition \ref{def:TestSpectrum}. We denote the set of all nondegenerate monotone homomorphisms $\Phi:S\to\mb K$ with $\mb K\in\{\R_+,\R_+^{\rm op},\T\R_+,\T\R_+^{\rm op}\}$ by $\Sigma(S,\mb K)$ and the set of all the monotone derivations $\Delta:S\to\R$ at $\|\cdot\|_{(k)}$ ($k=1,\ldots,d$) with $\Delta(u)=1$ by $\mf D^k(S)$. For $\mb K\in\{\R_+,\R_+^{\rm op},\T\R_+,\T\R_+^{\rm op}\}$, we define
$$
\mf D(S,\mb K):=\left\{\frac{\log{\Phi(\cdot)}}{\log{\Phi(u)}}\,\middle|\,\Phi\in\Sigma(S,\mb K)\right\}
$$
as a set of maps $S\setminus\{0\}\to\R$. For $k\in\{1,\ldots,d\}$, we define
$$
\mf D_k(S):=\left\{S\setminus\{0\}\ni x\mapsto\frac{\Delta(x)}{\|x\|_{(k)}}\,\middle|\,\Delta\in\mf D^k(S)\right\}.
$$
Finally, we define the {\it test spectrum} through
$$
\hat{\mf D}(S):=\mf D(S,\R_+)\cup\mf D(S,\R_+^{\rm op})\cup\mf D(S,\T\R_+)\cup\mf D(S,\T\R_+^{\rm op})\cup\mf D_1(S)\cup\cdots\cup\mf D_d(S).
$$
\end{definition}

In \cite{FritzII}, the test spectrum was defined as the union of the sets $\Sigma(S,\mb K)$ with $\mb K\in\{\R_+,\R_+^{\rm op},\T\R_+,\T\R_+^{\rm op}\}$ and $\mf D^k(S)$ for $k=1,\ldots,d$. However, it is useful for our future discussion to define the test spectrum as the set $\hat{\mf D}(S)$ defined in Definition \ref{def:TestSpectrum}. The following central result ({\it Vergleichsstellensatz}) for this work is Theorem 8.6 in \cite{FritzII}:

\begin{theorem}\label{thm:Vergleichsstellensatz}
Let $S$ be a zerosumfree preordered semidomain of polynomial growth and of degeneracy $d$ for some $d\in\N$. We fix a power universal $u\in S$ and the vector-valued homomorphism $\|\cdot\|$ of Definition \ref{def:deg} to define the test spectrum $\hat{\mf D}(S)$. Suppose that $x,y\in S\setminus\{0\}$ are such that $\|x\|=\|y\|$. If
\begin{equation}\label{eq:FritzConditions}
\Delta(x)>\Delta(y)\qquad\forall\Delta\in\hat{\mf D}(S),
\end{equation}
then
\begin{itemize}
\item[(a)] $x^nu^k\rgeq y^nu^k$ for some $k\in\N$ when $n\in\N$ is sufficiently large and
\item[(b)] $xz\rgeq yz$ for some $z\in S\setminus\{0\}$ which can be chosen according to
$$
z=u^k\sum_{\ell=0}^n x^\ell y^{n-\ell}
$$
for some $k\in\N$ and $n\in\N$ sufficiently large.
\end{itemize}
If $x$ is a power universal, we may omit the appearance of $u^k$ in items (a) and (b) above. Conversely, if condition (a) or (b) above holds, then the inequalities in \eqref{eq:FritzConditions} hold non-strictly.
\end{theorem}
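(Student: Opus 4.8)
The plan is to begin by observing that item (b) follows formally from item (a). With $z=u^k\sum_{\ell=0}^{n}x^\ell y^{n-\ell}$ one has $xz=u^kx^{n+1}+w$ and $yz=u^ky^{n+1}+w$, where $w=u^k\sum_{\ell=1}^{n}x^\ell y^{n+1-\ell}$, so if (a) holds at exponent $n+1$, i.e.\ $u^kx^{n+1}\rgeq u^ky^{n+1}$, then adding $w$ to both sides gives $xz\rgeq yz$. Hence it suffices to prove (a) together with the converse clause. Writing $x\rgeq_{\rm as}y$ for the asymptotic relation ``there is $k\in\N$ with $x^nu^k\rgeq y^nu^k$ for all large $n$'', the goal becomes to show that \eqref{eq:FritzConditions} implies $x\rgeq_{\rm as}y$, and conversely that $x\rgeq_{\rm as}y$ (or the weaker $xz\rgeq yz$ of (b)) forces $\Delta(x)\geq\Delta(y)$ for every $\Delta\in\hat{\mf D}(S)$.

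The converse is routine. If $\Phi\in\Sigma(S,\R_+)$ and $x^nu^k\rgeq y^nu^k$, then $\Phi(x)^n\Phi(u)^k\geq\Phi(y)^n\Phi(u)^k$, hence $\Phi(x)\geq\Phi(y)$; since $\Phi$ is nondegenerate and $u$ is power universal one has $\Phi(u)>1$, and dividing $\log$ by $\log\Phi(u)>0$ shows the associated element of $\mf D(S,\R_+)$ does not increase from $x$ to $y$. The cases $\R_+^{\rm op},\T\R_+,\T\R_+^{\rm op}$ are analogous, the opposite orders being accounted for by $\Phi(u)<1$. For a monotone derivation $\Delta$ at $\|\cdot\|_{(j)}$ with $\Delta(u)=1$, iterating the Leibniz rule and using $\|x\|=\|y\|$ gives $\Delta(x^nu^k)-\Delta(y^nu^k)=n\|x\|_{(j)}^{n-1}\|u\|_{(j)}^k\bigl(\Delta(x)-\Delta(y)\bigr)$, so monotonicity of $\Delta$ forces $\Delta(x)\geq\Delta(y)$ and hence the corresponding map in $\mf D_j(S)$ does not increase; applying $\Phi$ or $\Delta$ directly to $xz\rgeq yz$ and cancelling the common positive factor handles case (b) the same way. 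This proves the final sentence of the theorem.

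The main implication I would prove by contraposition, using the semiring Positivstellensatz machinery of \cite{FritzI,FritzII}. Suppose $x\not\rgeq_{\rm as}y$. Localize $S$ at the multiplicative submonoid generated by $u$ and by all nonzero elements --- legitimate since $S$ is a semidomain --- to obtain a preordered semifield $S_u$ with the induced preorder, and pass to its logarithmic picture: the multiplicative group of $S_u$ embeds into a real vector space $V$ in which polynomial growth makes the image of $u$ an Archimedean order unit, and the asymptotic closure of the preorder becomes a closed convex cone $C\subseteq V$ not containing the image of $x/y$. A Hahn--Banach separation then produces a nonzero linear functional $f$ on $V$ with $f\geq0$ on $C$ and $f(x/y)\leq0$, which may be taken extreme among such functionals. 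The crux --- and where I expect the real difficulty --- is to promote $f$ to an element of the test spectrum: exponentiated and normalized, $f$ should yield the logarithm (divided by $\log\Phi(u)$) of a monotone homomorphism into $\R_+$ or $\R_+^{\rm op}$ when $f$ is bounded on the cone generated by the semiring addition, into $\T\R_+$ or $\T\R_+^{\rm op}$ when it is not, \emph{unless} $f$ is supported on a direction along which some component $\|\cdot\|_{(j)}$ is constant, in which case the multiplicativity equation degenerates and $f$ instead exponentiates to a monotone derivation at $\|\cdot\|_{(j)}$ with $\Delta(u)=1$, i.e.\ an element of $\mf D_j(S)$. Checking that these candidates respect the semiring addition is precisely where zerosumfreeness and the axiom $0\rgeq x\rgeq0\Rightarrow x=0$ enter. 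Either way we obtain $\Delta\in\hat{\mf D}(S)$ with $\Delta(x)\leq\Delta(y)$, contradicting \eqref{eq:FritzConditions}.

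Finally, passing from this pointwise separation to the uniform form of (a) --- a single $k$ with $x^nu^k\rgeq y^nu^k$ for \emph{all} large $n$, and no $u^k$ at all when $x$ is power universal --- I would use compactness of the test spectrum in its topology of pointwise convergence: the strict inequalities in \eqref{eq:FritzConditions} then give a uniform gap $\inf_{\Delta\in\hat{\mf D}(S)}\bigl(\Delta(x)-\Delta(y)\bigr)>0$, which lets one apply the non-strict version of the Positivstellensatz to a slightly perturbed pair (replacing $y$ by $yu^{-j}$ for a small rational exponent absorbed through the power universal) and recover the uniform-in-$n$ inequality after multiplying back by a fixed power of $u$; when $x$ itself is power universal that fixed factor $u^k$ can be absorbed into a shift of the exponent of $x$. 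The two steps I anticipate as the genuine obstacles are (i) the identification of the extreme separating functionals with honest monotone homomorphisms or derivations, sorting out the temperate/tropical dichotomy and the $d$ derivation families, and (ii) the conversion of an abstract separating functional back into a true inequality in $S$ --- these being the technical heart of \cite{FritzII}.
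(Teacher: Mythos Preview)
The paper does not prove this theorem at all: it is quoted verbatim as Theorem~8.6 of \cite{FritzII} and used as a black box. So there is nothing in the paper to compare your argument against --- the ``paper's proof'' is simply a citation.

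That said, your sketch is a fair outline of the strategy in \cite{FritzII}: the telescoping reduction of (b) to (a) is correct, the converse direction is routine as you say, and the forward implication via localization to a semifield, passage to an ordered vector space, Hahn--Banach separation, and identification of extreme separating functionals with monotone homomorphisms or derivations is indeed the shape of Fritz's argument. You are also right that steps (i) and (ii) at the end of your proposal are where the real work lies; these occupy the bulk of \cite{FritzII} and are not something one reconstructs in a paragraph. Since the present paper treats the result as imported, a full proof here would be out of scope --- a one-line citation, as the paper does, is the appropriate treatment.
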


Before going on, let us note that any power universal $u\in S$ has $\|u\|=(1,\ldots,1)$ as long as $\|\cdot\|$ satisfies the conditions of Definition \ref{def:deg}. This is naturally equivalent with $u\sim 1$. To see this, assume that $x,y\in S$ are such that $x\rleq y$, so that there is $k\in\N$ such that $xu^k\rgeq y$. Since $x\rleq y$, we have $\|x\|=\|y\|$ and, since $xu^k\rgeq y$, we have $\|x\|\cdot\|u\|^k=\|y\|=\|x\|$ where we have used the first observation in the final equality. From this we immediately see that $\|u\|=(1,\ldots,1)$.

\section{Extensive monotone divergences over preordered semirings}\label{sec:AddMonDiv}

Throughout this section, we work with a preordered semidomain $S$ of polynomial growth and of degeneracy $d$ with a fixed $d\in\N$. To be explicit, let $(S,0,1,\rgeq)$ be a preordered semidomain of polynomial growth with a power universal $u$ and $\|\cdot\|:S\to\{(0,\ldots,0)\}\cup\R_{>0}^d$ be as in Definition \ref{def:TestSpectrum}. We also denote by
$$
S_N:=\{x\in S\,|\,\|x\|=(1,\ldots,1)\}=\{x\in S\,|\,x\sim 1\}
$$
the set of {\it normalized elements of $S$}. In most of the applications of this theory, $S_N$ is the physically (or, rather, information-theoretically) relevant part of the preordered semiring since normalization corresponds to, e.g.,\ trace-1 positive operators or probability measures. The power universal $u$ is within this normalized set. We simply denote the test spectrum as
$$
\hat{\mf D}:=\hat{\mf D}(S)
$$
and its constituent parts as
\begin{align*}
\mf D_{\mb K}&:=\mf D(S,\mb K),\quad\mb K\in\{\R_+,\R_+^{\rm op},\T\R_+,\T\R_+^{\rm op}\}\\
\mf D_{\rm nd}&:=\mf D(S,\R_+)\cup\mf D(S,\R_+^{\rm op})\cup\mf D(S,\T\R_+)\cup\mf D(S,\T\R_+^{\rm op}),\\
\mf D_k&:=\mf D_k(S),\quad k=1,\ldots,d,\\
\mf D_0&:=\mf D_1\cup\cdots\cup\mf D_d.
\end{align*}
We use the fixed power universal $u$ in the definition of all these sets. The set $\hat{\mf D}$ is a compact Hausdorff space in the topology of pointwise comparison \cite[Proposition 8.5]{FritzII}. This means that $\hat{\mf D}$ is compact w.r.t.\ the coarsest topology where the comparison maps $f_{x,y}:\hat{\mf D}\to\R$,
\begin{equation}\label{eq:comparisonmaps}
f_{x,y}(\Delta)=\Delta(x)-\Delta(y),
\end{equation}
with $x\sim y$ (i.e.,\ $\|x\|=\|y\|$) are continuous. Especially the evaluation maps $f_x=f_{x,1}$, $f_x:\Delta\mapsto\Delta(x)$, are continuous for $x\in S_N$.

\begin{definition}\label{def:mondiv}
We say that $D:S\setminus\{0\}\to\R$ is a {\it monotone divergence} if
\begin{itemize}
\item[(i)] $D(xy)=D(x)+D(y)$ for all $x,y\in S\setminus\{0\}$ (extensivity),
\item[(ii)] $x,y\in S\setminus\{0\}$, $x\rgeq y$ $\Rightarrow$ $D(x)\geq D(y)$ (monotonicity), and
\item[(iii)] $D(u)=1$ (normalization)
\end{itemize}
We denote the set of monotone divergences by $\mf D(S)$. We often simply denote $\mf D(S)=:\mf D$ when there is no risk of confusion.
\end{definition}

\begin{remark}
The normalization condition $D(u)=1$ for all $D\in\mf D$ and the fixed power universal can also be lifted. The price to pay is that the hope of finding extreme points for the convex set $\mf D$ is lost and the statement of Theorem \ref{thm:barycentre} will change a little.
\end{remark}

One easily sees that $\hat{\mf D}\subseteq\mf D$. We will show that actually all monotone divergences are barycentres of Borel probability measures over $\hat{\mf D}$.

\subsection{Divergences as barycentres over the test spectrum}\label{subsec:barycentres}

First, we present and prove a central lemma which utilizes the characterization of asymptotic ordering in $S$ through $\hat{\mf D}$.

\begin{lemma}\label{lemma:key}
Let $D:S\setminus\{0\}\to\R_+$ be a monotone divergence. If $x,y\in S\setminus\{0\}$, $x\sim y$, are such that $\Delta(x)\geq\Delta(y)$ for all $\Delta\in\hat{\mf D}$, then also $D(x)\geq D(y)$.
\end{lemma}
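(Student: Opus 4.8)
The plan is to leverage the Vergleichsstellensatz (Theorem \ref{thm:Vergleichsstellensatz}) to turn the hypothesis ``$\Delta(x)\geq\Delta(y)$ for all $\Delta\in\hat{\mf D}$'' into an actual algebraic comparison in $S$, which can then be fed into the data-processing inequality (monotonicity) and extensivity of $D$. The obstacle is that Theorem \ref{thm:Vergleichsstellensatz} requires \emph{strict} inequalities $\Delta(x)>\Delta(y)$, whereas we only have non-strict ones. The standard trick — the same one used in \cite{Mu_et_al_2021} — is to perturb $x$: replace $x$ by $xu^m$ for suitable $m\in\N$. Since $u$ is a power universal with $u\sim 1$, we have $\|xu^m\|=\|x\|=\|y\|$, and for every $\Delta\in\hat{\mf D}$ of the homomorphism type, $\Delta(xu^m)=\Delta(x)+m\Delta(u)=\Delta(x)+m>\Delta(x)\geq\Delta(y)$ because $\Delta(u)=1$ by construction (the normalization built into the definitions of $\mf D(S,\mb K)$ and $\mf D_k(S)$ via $\log\Phi(u)$ and $\Delta(u)=1$). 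One must also check this for the derivation-type elements $\Delta\in\mf D_k$: there $\Delta(x)=\Delta'(x)/\|x\|_{(k)}$ with $\Delta'$ a monotone derivation satisfying $\Delta'(u)=1$, so $\Delta'(xu^m)=\Delta'(x)\|u\|^m_{(k)}+\|x\|_{(k)}\cdot m\Delta'(u)\cdot(\text{lower order})$ — actually by the Leibniz rule and $\|u\|_{(k)}=1$ one gets $\Delta(xu^m)=\Delta(x)+m$ as well, strictly increasing. So after this perturbation, \eqref{eq:FritzConditions} holds strictly with $x$ replaced by $xu^m$.

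\medskip
\noindent
Next I would apply part (b) (or (a)) of Theorem \ref{thm:Vergleichsstellensatz} to the pair $(xu^m, y)$: there exists $z\in S\setminus\{0\}$ with $xu^m z\rgeq y z$. By monotonicity of $D$ this gives $D(xu^m z)\geq D(yz)$, and by extensivity $D(x)+m D(u)+D(z)\geq D(y)+D(z)$, i.e.\ $D(x)+m\geq D(y)$. This holds for \emph{every} $m\in\N$ for which the perturbation argument works — but I actually want the reverse direction of the conclusion, so I should be careful: this only yields $D(x)\geq D(y)-m$, which is weak. The fix is to instead perturb in a way that tends to the identity. So rather than adding $u^m$, I should take a root: work with the fact that for any $n$, the hypothesis applied to $x^n u$ versus $y^n$ (note $\Delta(x^n u)=n\Delta(x)+1>n\Delta(y)=\Delta(y^n)$ strictly). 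Then Theorem \ref{thm:Vergleichsstellensatz} gives $x^n u z\rgeq y^n z$ for some $z\neq 0$, hence $nD(x)+1+D(z)\geq nD(y)+D(z)$, so $D(x)\geq D(y)-1/n$ for all $n\in\N$, and letting $n\to\infty$ yields $D(x)\geq D(y)$.

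\medskip
\noindent
So the clean version of the argument is: fix $n\in\N$; set $x_n:=x^n u$ and $y_n:=y^n$. Check $\|x_n\|=\|x\|^n=\|y\|^n=\|y_n\|$ using $\|u\|=(1,\ldots,1)$. Check that for every $\Delta\in\hat{\mf D}$ we have $\Delta(x_n)=n\Delta(x)+1>n\Delta(y)=\Delta(y_n)$, splitting into the homomorphism case (where $\Delta$ is additive on products after taking logs, and $\Delta(u)=1$) and the derivation case $\Delta\in\mf D_k$ (where one uses the Leibniz rule, $\|u\|_{(k)}=1$, and $\Delta'(u)=1$ to get $\Delta(x^n u)=n\Delta(x)+1$). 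Apply Theorem \ref{thm:Vergleichsstellensatz}(b) to obtain $z\in S\setminus\{0\}$ with $x_n z\rgeq y_n z$. Apply monotonicity and extensivity of $D$ to deduce $nD(x)+1\geq nD(y)$. Divide by $n$ and send $n\to\infty$.

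\medskip
\noindent
\textbf{Expected main obstacle.} The only genuinely delicate point is verifying the strict-inequality step $\Delta(x^n u)>\Delta(y^n)$ \emph{uniformly over all of $\hat{\mf D}$}, in particular handling the derivation elements $\mf D_k$ correctly and confirming that the normalizations $\Delta(u)=1$ (resp.\ $\Delta'(u)=1$, $\|u\|_{(k)}=1$) built into Definition \ref{def:TestSpectrum} really do force the ``$+1$'' term. There is also a minor bookkeeping subtlety: one must confirm $x^n u$ and $y^n$ are nonzero (immediate, since $S$ is a semidomain and $u\neq 0$), and that $z$ from the theorem is nonzero so that $D(z)$ is defined and cancels. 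Everything else is routine.
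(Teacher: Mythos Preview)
Your clean version is correct and essentially identical to the paper's proof: perturb $(x,y)$ to $(x^n u,\,y^n)$ to force strict inequalities across $\hat{\mf D}$, invoke Theorem~\ref{thm:Vergleichsstellensatz}, use monotonicity and extensivity of $D$ to get $nD(x)+1\geq nD(y)$, and let $n\to\infty$. The only cosmetic differences are that the paper uses part~(a) rather than~(b) of the Vergleichsstellensatz, and your case-by-case verification of $\Delta(x^n u)=n\Delta(x)+1$ is unnecessary since $\hat{\mf D}\subseteq\mf D$ already guarantees extensivity and $\Delta(u)=1$ for every $\Delta\in\hat{\mf D}$.
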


\begin{proof}
Suppose first that $x,y\in S\setminus\{0\}$, $x\sim y$, are such that $\Delta(x)>\Delta(y)$ for all $\Delta\in\hat{\mf D}$. According to Theorem \ref{thm:Vergleichsstellensatz}, this means that there are $n,k\in\N$ such that $u^k x^n\rgeq u^k y^n$. Using the monotonicity of $D$, we now have
$$
kD(u)+nD(x)=D(u^k x^n)\geq D(u^k y^n)=kD(u)+nD(y)\ \Rightarrow\ D(x)\geq D(y).
$$
Assume next that $x,y\in S\setminus\{0\}$, $x\sim y$, are such that $\Delta(x)\geq\Delta(y)$ for all $\Delta\in\hat{\mf D}$. Let us make the inequality strict by adding the power universal. Indeed, for all $m\in\N$, we have
$$
\Delta(ux^m)=\Delta(u)+m\Delta(x)=1+m\Delta(x)>m\Delta(x)\geq m\Delta(y)=\Delta(y^m)
$$
for all $\Delta\in\hat{\mf D}$. According to what we proved earlier, this means that $D(ux^m)\geq D(y^m)$, i.e.,
$$
mD(x)+D(u)\geq mD(y)\ \Rightarrow\ D(x)+\frac{1}{m}D(u)\geq D(y).
$$
Since this holds for all $m\in\N$, we have $D(x)\geq D(y)$.
\end{proof}

The above lemma is the crucial ingredient in the proof of the following result; the rest of the proof is straightforward functional analysis. The proof closely follows the idea of the proof of Theorem 2 of \cite{Mu_et_al_2021}.

\begin{theorem}\label{thm:barycentre}
Let $D\in\mf D$. Denote the Borel $\sigma$-algebra of $\hat{\mf D}$ w.r.t.\ the topology of pointwise comparison by $\mc B(\hat{\mf D})$. Suppose that, for any $z\in\R_{>0}^d\cup\{(0,\ldots,0)\}$, there is $a_z\in S$ with $\|a_z\|=z$ such that $z\mapsto\Delta(a_z)$ and $z\mapsto D(a_z)$ are measurable. There is an inner and outer regular probability measure $\mu:\mc B(\hat{\mf D})\to[0,1]$ and $a_1,\ldots,a_d\in\R$ such that, for all $x\in S\setminus\{0\}$,
\begin{equation}\label{eq:barycentre}
D(x)=\int_{\hat{\mf D}} \Delta(x)\,d\mu(\Delta)+\sum_{k=1}^d a_k\log{\|x\|_{(k)}}.
\end{equation}
\end{theorem}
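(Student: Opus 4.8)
The plan is to realize $D$ as a continuous linear functional on a suitable function space built from the evaluation maps $f_x$, and then invoke a Riesz-type representation. Concretely, consider the vector space $V$ spanned inside $C(\hat{\mf D})$ by the functions $f_x \colon \Delta \mapsto \Delta(x)$ for $x \in S_N$ together with the constant functions; these are continuous on $\hat{\mf D}$ by the definition of the topology of pointwise comparison. Define a functional $L$ on $V$ by $L(f_x) = D(x)$ for $x \in S_N$ (and $L(\mathbb 1) = 1$), extended linearly. The first thing to check is that $L$ is well-defined, i.e.\ that $f_x = f_y$ on $\hat{\mf D}$ forces $D(x) = D(y)$ — more generally that a linear identity among the $f_x$'s is respected by $D$. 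Using extensivity of $D$ and of every $\Delta \in \hat{\mf D}$, a relation like $f_x = f_y$ (equivalently $\Delta(x) = \Delta(y)$ for all $\Delta$, with $x \sim y$ since both are normalized, or more generally $x\sim y$ after clearing) reduces by Lemma \ref{lemma:key} applied in both directions to $D(x) = D(y)$. Handling genuine $\mathbb Q$-linear combinations requires passing through products (to turn sums of values into values on products $x^a y^b\cdots$) and using that $\hat{\mf D}$ is closed under nothing in particular but that the $\Delta$'s are all homomorphisms/derivations, so integer-coefficient linear relations among $f_x, f_y, \ldots$ and $\mathbb 1$ translate into a single equality $f_{v} = f_{w}$ with $v,w \in S$, $v \sim w$, to which Lemma \ref{lemma:key} applies after multiplying by a power of $u$ to absorb the constant.

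Next I would check positivity: if $f_x \ge f_y$ pointwise on $\hat{\mf D}$ (with $x \sim y$), then $L(f_x) \ge L(f_y)$, which is exactly Lemma \ref{lemma:key}. This makes $L$ a positive functional on the ordered subspace $V \subseteq C(\hat{\mf D})$ containing the constants, with $L(\mathbb 1) = 1$. By the M.~Riesz extension theorem (or Hahn--Banach with the positive cone of $C(\hat{\mf D})$), $L$ extends to a positive linear functional on all of $C(\hat{\mf D})$, still sending $\mathbb 1$ to $1$; the Riesz--Markov--Kakutani representation theorem then yields an inner and outer regular Borel probability measure $\mu$ on $\hat{\mf D}$ with $L(f) = \int f \, d\mu$ for all $f \in C(\hat{\mf D})$. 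In particular, for $x \in S_N$,
\begin{equation*}
D(x) = L(f_x) = \int_{\hat{\mf D}} \Delta(x) \, d\mu(\Delta).
\end{equation*}

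It remains to handle general $x \in S \setminus \{0\}$, i.e.\ the correction term $\sum_k a_k \log \|x\|_{(k)}$. Given such $x$, note $\|x\| = z \in \R_{>0}^d$; using the reference elements $a_z$ from the hypothesis, or simply by homomorphism-splitting, reduce to the normalized case: pick any $x' \in S_N$ with $f_{x'}$ comparable to $f_x$ appropriately — more cleanly, observe that for fixed $z$ the quantity $D(x) - \int \Delta(x)\, d\mu(\Delta)$ depends only on $\|x\| = z$. Indeed if $\|x\| = \|x''\| = z$ then $x \sim x''$, so both $D(x) - D(x'')$ and $\Delta(x) - \Delta(x'')$ are controlled: applying Lemma \ref{lemma:key} to $x$ vs.\ $x''$ in a form where we first make the $\hat{\mf D}$-inequalities strict (as in the lemma's proof, multiplying by $u^m$) shows $D(x) - D(x'')$ equals $\int (\Delta(x) - \Delta(x''))\, d\mu(\Delta)$ — because the difference $\Delta \mapsto \Delta(x) - \Delta(x'')$ is a continuous function on $\hat{\mf D}$ (here $x \sim x''$ is exactly the condition making $f_{x,x''}$ continuous), and $L$ on that function is computed two ways. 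Hence $g(z) := D(x) - \int \Delta(x)\, d\mu(\Delta)$ is a well-defined function of $z \in \R_{>0}^d$. Extensivity of $D$ and of each $\Delta$ (multiplicativity of $\|\cdot\|$, so $\log \|\cdot\|_{(k)}$ is additive over products) gives $g(z \cdot z') = g(z) + g(z')$, i.e.\ $g$ is additive under coordinatewise multiplication; the measurability hypothesis on $z \mapsto D(a_z)$ and $z \mapsto \Delta(a_z)$ (the latter integrated against $\mu$, measurable by Fubini once one checks joint measurability, or by a monotone-class argument using continuity in $\Delta$) makes $g$ measurable, and a measurable additive function on $(\R_{>0}^d, \times) \cong (\R^d, +)$ is linear: $g(z) = \sum_k a_k \log z_k$ for constants $a_k$. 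Substituting back yields \eqref{eq:barycentre}.

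The main obstacle I expect is the well-definedness/positivity bookkeeping in the first step — turning pointwise linear relations among the $f_x$ into honest algebraic comparisons $v \sim w$ in $S$ to which Lemma \ref{lemma:key} applies, since Lemma \ref{lemma:key} only compares two elements at a time and only gives a one-sided conclusion, so each equality must be obtained by squeezing from both sides and each rational coefficient must be cleared via products and powers of $u$. A secondary technical point is justifying that $z \mapsto \int \Delta(a_z)\, d\mu(\Delta)$ is measurable, which needs joint measurability of $(z,\Delta) \mapsto \Delta(a_z)$; this follows because for fixed $z$ it is continuous in $\Delta$ and the hypothesis gives measurability in $z$, so a standard Carathéodory / monotone-class argument applies. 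Everything after the Riesz representation is routine.
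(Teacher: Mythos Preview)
Your approach is essentially the same as the paper's: use Lemma \ref{lemma:key} to show a positive functional is well-defined on a subspace of $C(\hat{\mf D})$, extend it positively (the paper cites Kantorovich, you cite M.~Riesz---same theorem), apply Riesz--Markov, then handle the non-normalized case via a measurable additive function on $(\R_{>0}^d,\times)$. The paper is more explicit about the $\Q\to\R$ passage (showing the functional is $1$-Lipschitz on the rational cone and extending by uniform continuity), which is precisely the bookkeeping you flag as the ``main obstacle''.

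One point where the paper's setup is cleaner: it works from the outset with the comparison maps $f_{x,y}$ for all $x\sim y$, not just $f_x$ for $x\in S_N$, and defines the functional by $f_{x,y}\mapsto D(x)-D(y)$. This makes the non-normalized step immediate: once the functional is extended and represented by $\mu$, the identity $D(x)-D(x'')=\int f_{x,x''}\,d\mu$ holds by construction. In your version, after building $\mu$ from normalized elements only, the claim that ``$L$ on $f_{x,x''}$ is computed two ways'' is not justified as stated: $f_{x,x''}$ for non-normalized $x\sim x''$ need not lie in your $V$, and the extended functional $\int\cdot\,d\mu$ has no a priori relation to $D$ outside $V$ (Lemma \ref{lemma:key} alone only gives the crude bounds $\min f_{x,x''}\le D(x)-D(x'')\le\max f_{x,x''}$). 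The fix is exactly what the paper does: enlarge $V$ to the span of all $f_{x,y}$ with $x\sim y$ from the start; well-definedness and positivity go through identically via Lemma \ref{lemma:key}, since $f_{x_1,y_1}\ge f_{x_2,y_2}$ translates to $\Delta(x_1y_2)\ge\Delta(x_2y_1)$ for all $\Delta$ with $x_1y_2\sim x_2y_1$.
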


\begin{proof}
Denote by $C(\hat{\mf D})$ the set of continuous functions on the compact Hausdorff space $\hat{\mf D}$ (w.r.t.\ the topology of pointwise comparison) which we equip with the sup-norm $\|\cdot\|_\infty$. For all $x,y\in S\setminus\{0\}$ with $x\sim y$, recall the comparison maps $f_{x,y}$ as defined in \eqref{eq:comparisonmaps}. By the definition of the topology on $\hat{\mf D}$, these maps are continuous, i.e.,\ $f_{x,y}\in C(\hat{\mf D})$ for all $x,y\in S$, $x\sim y$. Moreover, these functions have a couple of important properties: When $x,y\in S\setminus\{0\}$ are such that $x\rgeq y$, then
$$
f_{x,y}(\Delta)=\Delta(x)-\Delta(y)\geq0
$$
for all $\Delta\in\mf D_{\rm nd}$ and similarly for $\Delta\in\mf D_0$. Thus, $f_{x,y}\geq0$. Furthermore, when $x_1,x_2,y_1,y_2\in S$, where $x_1\sim y_1$ and $x_2\sim y_2$, are such that $f_{x_1,y_1}\geq f_{x_2,y_2}$, then $\Delta(x_1y_2)\geq\Delta(x_2y_1)$ for all $\Delta\in\hat{\mf D}$. We also find that, if $x_1\sim y_1$ and $x_2\sim y_2$, then
$$
f_{x_1x_2,y_1y_2}=f_{x_1,y_1}+f_{x_2,y_2}.
$$

Our strategy is to define a subspace $\mc V$ of $C(\hat{\mf D})$ containing $f_{x,y}$ with all $x,y\in S$ with $x\sim y$ and a positive linear functional $H:\mc V\to\R$ through $H(f_{x,y})=D(x)-D(y)$ and then extend it into a positive linear functional $I:C(\hat{\mf D})\to\R$ using a result due to Kantorovich related to the Hahn-Banach theorem. Riesz representation theorem for positive linear functionals on functions vanishing at the infinity then implies the existence of the measure of the claim. The main ingredient in this proof is Lemma \ref{lemma:key}.

Let us first show that the ansatz $F(f_{x,y})=D(x)-D(y)$ for a linear functional makes sense. Denote by $\mc F\subseteq C(\hat{\mf D})$ the set of $f_{x,y}$ with $x,y\in S$, $x\sim y$. Assume that $x,y\in S$, $x\sim y$, are such that $f_{x,y}=0$, i.e.,\ $\Delta(x)=\Delta(y)$ for all $\Delta\in\hat{\mf D}$. According to Lemma \ref{lemma:key}, this means that $D(x)=D(y)$, and we may define the map $F:\mc F\to\R$ through $F(f_{x,y})=D(x)-D(y)$ for all $x,y\in S$, $x\sim y$. As we have seen above, $f_{x_1,y_1}\geq f_{x_2,y_2}$ for $x_1\sim y_1$ and $x_2\sim y_2$ implies $\Delta(x_1y_2)\geq\Delta(x_2y_1)$ for all $\Delta\in\hat{\mf D}$ which, according to Lemma \ref{lemma:key} means that $D(x_1y_2)\geq D(x_2y_1)$; note that $x_1y_2\sim x_2y_1$. This means that $F(f_{x_1,y_1})\geq F(f_{x_2,y_2})$. Thus, $F$ is monotone. It is also additive. Indeed, for all $x_1,x_2,y_1,y_2\in S$ with $x_1\sim y_1$ and $x_2\sim y_2$,
\begin{align*}
F(f_{x_1,y_1}+f_{x_2,y_2})&=F(f_{x_1y_1,x_2y_2})=D(x_1y_1)-D(x_2y_2)\\
&=D(x_1)+D(y_1)-D(x_2)-D(y_2)=F(f_{x_1,y_1})+F(f_{x_2,y_2}).
\end{align*}

Next we extend $F$ onto the cone generated by $\mc F$. Denote by $\mc C_\Q$ the cone generated by $\mc F$, i.e.,\ the set of combinations $\sum_{i=1}^n \alpha_i f_{x_i,y_i}$ with $\alpha_1,\ldots,\alpha_n\in\Q_+$, $x_i,y_i\in S$, $x_i\sim y_i$, $i=1,\ldots,n$, and $n\in\N$. Because $\mc F$ is closed under sums, it is easy to see that
$$
\mc C_\Q=\bigcup_{n=1}^\infty \frac{1}{n}\mc F.
$$
Thus, it suffices to extend $F$ for functions of the form $(1/n)f_{x,y}$. Assume that $x_1,x_2,y_1,y_2\in S$, $x_1\sim y_1$, $x_2\sim y_2$ and $m,n\in\N$ are such that $(1/n)f_{x_1,y_1}=(1/m)f_{x_2,y_2}$. This is equivalent with $f_{x_1^m,y_1^m}=f_{x_2^n,y_2^n}$, i.e.,\ $\Delta(x_1^my_2^n)=\Delta(x_2^ny_1^m)$ for all $\Delta\in\hat{\mf D}$. Since $x_1^my_2^n\sim x_2^ny_1^m$, Lemma \ref{lemma:key} implies that $D(x_1^my_2^n)=D(x_2^ny_1^m)$, i.e.,\ $(1/n)F(f_{x_1,y_1})=(1/m)F(f_{x_2,y_2})$. Thus, we may define $G:\mc C_\Q\to\R$ through $G(n^{-1}f_{x,y})=n^{-1}\big(D(x)-D(y)\big)$. It follows easily that $G$ is still monotone and additive, i.e.,\ $G(f)\geq G(g)$ whenever $f,g\in\mc C_\Q$ are such that $f\geq g$ and $G(f+g)=G(f)+G(g)$ for all $f,g\in\mc C_\Q$. Using again the fact that $\mc F$ is closed under sums, one may also show that $G(f+\alpha g)=G(f)+\alpha G(g)$ for all $f,g\in\mc C_\Q$ and $\alpha\in\Q_+$.

Next we want to extend $G$ onto the $\|\cdot\|_\infty$-closure $\overline{\mc C_\Q}$ of $\mc C_\Q$ which includes the full positive cone $\mc C$ generated by $\mc F$, i.e.,\ the set of combinations $\sum_{i=1}^n \alpha_i f_{x_i,y_i}$ with $\alpha_1,\ldots,\alpha_n\in\R_+$, $x_i,y_i\in S$, $x_i\sim y_i$, $i=1,\ldots,n$, and $n\in\N$. To do this, we show that $G$ is uniformly continuous. For $x\in S_N$ (especially $x=u$), we define $f_x:=f_{x,1}$, $f_x(\Delta)=\Delta(x)$ for all $\Delta\in\hat{\mf D}$. Let $f,g\in\mc C_\Q$. We have, for all $\Delta\in\hat{\mf D}$ and $q\in\Q$ such that $q\geq\|f-g\|_\infty$,
\begin{align*}
g(\Delta)\leq f(\Delta)+\|f-g\|_\infty=&f(\Delta)+\|f-g\|_\infty\underbrace{\Delta(u)}_{=1}\\
=&f(\Delta)+\|f-g\|_\infty f_u(\Delta)\leq f(\Delta)+qf_u(\Delta).
\end{align*}
Due to the monotonicity of $G$,
$$
G(g)\leq G(f+qf_u)=G(f)+qG(f_u)=G(f)+qD(u)=G(f)+q.
$$
Letting $q\to\|f-g\|_\infty$, we have $G(g)\leq G(f)+\|f-g\|_\infty$. Similarly, due to symmetry, we have $G(f)\leq G(g)+\|f-g\|_\infty$, i.e.,
$$
|G(f)-G(g)|\leq\|f-g\|_\infty.
$$
Thus, $G$ extends into a uniformly continuous functional $\overline{G}:\overline{\mc C_\Q}\to\R$. Especially, $\overline{G}$ is uniformly continuous on $\mc C$.

Next we show that $\overline{G}$ is still monotone and positive-linear. Assume that $f,g\in\mc C$ are such that $f\geq g$. For $i=1,2$, there are sequences $(a_{i,k})_{k=1}^\infty$ of natural numbers and $(x_{i,k})_{k=1}^\infty$ and $(y_{i,k})_{k=1}^\infty$ in $S$ with $x_{i,k}\sim y_{i,k}$ for all $k$ such that $a_{1,k}^{-1}f_{x_{1,k},y_{1,k}}\to f$ and $a_{2,k}^{-1}f_{x_{2,k},y_{2,k}}\to g$ in the $\|\cdot\|_\infty$-norm as $k\to\infty$. Thus, for all $n\in\N$, there is $k_n\in\N$ such that
$$
\frac{1}{a_{1,k}}f_{x_{1,k},y_{1,k}}\geq f-\frac{1}{2n},\quad\frac{1}{a_{2,k}}f_{x_{2,k},y_{2,k}}\leq g+\frac{1}{2n}
$$
for all $k\geq k_n$. Recalling that $f_u$ is the constant function 1 and $f-g\geq0$, we obtain
$$
\frac{1}{a_{2,k}}f_{x_{2,k},y_{2,k}}\leq f-g+\frac{1}{a_{2,k}}f_{x_{2,k},y_{2,k}}\leq\frac{1}{a_{1,k}}f_{x_{1,k},y_{1,k}}+\frac{1}{n}=\frac{1}{a_{1,k}}f_{x_{1,k},y_{1,k}}+\frac{1}{n}f_u
$$
for all $k\geq k_n$. Using the additivity and monotonicity of $G$, we have
$$
G\left(\frac{1}{a_{1,k}}f_{x_{1,k},y_{1,k}}\right)+\frac{1}{n}D(u)\geq G\left(\frac{1}{a_{2,k}}f_{x_{2,k},y_{2,k}}\right)
$$
for all $k\geq k_n$. Letting $k\to\infty$, we have $\overline{G}(f)+n^{-1}D(u)\geq\overline{G}(g)$ and, letting $n\to\infty$, we obtain $\overline{G}(f)\geq\overline{G}(g)$. Thus, $\overline{G}$ is monotone. For positive-linearity, pick $f,g\in\mc C$ and $\alpha\geq0$. Let sequences $(a_{i,k})_{k=1}^\infty$, $(x_{i,k})_{k=1}^\infty$, and $(y_{i,k})_{k=1}^\infty$ be as above and, additionally, $(\alpha_k)_{k=1}^\infty$ be a sequence in $\Q_+$ converging to $\alpha$. It is clear that
$$
\left\|f+\alpha g-\left(\frac{1}{a_{1,k}}f_{x_{1,k},y_{1,k}}+\frac{\alpha_k}{a_{2,k}}f_{x_{2,k},y_{2,k}}\right)\right\|_\infty \to0
$$
as $k\in\infty$. Thus,
\begin{align*}
\overline{G}(f+\alpha g)=&\lim_{k\to\infty} G\left(\frac{1}{a_{1,k}}f_{x_{1,k},y_{1,k}}+\frac{\alpha_k}{a_{2,k}}f_{x_{2,k},y_{2,k}}\right)\\
=&\lim_{k\to\infty}\left\{G\left(\frac{1}{a_{1,k}}f_{x_{1,k},y_{1,k}}\right)+\alpha_k G\left(\frac{1}{a_{2,k}}f_{x_{2,k},y_{2,k}}\right)\right\}\\
=&\lim_{k\to\infty}G\left(\frac{1}{a_{1,k}}f_{x_{1,k},y_{1,k}}\right)+\left(\lim_{k\to\infty}\alpha_k\right)\lim_{k\to\infty}G\left(\frac{1}{a_{2,k}}f_{x_{2,k},y_{2,k}}\right)\\
=&\overline{G}(f)+\alpha\overline{G}(g),
\end{align*}
implying that $\overline{G}$ is positive-linear.

Let us define the vector space $\mc V:=\mc C-\mc C\subseteq C(\hat{\mf D})$. We may define the linear functional $H:\mc V\to\R$ through $H(f-g)=\overline{G}(f)-\overline{G}(g)$ for all $f,g\in\mc C$ since, as an affine functional, $\overline{G}$ extends uniquely into a linear functional defined on the linear hull. It is easy to verify that $H$ is positive, i.e.,\ $H(f)\geq 0$ whenever $f\in V$ is positive ($f(\Delta)\geq0$ for all $\Delta\in\hat{\mf D}$). Indeed, we may approximate any positive element of $\mc V$ with differences $n^{-1}f_{x_1,y_1}-m^{-1}f_{x_2,y_2}\geq0$ with $m,n\in\N$, $x_1\sim y_1$, and $x_2\sim y_2$. By now familiar calculations show that $D(x_1^my_2^n)\geq D(x_2^ny_1^m)$, i.e.,\ $H(n^{-1}f_{x_1,y_1})\geq H(m^{-1}f_{x_2,y_2})$. Let $f\in C(\hat{\mf D})$ and pick $n\in\N$ such that $n\geq\|f\|_\infty$. Thus,
$$
f(\Delta)\leq\|f\|_\infty=\|f\|_\infty f_u(\Delta)\leq nf_u(\Delta)=f_{u^n}(\Delta)
$$
for all $\Delta\in\hat{\mf D}$. Thus, all elements of $C(\hat{\mf D})$ are bounded by elements of $\mc V$. According to a result due to Kantorovich \cite{Kantorovich} (a modern version also in Theorem 8.32 of \cite{InfDimAnalysis}), this means that there is a positive linear functional $I:C(\hat{\mf D})\to\R$ that extends $H$, i.e.,\ $I|_{\mc V}=H$. Riesz-Markov theorem now tells us that there is a bounded measure $\mu:\mc B(\hat{\mf D})\to\R_+$ which is both inner and outer regular such that $I(f)=\int_{\hat{\mf D}}f\,d\mu$ for all $f\in C(\hat{\mf D})$. Especially, for all $x, y\in S$ with $x\sim y$,
\begin{equation}\label{eq:apu}
D(x)-D(y)=I(f_{x,y})=\int_{\hat{\mf D}} f_{x,y}(\Delta)\,d\mu(\Delta)
=\int_{\hat{\mf D}}\big(\Delta(x)-\Delta(y)\big)\,d\mu(\Delta).
\end{equation}
Especially, whenever $x\in S_N$,
$$
D(x)=\int_{\hat{\mf D}}\Delta(x)\,d\mu(\Delta).
$$
Since $u\in S_N$ and $\Delta(u)=1=D(u)$ for all $\Delta\in\hat{\mf D}$,
$$
\mu(\mf D)=\int_{\hat{\mf D}}d\mu(\Delta)=\int_{\hat{\mf D}}\underbrace{\Delta(u)}_{=1}\,d\mu(\Delta)=D(u)=1,
$$
meaning that $\mu$ is a probability measure.

Let us pick a function $\R_{>0}^d\cup\{(0,\ldots,0)\}\ni z\mapsto a_z\in S$ with $\|a_z\|=z$ such that $z\mapsto\Delta(a_z)$ and $z\mapsto D(a_z)$ are measurable. Equation \eqref{eq:apu} implies that
$$
x\mapsto\tilde{D}(x)=D(x)-\int_{\hat{\mf D}}\Delta(x)\,d\mu(\Delta)
$$
is constant on the cosets in $S/\!\sim$. Defining $\delta:\R_{>0}^d\cup\{(0,\ldots,0)\}\to\R$ through $\delta(z)=\tilde{D}(a_z)$, we obtain from \eqref{eq:apu}
$$
D(x)=\int_{\hat{\mf D}}\Delta(x)\,d\mu(\Delta)+\delta(\|x\|)
$$
for all $x\in S$; recall that $\|\cdot\|:S\to\R_{>0}^d\cup\{(0,\ldots,0)\}$ is assumed to be surjective, so $\delta$ is fully determined. Since $D(xy)=D(x)+D(y)$ for all $x,y\in S$, it now easily follows that $\delta(vw)=\delta(v)+\delta(w)$ for all $v,w\in\R_{>0}^d\cup\{(0,\ldots,0)\}$. From the measurability assumption it easily follows that $\delta$ is also measurable. Thus, there are $a_1,\ldots,a_d\in\R$ such that $\delta(z)=a_1\log{z_1}+\cdots+a_d\log{z_d}$ for all $z=(z_1,\ldots,z_d)\in\R_{>0}^d\cup\{(0,\ldots,0)\}$ and \eqref{eq:barycentre} follows.
\end{proof}

\begin{remark}\label{rem:barycentre}
Let $D\in\mf D$ and assume that the conditions of Theorem \ref{thm:barycentre} are satisfied, so that \eqref{eq:barycentre} holds for some probability measure $\mu$ and $a_1,\ldots,a_d\in\R$. Let us note that, when $x\in S_N$ (i.e.,\ $x\sim 1$), we simply have
\begin{equation}\label{eq:barycentresimple}
D(x)=\int_{\hat{\mf D}}\Delta(x)\,d\mu(x).
\end{equation}
If we restricted all divergences on $S_N$, we could have obtained this result using a slightly simpler proof than that presented above using just the evaluation maps $f_x$ with $x\in S_N$. Note that, in this proof, we do not need the measurability assumption made in Theorem \ref{thm:barycentre}. The formula in \eqref{eq:barycentresimple} is often enough for our purposes in what follows, but the full generality of Theorem \ref{thm:barycentre} will come in handy a couple of times.

Let us also note that, if we remove the normalization condition $D(u)=1$ for monotone divergences, the measure $\mu$ of Theorem \ref{thm:barycentre} may fail to be a probability measure. Perusing the proof of Theorem \ref{thm:barycentre}, one finds that, in the absence of the normalization condition, the measure $\mu$ is an inner and outer regular finite positive measure, so only the normalization condition $\mu(\hat{\mf D})=1$ is lost.
\end{remark}

\begin{example}\label{ex:classical}
We say that $\big(p^{(1)},\ldots,p^{(d)}\big)$ is a typical finite statistical experiment of length $d$ if $p^{(k)}$ are probability vectors such that
$$
{\rm supp}\,p^{(1)}=\cdots={\rm supp}\,p^{(d)}.
$$
`Typicality' refers to the above support condition. We say that a real function $D$ defined on any typical finite statistical experiment of length $d$ is a {\it $d$-divergence} if the following are satisfied for any typical finite statistical experiments $\big(p^{(1)},\ldots,p^{(d)}\big)$ and $\big(q^{(1)},\ldots,q^{(d)}\big)$ of length $d$ and any stochastic matrix $T$ (of suitable size):
\begin{itemize}
\item[(i)] extensivity:
$$
D\big(p^{(1)}\otimes q^{(1)},\ldots,p^{(d)}\otimes q^{(d)}\big)=D\big(p^{(1)},\ldots,p^{(d)}\big)+D\big(q^{(1)},\ldots,q^{(d)}\big),
$$
\item[(ii)] data processing inequality:
$$
D\big(p^{(1)},\ldots,p^{(d)}\big)\geq D\big(Tp^{(1)},\ldots,Tp^{(d)}\big).
$$
\end{itemize}
Note that the above conditions together imply
$$
D\big(p^{(1)},\ldots,p^{(d)}\big)\geq D(1,\ldots,1)=0,
$$
where $1$ stands above for the one-entry probability vector, i.e.,\ $d$-divergences have non-negative values on typical finite statistical experiments of length $d$. One may extend the set of typical finite statistical experiments of length $d$ into a preordered semidomain $C^d$ of polynomial growth and degeneracy $d$ where the subset of statistical experiments is exactly the normalized part $C^d_N$ \cite{Farooq_et_al_2024}. The elements of the semidomain $C^d$ are certain equivalence classes of $d$-tuples $\big(p^{(1)},\ldots,p^{(d)}\big)$ of vectors $p^{(k)}\in\R_+^n$, $n\in\N$, sharing the same support. The homomorphism $\|\cdot\|$ of Definition \ref{def:deg} is given by
$$
\big\|\big(p^{(1)},\ldots,p^{(d)}\big)\big\|=\big(\big\|p^{(1)}\big\|_1,\ldots,\big\|p^{(d)}\big\|_1\big)
$$
and $\big(u^{(1)},\ldots,u^{(d)}\big)$ represents a power universal if and only if $\big\|u^{(k)}\big\|_1=1$ for $k=1,\ldots,d$ and $u^{(k)}\neq u^{(\ell)}$ whenever $k\neq\ell$. We may extend the conditions (i) and (ii) above to the whole semidomain, so that $d$-divergences can be seen as monotone divergences on the semidomain in the sense of Definition \ref{def:mondiv} minus the normalization condition. According to Remark \ref{rem:barycentre}, this means that, for any $d$-divergence $D$, we may find a positive finite measure on the test spectrum $\hat{\mf D}(C^d)$ giving $D$ as its barycentre. The numbers $a_1,\ldots,a_d\in\R$ pay no role when $D$ is evaluated on a statistical experiment.

\begin{figure}
\begin{center}
\begin{overpic}[scale=0.45,unit=1mm]{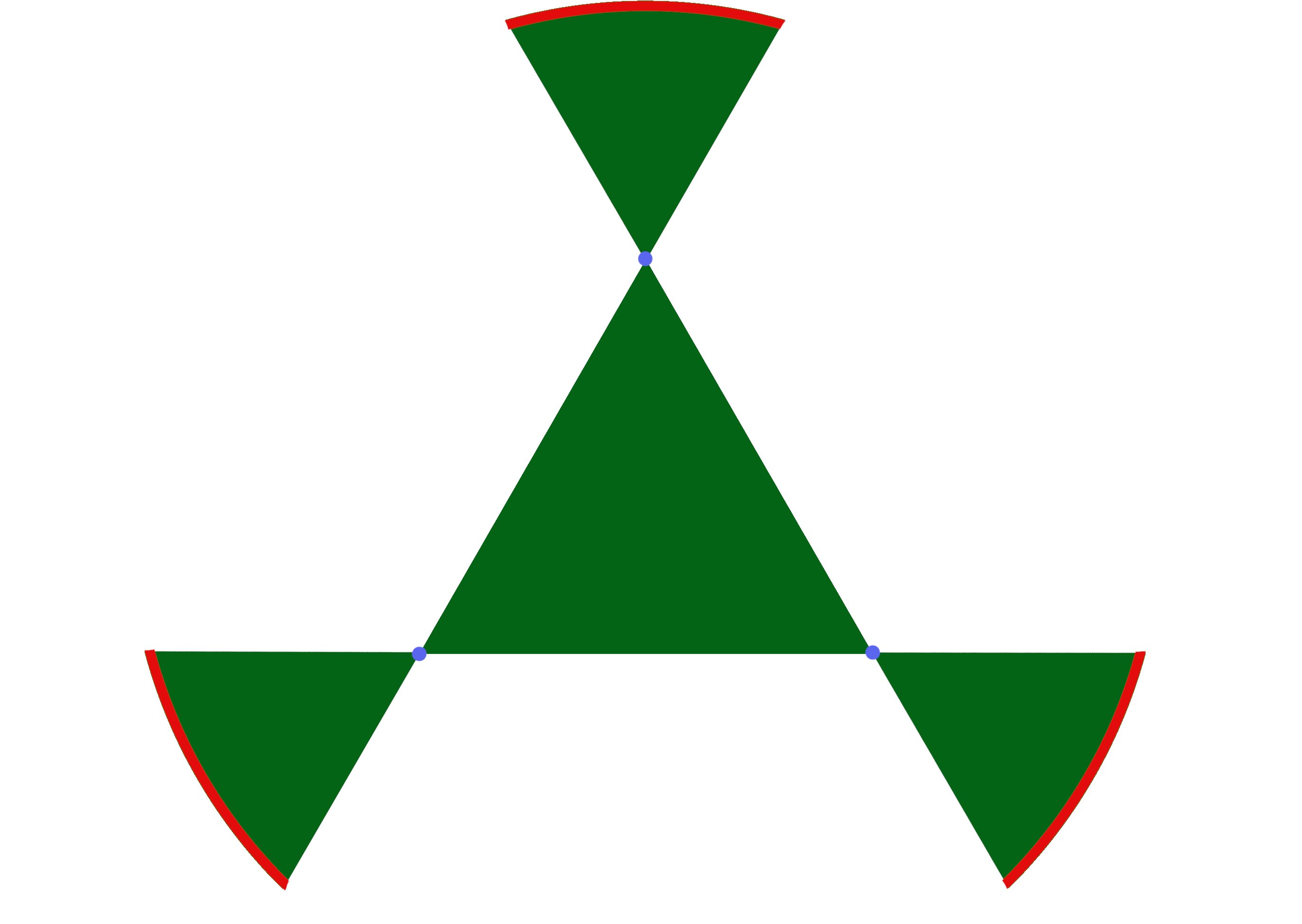}
\put(45,30){\begin{Huge}
${\color{white} A_+}$
\end{Huge}}
\put(17,12){\begin{Large}
${\color{white} A_1}$
\end{Large}}
\put(77,12){\begin{Large}
${\color{white} A_2}$
\end{Large}}
\put(47,62){\begin{Large}
${\color{white} A_3}$
\end{Large}}
\put(31,37){\begin{Large}
\darker{$D_{\underline{\alpha}}$}
\end{Large}}
\put(25,7){\begin{Large}
\darker{$D_{\underline{\alpha}}$}
\end{Large}}
\put(68,7){\begin{Large}
\darker{$D_{\underline{\alpha}}$}
\end{Large}}
\put(37,60){\begin{Large}
\darker{$D_{\underline{\alpha}}$}
\end{Large}}
\put(25,25){\begin{large}
\sininen{$\Delta^{(1)}_{\underline{\gamma}}$}
\end{large}}
\put(68,25){\begin{large}
\sininen{$\Delta^{(2)}_{\underline{\gamma}}$}
\end{large}}
\put(52,52){\begin{large}
\sininen{$\Delta^{(3)}_{\underline{\gamma}}$}
\end{large}}
\put(5,8){\begin{Large}
\punainen{$D^\T_{\underline{\beta}}$}
\end{Large}}
\put(86,8){\begin{Large}
\punainen{$D^\T_{\underline{\beta}}$}
\end{Large}}
\put(62,70){\begin{Large}
\punainen{$D^\T_{\underline{\beta}}$}
\end{Large}}
\end{overpic}
\caption{\label{fig:Restricted} The test spectrum $\hat{\mf D}(C^3)$ of the classical trivariate semiring $C^3$ depicted as a subset of the 2-dimensional affine plane of $\R^3$ of vectors whose entries sum up to 1. The part $\mf D(C^3,\R_+^{\rm op})$ consists of $D_{\underline{\alpha}}$ defined in \eqref{eq:ClTemperate} with $\underline{\alpha}\in A_+$ and $\mf D(C^3,\R_+)$ consists of $D_{\underline{\alpha}}$ with $\underline{\alpha}\in A_i$, $i=1,2,3$. Both of these temperate parts are depicted in green. The tropical part (in red) is to be understood as points at the infinity and is made up only of $\mf D(C^3,\T\R_+)$ consisting of $D^\T_{\underline{\beta}}$ as defined in \eqref{eq:ClTropical} with $\underline{\beta}\in B_i$, $i=1,2,3$; the part $\mf D(C^3,\T\R_+^{\rm op})$ is empty. The subsets $\mf D_k$, $k=1,2,3$, are depicted in blue (between the subsets $\mf D(C^3,\R_+)$ and $\mf D(C^3,\R_+^{\rm op})$ as one might expect, corresponding to the natural basis vectors of $\R^3$) and consist of $\Delta^{(k)}_{\underline{\gamma}}$ as defined in \eqref{eq:ClDeriv}. This situation generalizes to the general $d\in\N$ case in a straightforward way: the area $A_+$ corresponds to the $d$-dimensional probability simplex and $A_k$, $k=1,2,3$, extend from the extreme points of the probability simplex as cones reaching to infinity where the tropical divergences are.
}
\end{center}
\end{figure}

We may characterize the test spectrum $\hat{\mf D}(C^d)$ as follows: Let us fix a typical finite statistical experiment $\big(u^{(1)},\ldots,u^{(d)}\big)$ with $u^{(k)}\neq u^{(\ell)}$ whenever $k\neq\ell$. We define the set $A_+\subset\R^d$ as the set of those $\underline{\alpha}=(\alpha_1,\ldots,\alpha_d)$ such that $0\leq\alpha_k<1$ for $k=1,\ldots,d$ and $\alpha_1+\cdots+\alpha_d=1$. We also define the set $A_k\subset\R^d$, for $k=1,\ldots,d$, as the set of those $\underline{\alpha}=(\alpha_1,\ldots,\alpha_d)$ such that $\alpha_k>1$, $\alpha_\ell\leq0$ for $\ell\neq k$, and $\alpha_1+\cdots+\alpha_d=1$. For all $\alpha=(\alpha_1,\ldots,\alpha_d)\in A_+\cup A_1\cup\cdots\cup A_d$ we set up the function $D_{\underline{\alpha}}$ by defining it on a typical finite statistical experiment $\big(p^{(1)},\ldots,p^{(d)}\big)$ with vectors $p^{(k)}=\big(p^{(k)}_1,\ldots,p^{(k)}_n\big)\in\R_{>0}^n$ through
\begin{equation}\label{eq:ClTemperate}
D_{\underline{\alpha}}\big(p^{(1)},\ldots,p^{(d)}\big)=\frac{\log{\sum_{i=1}^n\prod_{k=1}^d\big(p^{(k)}_i\big)^{\alpha_k}}}{\log{\sum_{i=1}^n\prod_{k=1}^d\big(u^{(k)}_i\big)^{\alpha_k}}}.
\end{equation}
Furthermore, we define, for any $k=1,\ldots,d$, the set $B_k\subset\R^d$ as the set of those $\underline{\beta}=(\beta_1,\ldots,\beta_d)$ such that $\beta_k>0$, $\beta_\ell\leq0$ for all $\ell\neq k$, and $\beta_1+\cdots+\beta_d=0$. For any $\underline{\beta}=(\beta_1,\ldots,\beta_d)\in B_1\cup\cdots\cup B_d$, we set up the function $D_{\underline{\beta}}^\T$ by defining it on a typical finite statistical experiment like that above through
\begin{equation}\label{eq:ClTropical}
D_{\underline{\beta}}^\T\big(p^{(1)},\ldots,p^{(d)}\big)=\frac{\max_{1\leq i\leq n}\log{\prod_{k=1}^d\big(p^{(k)}_i\big)^{\beta_k}}}{\max_{1\leq i\leq n}\log{\prod_{k=1}^d\big(u^{(k)}_i\big)^{\beta_k}}}.
\end{equation}
These formulas can be extended to the whole semidomain $C^d$. Finally, for any $k\in\{1,\ldots,d\}$ and all $\underline{\gamma}=(\gamma_1,\ldots,\gamma_d)\in\R_+^d$, we define the function $\Delta_{\underline{\gamma}}^{(k)}$ again on the above statistical experiment through
\begin{equation}\label{eq:ClDeriv}
\Delta_{\underline{\gamma}}^{(k)}\big(p^{(1)},\ldots,p^{(d)}\big)=\frac{\sum_{\ell=1}^d\gamma_\ell D_{\rm KL}\big(p^{(k)}\big\|p^{(\ell)}\big)}{\big\|p^{(k)}\big\|_1\sum_{\ell=1}^d\gamma_\ell D_{\rm KL}\big(u^{(k)}\big\|u^{(\ell)}\big)}
\end{equation}
where $D_{\rm KL}(\cdot\|\cdot)$ is the Kullback-Leibler divergence. Note that the norm $\big\|p^{(k)}\big\|_1$ in the denominator only matters when the argument is not a statistical experiment when we extend this formula for the whole semidomain.

According to \cite{Farooq_et_al_2024}, we have
\begin{align*}
\mf D(C^d,\R_+)&=\left\{D_{\underline{\alpha}}\,\middle|\,\underline{\alpha}\in A_1\cup\cdots\cup A_d\right\},\\
\mf D(C^d,\R_+^{\rm op})&=\left\{D_{\underline{\alpha}}\,\middle|\,\underline{\alpha}\in A_+\right\},\\
\mf D(C^d,\T\R_+)&=\left\{D^\T_{\underline{\beta}}\,\middle|\,\underline{\beta}\in B_1\cup\cdots\cup B_d\right\},\\
\mf D(C^d,\T\R_+^{\rm op})&=\emptyset,\\
\mf D_k(C^d)&=\left\{\Delta^{(k)}_{\underline{\gamma}}\,\middle|\,\underline{\gamma}\in\R_+^d\setminus\{0\}\right\},\\
k&=1,\ldots,d.
\end{align*}
These parts of the spectrum are depicted in the case $d=3$ in Figure \ref{fig:Restricted}. Higher $d$ cases are similar but not so easy to visually represent. By including parts of the elements of the test spectrum (those that involve $\big(u^{(1)},\ldots,u^{(d)}\big)$) into a density function of the measure, we now have that, for any $d$-divergence $D$ there exist a finite positive measures
$$
\mu_+:\mc B(A_+)\to\R_+,\ \mu_k:\mc B(A_k)\to\R_+,\ \nu_k:\mc B(B_k)\to\R_+,\qquad k=1,\ldots,d
$$
and $\gamma_{k,\ell}\geq0$, $k,\ell=1,\ldots,d$, such that, for any typical finite statistical experiment $\big(p^{(1)},\ldots,p^{(d)}\big)$ of length $d$ with probability vectors $p^{(k)}=\big(p^{(k)}_1,\ldots,p^{(k)}_n\big)\in\R_{>0}^n$,
\begin{align*}
D\big(p^{(1)},\ldots,p^{(d)}\big)=&\int_{A_+}\log{\sum_{i=1}^n\prod_{\ell=1}^d\big(p^{(\ell)}_i\big)^{\alpha_\ell}}\,d\mu_+(\underline{\alpha})\\
&+\sum_{k=1}^d\int_{A_k}\log{\sum_{i=1}^n\prod_{\ell=1}^d\big(p^{(\ell)}_i\big)^{\alpha_\ell}}\,d\mu_k(\underline{\alpha})\\
&+\sum_{k=1}^d\int_{B_k}\max_{1\leq i\leq n}\log{\prod_{\ell=1}^d\big(p^{(\ell)}_i\big)^{\beta_\ell}}\,d\nu_k(\underline{\beta})\\
&+\sum_{k,\ell=1}^d \gamma_{k,\ell}D_{\rm KL}\big(p^{(k)}\big\|p^{(\ell)}\big).
\end{align*}

In the case where $d=2$, a 2-divergence is a dissimilarity measure between pairs of probability vectors sharing the same support which satisfies the data processing inequality and is extensive. Recall that such divergences are usually called relative entropies. Let $D_\alpha(\cdot\|\cdot)$ for $\alpha\in[1/2,\infty]$ be the R\'{e}nyi relative entropies; see \eqref{eq:RenyiRelEntr}. From the above result it follows that, for any relative entropy $D(\cdot\|\cdot)$, there exist finite measures $\mu_\pm:\mc B\big([1/2,\infty])\to\R_+$ such that \eqref{eq:RenyiBarycentre} holds for any pair $(p,q)$ of probability vectors sharing the same support. According to Theorem 2 of \cite{Mu_et_al_2021}, this result holds also for typical pairs of (sufficiently regular) probability measures $p$ and $q$.
\end{example}

\begin{remark}
It should be emphasized that the main ingredient for Theorem \ref{thm:barycentre} is Lemma \ref{lemma:key}, not Theorem \ref{thm:Vergleichsstellensatz} which we mainly use as a setting of our results and as a means to prove the pivotal Lemma \ref{lemma:key}. Let us also note that the proof of Theorem 2 of \cite{Mu_et_al_2021}, where the methods of our proof of Theorem \ref{thm:barycentre} were first used, is not real-algebraic in nature. As long as we have something like Lemma \ref{lemma:key} at our disposal, we obtain a barycentric result following the proof of Theorem \ref{thm:barycentre} or Theorem 2 of \cite{Mu_et_al_2021}.

Let us give an example of this. Denote by $\mc P$ the set of finite probability vectors of varying length. We say that $p\in\mc P$ majorizes $q\in\mc P$ if one of the following equivalent conditions is satisfied:
\begin{itemize}
\item[(a)] Reorganizing the entries of $p$ (respectively $q$) in a non-increasing order $p^\downarrow=(p_1^\downarrow,\ldots,p^\downarrow_n)$ (respectively $q^\downarrow=(q^\downarrow_1,\ldots,q^\downarrow_n)$), we have
$$
\sum_{i=1}^k p^\downarrow_i\geq\sum_{i=1}^k q^\downarrow_i
$$
for $k=1,\ldots,n$.
\item[(b)] There is a bistochasic matrix $S$ such that $Sp=q$. Recall that a matrix is bistochastic when its entries are non-negative and the entries in any of its rows or columns sum up to 1.
\end{itemize}
When either of the conditions (a) or (b) is satisfied, we denote $p\succeq q$. Let us call a map $H:\mc P\to\R_+$ a generalized entropy if it is extensive, i.e., $H(p\otimes q)=H(p)+H(q)$ for any $p,q\in\mc P$ and monotone, i.e., $p\succeq q$ implies $H(p)\leq H(q)$.

Let us introduce the generalized R\'{e}nyi entropies $H_\alpha$ for $\alpha\in[-\infty,\infty]$ (the attribute `generalized' referring to allowing the $\alpha$-parameter to be negative) by determining their values for $p\in\mc P$ with ${\rm supp}\,p=:I$:
$$
H_\alpha(p)=\left\{\begin{array}{ll}
-\min_{i\in I}\log{p_i},&\alpha=-\infty,\\
\frac{1}{1-\alpha}\log{\sum_{i\in I}p_i^\alpha},&\alpha\in(-\infty,0)\cup(0,1)\cup(1,\infty),\\
\sum_{i\in I}\log{p_i},&\alpha=0,\\
-\sum_{i\in I}p_i\log{p_i},&\alpha=1,\\
-\max_{i\in I}\log{p_i},&\alpha=\infty.
\end{array}\right.
$$
Corollary 46 of \cite{Farooq_et_al_2024} gives us a sufficient condition for the large-sample majorization of probability vectors: Let $p,q\in\mc P$. If $H_\alpha(p)<H_\alpha(q)$ for all $\alpha\in[-\infty,\infty]$, then $p^{\otimes n}\succeq q^{\otimes n}$ for any sufficiently large $n\in\N$. Proceeding just as in the proof of Lemma \ref{lemma:key}, we may show that, given $p,q\in\mc P$, we have $H(p)\leq H(q)$ for all generalized entropies $H$ if $H_\alpha(p)\leq H_\alpha(q)$ for all $\alpha\in[-\infty,\infty]$. Essentially the same proof as that for Theorem \ref{thm:barycentre} or that for Theorem 2 of \cite{Mu_et_al_2021} (Recalling also Remark \ref{rem:barycentre}) now shows that, for every generalized entropy $H$, there is an inner and outer regular finite positive measure $\mu:\mc B([-\infty,\infty])\to\R_+$ such that
$$
H(p)=\int_{-\infty}^\infty H_\alpha(p)\,d\mu(\alpha)
$$
for all $p\in\mc P$.
\end{remark}

\subsection{Extreme points of the set of divergences}\label{subsec:ext}

We immediately see that the set $\mf D$ of monotone divergences $D:S\to\R$ is convex, i.e.,\ for any $D_1,D_2\in\mf D$ and $t\in[0,1]$, the map $tD_1+(1-t)D_2$ defined through $\big(tD_1+(1-t)D_2\big)(x)=tD_1(x)+(1-t)D_2(x)$ for all $x\in S$ is also a monotone divergence. We notice that, as long as we are only interested in the restriction of $D$ onto $S_N$, $D$ is fully determined by a probability measure $\mu$ on $\hat{\mf D}$ according to \eqref{eq:barycentresimple}. We know that the extreme points of the convex set of these measures are the point measures $\delta_\Delta$ giving $\Delta\in\hat{\mf D}$ as their barycentres. However, since for one monotone divergence $D$ there might exist {\it a priori} many probability measures providing a barycentric representation for $D$, we cannot deduce that the extreme points of $\mf D$ are exactly those in $\hat{\mf D}$. Moreover, we notice that we may `disturb' any divergence with a {\it transition function}
$$
S\ni x\mapsto a_1\log{\|x\|_{(1)}}+\cdots+a_d\log{\|x\|_{(d)}}\in\R
$$
with $a_1,\ldots,a_d\in\R$, so $\mf D$ is clearly not compact in any meaningful sense. Thus the existence of extreme points is questionable. This is why, in the sequel, we implicitly view $\mf D$ modulo transition functions.

\begin{definition}
From now on we assume that there is a function $\R_{>0}^d\cup\{(0,\ldots,0)\}\ni z\mapsto a_z\in S$ such that $z\mapsto\Delta(a_z)$ is measurable for all $\Delta\in\hat{\mf D}$. We denote by $\tilde{\mf D}$ the subset of those $D\in\mf D$ modulo transition functions such that $z\mapsto D(a_z)$ is measurable.
\end{definition}

Note that, if we defined the transition functions simply as functions $\delta:\R_{>0}^d\cup\{(0,\ldots,0)\}\to\R$ such that $\delta(uv)=\delta(u)+\delta(v)$ for all $v,w\in\R_{>0}^d\cup\{(0,\ldots,0)\}$ (where the product of vectors is defined entrywise), then we could remove the measurability assumption above and consider $\tilde{\mf D}$ as the coset space of $\mf D$ over the set of these more general transition functions.

We now embark on identifying the extreme points of $\tilde{\mf D}$. First we show that, the set ${\rm ext}\,\mf D$ of extreme points of $\mf D$ is contained in $\hat{\mf D}$ (modulo transition functions). Note that, if we restrict the divergences onto $S_N$ which is usually the set of interest, we do not need the definition for the restricted set $\tilde{\mf D}$ of divergences.

\begin{lemma}\label{lemma:extlemma}
${\rm ext}\,\tilde{\mf D}\subseteq\hat{\mf D}$.
\end{lemma}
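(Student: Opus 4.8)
The statement to prove is that every extreme point of $\tilde{\mf D}$ lies in $\hat{\mf D}$ (modulo transition functions). The natural strategy is contrapositive: take $D\in\tilde{\mf D}$ with $D\notin\hat{\mf D}$ and exhibit a nontrivial convex decomposition $D=\tfrac12(D_1+D_2)$ with $D_1,D_2\in\tilde{\mf D}$ distinct from $D$. The main tool is Theorem \ref{thm:barycentre}: under the standing measurability hypothesis, $D$ admits a representation
\begin{equation*}
D(x)=\int_{\hat{\mf D}}\Delta(x)\,d\mu(\Delta)+\sum_{k=1}^d a_k\log\|x\|_{(k)}
\end{equation*}
for some probability measure $\mu$ on $\mc B(\hat{\mf D})$; working modulo transition functions we may ignore the last sum and write $D=\int_{\hat{\mf D}}\Delta\,d\mu(\Delta)$ on $S_N$. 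If $D$ is extreme in $\tilde{\mf D}$, I want to force $\mu$ to be a point mass $\delta_{\Delta_0}$, which gives $D=\Delta_0\in\hat{\mf D}$.

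\textbf{Key steps.} First, reduce everything to $S_N$: since elements of $\tilde{\mf D}$ are taken modulo transition functions, a divergence is determined by its restriction to $S_N$ together with its transition-function class, and on $S_N$ formula \eqref{eq:barycentresimple} applies, so the relevant object is the probability measure $\mu$. Second, observe the affine correspondence: for any probability measure $\mu$ on $\hat{\mf D}$ satisfying the measurability requirement (so that $z\mapsto\int\Delta(a_z)\,d\mu$ is measurable — this holds because $\hat{\mf D}\subseteq\mf D$ consists of divergences and the integrand inherits the needed measurability), the barycentre $D_\mu:=\int_{\hat{\mf D}}\Delta\,d\mu(\Delta)$ is a well-defined element of $\tilde{\mf D}$; indeed it is extensive, monotone (each $\Delta$ is, and the integral of monotone functions is monotone — using Lemma \ref{lemma:key} is not even needed here since each $\Delta$ is itself monotone), and normalized since $\Delta(u)=1$ for all $\Delta\in\hat{\mf D}$ so $D_\mu(u)=\mu(\hat{\mf D})=1$. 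The map $\mu\mapsto D_\mu$ is affine. Third, suppose $\mu$ is not a Dirac measure; then there is a Borel set $E\subseteq\hat{\mf D}$ with $0<\mu(E)<1$, so we may split $\mu=\mu(E)\,\mu_E+\mu(E^c)\,\mu_{E^c}$ into its normalized restrictions. Setting $D_1=D_{\mu_E}$, $D_2=D_{\mu_{E^c}}$ (both in $\tilde{\mf D}$, with measurability of $z\mapsto D_i(a_z)$ following from measurability of $z\mapsto\Delta(a_z)$ and Fubini/standard measurable-selection arguments on the product), we get $D=\mu(E)D_1+\mu(E^c)D_2$, a nontrivial convex combination. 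Fourth, argue $D_1\neq D_2$ as elements of $\tilde{\mf D}$ (i.e.\ their restrictions to $S_N$ differ): if $D_{\mu_E}=D_{\mu_{E^c}}$ on $S_N$ then the measures $\mu_E$ and $\mu_{E^c}$ have the same barycentre against every evaluation functional $f_x$, $x\in S_N$; since these functionals separate points of the compact Hausdorff space $\hat{\mf D}$ (that is precisely what the topology of pointwise comparison is, together with $f_{x,y}=f_x-f_y$ for $x\sim y$), and since $\{f_x : x\in S_N\}$ together with constants spans a point-separating subspace of $C(\hat{\mf D})$, one deduces... this is where care is needed.

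\textbf{Main obstacle.} The delicate point is Step four: concluding that two probability measures on $\hat{\mf D}$ with the same barycentre against all $f_x$ (equivalently all $f_{x,y}$) must coincide. This would follow immediately if the linear span of $\{f_{x,y}\}$ were dense in $C(\hat{\mf D})$, e.g.\ via Stone--Weierstrass, but the $f_{x,y}$ need not form an algebra, so density is not obvious. The likely resolution is either (i) to show that the span of these comparison maps, being exactly the family defining the topology, is rich enough — perhaps it is automatically dense, or perhaps one only needs that it separates points and the argument can be run on the sub-$\sigma$-algebra it generates, which may be all of $\mc B(\hat{\mf D})$ — or (ii) to choose the splitting set $E$ more cleverly: instead of an arbitrary Borel set, pick $E$ of the form $\{\Delta : f_{x,y}(\Delta)\geq c\}$ for suitable $x\sim y$ and $c\in\R$, so that the difference of barycentres is detected directly by $f_{x,y}$ itself. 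I expect the intended proof uses option (ii): if $\mu$ is not Dirac, some comparison map $f_{x,y}$ is non-constant $\mu$-a.e., the sublevel and superlevel sets give the splitting, and on one piece $\int f_{x,y}\,d\mu_E > \int f_{x,y}\,d\mu_{E^c}$ strictly, so $D_1(xy^{-1})\neq D_2(xy^{-1})$ — or rather, since $S$ need not have inverses, $D_1(x)-D_1(y)\neq D_2(x)-D_2(y)$, which suffices to distinguish $D_1$ from $D_2$ in $\tilde{\mf D}$. Handling the measurability bookkeeping for the restricted measures and confirming the splitting sets can be chosen within the measurability hypothesis is routine but must be checked.
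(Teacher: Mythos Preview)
Your approach is correct and matches the paper's in outline: represent $D$ via Theorem~\ref{thm:barycentre}, assume $\mathrm{supp}\,\mu$ contains distinct $\Delta_1,\Delta_2$, find $x_0\in S_N$ with $\Delta_1(x_0)\neq\Delta_2(x_0)$ (the paper cites Proposition~8.5 of \cite{FritzII} for this separation), and split $\mu$ using the evaluation map at $x_0$. The paper's execution of the final step differs slightly from your option~(ii) and neatly sidesteps the delicacy you flag as the main obstacle. Rather than a single level-set split followed by verifying $D_1\neq D_2$, the paper takes shrinking neighbourhoods $U_{i,\varepsilon}=\{\Delta:|\Delta(x_0)-\Delta_i(x_0)|<\varepsilon\}$ around \emph{each} $\Delta_i$ and writes $D=\mu(U_{i,\varepsilon})D_{i,\varepsilon}+\big(1-\mu(U_{i,\varepsilon})\big)D'_{i,\varepsilon}$ for $i=1,2$ separately. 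Extremality of $D$ then forces $D=D_{i,\varepsilon}$ for both $i$ --- no need to check beforehand that the two pieces differ --- and the easy estimate $|D_{i,\varepsilon}(x_0)-\Delta_i(x_0)|\le\varepsilon$ gives, as $\varepsilon\to0$, the contradiction $\Delta_1(x_0)=D(x_0)=\Delta_2(x_0)$. Your level-set split also works but requires choosing the threshold so that the two conditional means of $f_{x_0}$ are strictly separated; the paper's double-localization trick avoids that bookkeeping entirely.
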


\begin{proof}
Let $D\in{\rm ext}\,\tilde{\mf D}$ and $\mu:\mc B(\hat{\mf D})\to[0,1]$ be a probability measure such that \eqref{eq:barycentre} holds where we ignore the transition part, i.e.,\ set $a_1=\cdots=a_d=0$. Let us make the counter assumption that there are $\Delta_1,\Delta_2\in{\rm supp}\,\mu$ such that $\Delta_1\neq\Delta_2$. Following the proof of Proposition 8.5 of \cite{FritzII}, we find $x_0\in S_N$ such that $\Delta_1(x_0)\neq\Delta_2(x_0)$. For $i=1,2$ and $\varepsilon>0$, we define the subsets $U_{i,\varepsilon}\subseteq\hat{\mf D}$,
$$
U_{i,\varepsilon}:=\{\Delta\in\hat{\mf D}\,|\,|\Delta_i(x_0)-\Delta(x_0)|<\varepsilon\}.
$$
Clearly, $U_{i,\varepsilon}$ are open neighbourhoods of $\Delta_i$ for all $\varepsilon>0$ and $i=1,2$. This means that $\mu(U_{i,\varepsilon})>0$ for $i=1,2$ and all $\varepsilon>0$. Because $\Delta_1(x_0)\neq\Delta_2(x_0)$, $U_{1,\varepsilon}\cap U_{2,\varepsilon}=\emptyset$ for sufficiently small $\varepsilon>0$. From now on, we only consider these sufficiently small $\varepsilon$.

Let us define, for $i=1,2$ and all $\varepsilon>0$ (sufficiently small) $D_{i,\varepsilon}:S\to\R$ through
$$
D_{i,\varepsilon}(x)=\frac{1}{\mu(U_{i,\varepsilon})}\int_{U_{i,\varepsilon}}\Delta(x)\,d\mu(\Delta)
$$
for all $x\in S$. We immediately see that $D_{i,\varepsilon}\in\mf D$. For $i=1,2$ and all $\varepsilon>0$, we may also define $D'_{i,\varepsilon}\in\mf D$ through
$$
D'_{i,\varepsilon}(x)=\frac{1}{1-\mu(U_{i,\varepsilon})}\int_{\hat{\mf D}\setminus U_{i,\varepsilon}}\Delta(x)\,d\mu(\Delta)
$$
for all $x\in S$; note that, when $\varepsilon>0$ is sufficiently small, then $U_{1,\varepsilon}$ and $U_{2,\varepsilon}$ are disjoint and they have a positive measure in $\mu$, so also $\mu(U_{i,\varepsilon})<\mu(U_{1,\varepsilon})+\mu(U_{2,\varepsilon})\leq\mu(\hat{\mf D})=1$, so the definition of $D'_{i,\varepsilon}$ makes sense. We now have
$$
D=\mu(U_{i,\varepsilon})D_{i,\varepsilon}+\big(1-\mu(U_{i,\varepsilon})\big)D'_{i,\varepsilon}
$$
for all $x\in S_N$, $\varepsilon>0$ (sufficiently small), and $i=1,2$. Since $0<\mu(U_{i,\varepsilon})<1$ for any $\varepsilon>0$ (sufficiently small), the extremality of $D$ implies that $D=D_{i,\varepsilon}$ for $i=1,2$ and any $\varepsilon>0$ (sufficiently small). Furthermore, for $i=1,2$ and $\varepsilon>0$ (sufficiently small), we have
\begin{align*}
|D_{i,\varepsilon}(x_0)-\Delta_i(x_0)|&=\left|\frac{1}{\mu(U_{i,\varepsilon})}\int_{U_{i,\varepsilon}}\Delta(x_0)\,d\mu(\Delta)-\Delta_i(x_0)\right|\\
&=\frac{1}{\mu(U_{i,\varepsilon})}\left|\int_{U_{i,\varepsilon}}\big(\Delta(x_0)-\Delta_i(x_0)\big)\,d\mu(\Delta)\right|\\
&\leq\frac{1}{\mu(U_{i,\varepsilon})}\int_{U_{i,\varepsilon}}|\Delta(x_0)-\Delta_i(x_0)|\,d\mu(\Delta)\\
&\leq\frac{1}{\mu(U_{i,\varepsilon})}\cdot\varepsilon\cdot\mu(U_{i,\varepsilon})=\varepsilon,
\end{align*}
where the second inequality follows from the definition of $U_{i,\varepsilon}$. Recall that, since $x_0\in S_N$, the expressions for $D_{i,\varepsilon}(x_0)$ simplify. Thus, for $i=1,2$, $D(x_0)=D_{i,\varepsilon}(x_0)\to\Delta_i(x_0)$ as $\varepsilon\to 0$ so that $\Delta_1(x_0)=D(x_0)=\Delta_2(x_0)\neq\Delta_1(x_0)$, a contradiction. Thus there cannot be distinct elements in the support of $\mu$, i.e.,\ ${\rm supp}\,\mu$ is a singleton, say $\{\Delta_0\}$ with some $\Delta_0\in\hat{\mf D}$. This means that $\mu$ is the point measure concentrated at $\Delta_0$, so that $D=\Delta_0\in\hat{\mf D}$.
\end{proof}

\begin{remark}
The inclusion ${\rm ext}\,\tilde{\mf D}\subseteq\hat{\mf D}$ proven in Lemma \ref{lemma:extlemma} may be strict, i.e.,\ not all elements of the test spectrum are extreme in $\tilde{\mf D}$. Note, e.g.,\ that the sets $\mf D_k(C^d)$ of derivations of the semiring discussed in Example \ref{ex:classical} essentially consist of proper convex mixtures of the (normalized) Kullback-Leibler divergences
$$
\big(p^{(1)},\ldots,p^{(d)}\big)\mapsto\frac{D_{\rm KL}\big(p^{(k)}\big\|p^{(\ell)}\big)}{\left\|p^{(k)}\right\|_1}
$$
which provide the extreme points of $\mf D_k(C^d)$. However, the non-derivation part $\mf D_{\rm nd}$ of the test spectrum is within ${\rm ext}\,\tilde{\mf D}$, as the following theorem states.
\end{remark}

\begin{theorem}\label{theor:extreme}
${\rm ext}\,\tilde{\mf D}=\mf D_{\rm nd}\cup{\rm ext}\,\mf D_1\cup\cdots\cup{\rm ext}\,\mf D_d$.
\end{theorem}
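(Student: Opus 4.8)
The plan is to prove the two inclusions separately, using Lemma \ref{lemma:extlemma} as the starting point for the inclusion ``$\subseteq$''. By Lemma \ref{lemma:extlemma} we already know ${\rm ext}\,\tilde{\mf D}\subseteq\hat{\mf D}=\mf D_{\rm nd}\cup\mf D_1\cup\cdots\cup\mf D_d$, so for the forward inclusion it suffices to check that no element of $\mf D_k\setminus{\rm ext}\,\mf D_k$ (for any $k$) can be extreme in $\tilde{\mf D}$: if $\Delta\in\mf D_k$ is a proper convex combination $\Delta=t\Delta'+(1-t)\Delta''$ with $\Delta',\Delta''\in\mf D_k$ distinct, then because each $\mf D_k\subseteq\hat{\mf D}\subseteq\mf D$ and the decomposition is inherited pointwise on all of $S\setminus\{0\}$, this is already a nontrivial decomposition inside $\tilde{\mf D}$, so $\Delta\notin{\rm ext}\,\tilde{\mf D}$. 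This shows ${\rm ext}\,\tilde{\mf D}\subseteq\mf D_{\rm nd}\cup{\rm ext}\,\mf D_1\cup\cdots\cup{\rm ext}\,\mf D_d$. (One should remark that the sets $\mf D_{\rm nd}$ and the various $\mf D_k$ are pairwise disjoint, or at least that their overlaps are handled, so that the union on the right is literally what is claimed; this is where the structure of the test spectrum from \cite{FritzII} enters.)

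For the reverse inclusion I would treat the two kinds of pieces separately. That ${\rm ext}\,\mf D_k\subseteq{\rm ext}\,\tilde{\mf D}$ should follow by a short argument: if $\Delta\in{\rm ext}\,\mf D_k$ decomposes as $\Delta=tD_1+(1-t)D_2$ in $\tilde{\mf D}$ with $0<t<1$, apply Theorem \ref{thm:barycentre} to $D_1$ and $D_2$ to obtain representing probability measures $\mu_1,\mu_2$ on $\hat{\mf D}$; then $\mu:=t\mu_1+(1-t)\mu_2$ represents $\Delta$. Since $\Delta\in\mf D_k$, one wants to conclude that $\mu$ is supported in $\mf D_k$ — this uses that $\Delta\in\mf D_k$ is characterized by the extra algebraic identities (k1)--(k2) (or by being a derivation), together with an argument (as in the proof of Lemma \ref{lemma:extlemma}, via the separation of points of $\hat{\mf D}$ by evaluations $f_{x_0}$, $x_0\in S_N$) that any $\Delta''\in{\rm supp}\,\mu$ must also lie in $\mf D_k$. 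Once $\mu$ lives on $\mf D_k$, extremality of $\Delta$ inside $\mf D_k$ (which is itself a convex, compact, metrizable subset of $\hat{\mf D}$, so Choquet/Bauer applies) forces $\mu_1=\mu_2=\delta_\Delta$, i.e.\ $D_1=D_2=\Delta$.

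For $\mf D_{\rm nd}\subseteq{\rm ext}\,\tilde{\mf D}$ the same skeleton applies but the key point must be proved directly: I would take $\Delta\in\mf D_{\rm nd}$, write $\Delta=\log\Phi/\log\Phi(u)$ for a nondegenerate monotone homomorphism $\Phi:S\to\mb K$, suppose $\Delta=tD_1+(1-t)D_2$ with representing measures giving $\mu=t\mu_1+(1-t)\mu_2$, and show ${\rm supp}\,\mu=\{\Delta\}$. The essential lemma is: if a Borel probability measure $\mu$ on $\hat{\mf D}$ barycentres to an element $\Delta\in\mf D_{\rm nd}$, then $\mu=\delta_\Delta$. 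The cleanest route is to exploit condition (ND1): the function $\Phi$ is additive (or ``tropically additive''/max-like) under $\oplus$, i.e.\ $\Phi$ is multiplicative under $\otimes$ and ($\R_+$-)additive under $\oplus$, whereas a generic element of $\hat{\mf D}$ need not be; so writing $\Delta(x\oplus x)$ in two ways — once via $\Delta=\log\Phi/\log\Phi(u)$ and once via the integral $\int\Delta'(x\oplus x)\,d\mu(\Delta')$ — and using strict concavity/convexity of $\log$ (Jensen, with equality iff the integrand is $\mu$-a.e.\ constant) pins down $\mu$ to a point mass. This is the step I expect to be the main obstacle: one has to handle all four target semirings $\R_+,\R_+^{\rm op},\T\R_+,\T\R_+^{\rm op}$ uniformly, be careful that the ``$\oplus$'' operation and the comparison maps $f_{x,y}$ interact correctly on $\hat{\mf D}$ (elements of $\mf D_0$ behave additively under $\oplus$, unlike $\mf D_{\mb K}$), and verify the strict-convexity input has no degenerate edge cases (e.g.\ when $\Phi(u)=1$, which cannot happen since $\Phi$ is nondegenerate and $u$ is power universal, but this must be noted). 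Assembling these, extremality of $\Delta$ in $\tilde{\mf D}$ follows, completing the reverse inclusion and hence the theorem.
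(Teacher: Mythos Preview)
Your overall architecture matches the paper's: both directions reduce, via Lemma \ref{lemma:extlemma} and Theorem \ref{thm:barycentre}, to the statement that any representing probability measure $\mu$ on $\hat{\mf D}$ for an element $\Delta\in\mf D_{\mb K}$ (resp.\ $\Delta\in\mf D_k$) must satisfy ${\rm supp}\,\mu=\{\Delta\}$ (resp.\ ${\rm supp}\,\mu\subseteq\mf D_k$). Your treatment of the forward inclusion and of ${\rm ext}\,\mf D_k\subseteq{\rm ext}\,\tilde{\mf D}$ is essentially what the paper does.

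The genuine gap is in your sketch for $\mf D_{\rm nd}$. You propose to compare $\Delta(x\oplus x)$ computed two ways and invoke Jensen uniformly across the four $\mb K$; you flag this as the likely obstacle, and it is. First, before any Jensen-type argument can bite, one must already know that $\mu$ is concentrated on the single piece $\mf D_{\mb K}$ containing $\Delta$: the different constituents of $\hat{\mf D}$ behave incompatibly under $\oplus$ (temperate homomorphisms give $\log(a+b)$-type contributions, tropical ones give a $\max$, derivations are genuinely additive), so a convexity inequality over a mixture of pieces yields no rigidity. The paper spends most of the proof on this localisation, establishing five separate implications (if $\Delta\in\mf D_{\mb K}$ then ${\rm supp}\,\mu\subseteq\mf D_{\mb K}$, and likewise $\Delta\in\mf D_k\Rightarrow{\rm supp}\,\mu\subseteq\mf D_k$) by evaluating both sides at carefully chosen test elements of $\ms{Frac}(S)$ such as $(u^{\pm k}+1)/2$, $a_k\big(1+u+\tfrac12 u^2+\cdots+\tfrac{1}{k!}u^k\big)$, $(|k|-1+u^k)/|k|$, and $(xu+1)/(x+u)$, and tracking the $k\to\pm\infty$ asymptotics to force the $\mu$-mass of the wrong pieces to vanish. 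Second, once localised, the four $\mb K$ are \emph{not} handled uniformly: for the temperate cases $\R_+,\R_+^{\rm op}$ the paper does use Jensen, but applied to the affine family $x(s)=sx+(1-s)1$ together with a derivative computation at $s=0$ that upgrades the Jensen inequality to an equality (hence the integrand is constant $\nu$-a.e.); for the tropical cases $\T\R_+,\T\R_+^{\rm op}$ there is no strict convexity available, and instead one argues by contradiction from the identity $\Delta\big((x+y)/2\big)=\max\{\Delta(x),\Delta(y)\}$ and a separating element $x_0\in S_N$ arranged so that $\Delta(x_0)>0>\Delta_1(x_0)$. Your one-line ``$\Delta(x\oplus x)$ two ways'' does not supply either mechanism.
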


\begin{proof}
According to Lemma \ref{lemma:extlemma}, it suffices to prove that, whenever $\Delta\in\mf D_{\rm nd}$ or $\Delta\in{\rm ext}\,\mf D_k$ for $k\in\{1,\ldots,d\}$, then $\Delta$ is extreme in $\tilde{\mf D}$. We start by considering $\Delta\in\hat{\mf D}$. According to Theorem \ref{thm:barycentre}, there is a probability measure $\mu:\mc B(\hat{\mf D})\to[0,1]$ and $a_1,\ldots,a_d\in\R$ such that
\begin{equation}\label{eq:barycentreDelta}
\Delta(x)=\int_{\hat{\mf D}}\Delta'(x)\,d\mu(\Delta')+\sum_{k=1}^d a_k\log{\|x\|_{(k)}}
\end{equation}
for all $x\in S$. We keep this representation fixed and first prove the following implications:
\begin{itemize}
\item[(i)] $\Delta\in\mf D_{\R_+}$ $\Rightarrow$ ${\rm supp}\,\mu\subseteq\mf D_{\R_+}$
\item[(ii)] $\Delta\in\mf D_{\R_+^{\rm op}}$ $\Rightarrow$ ${\rm supp}\,\mu\subseteq\mf D_{\R_+^{\rm op}}$
\item[(iii)] $\Delta\in\mf D_{\T\R_+}$ $\Rightarrow$ ${\rm supp}\,\mu\subseteq\mf D_{\T\R_+}$
\item[(iv)] $\Delta\in\mf D_{\T\R_+^{\rm op}}$ $\Rightarrow$ ${\rm supp}\,\mu\subseteq\mf D_{\T\R_+^{\rm op}}$
\item[(v)] $\Delta\in\mf D_0$ $\Rightarrow$ ${\rm supp}\,\mu\subseteq\mf D_0$,
\begin{itemize}
\item[(vk)] $\Delta\in\mf D_k$ $\Rightarrow$ ${\rm supp}\,\mu\subseteq\mf D_k$, $k=1,\ldots,d$
\end{itemize}
\end{itemize}
In the proofs of these implications, we will often utilize the fact that we may view $\Delta\in\hat{\mf D}$ as an element of the test spectrum of $\ms{Frac}(S)$ when we consider entries $x/y$ with $x\sim y$; see also the proof of Proposition 8.5 \cite{FritzII}. This allows us to consider elements of the form $(u^k+1)/2$ with $k\in\Z$ and $2:=1+1\in S$; note that $\|u^k+1\|=(2,\ldots,2)=\|2\|$, so that $u^k+1\sim 2$.

Before going on with the proof, let us show that the sets $\mf D_{\mb K}$ with $\mb K\in\{\R_+,\R_+^{\rm op},\T\R_+,\T\R_+^{\rm op}\}$ and $\mf D_0$ are measurable subsets of $\hat{\mf D}$. Indeed, it is easily seen that
\begin{itemize}
\item $\Delta'\in\mf D_{\T\R_+^{\rm op}}$ $\Rightarrow$ $\Delta'\left(\frac{u+1}{2}\right)=0$,
\item $\Delta'\in\mf D_{\R_+^{\rm op}}$ $\Rightarrow$ $0<\Delta'\left(\frac{u+1}{2}\right)<\frac{1}{2}$,
\item $\Delta'\in\mf D_0$ $\Rightarrow$ $\Delta'\left(\frac{u+1}{2}\right)=\frac{1}{2}$,
\item $\Delta'\in\mf D_{\R_+}$ $\Rightarrow$ $\frac{1}{2}<\Delta'\left(\frac{u+1}{2}\right)<1$, and
\item $\Delta'\in\mf D_{\T\R_+}$ $\Rightarrow$ $\Delta'\left(\frac{u+1}{2}\right)=1$.
\end{itemize}
This means that each of the sets $\mf D_{\mb K}$ with $\mb K\in\{\R_+,\R_+^{\rm op},\T\R_+,\T\R_+^{\rm op}\}$ and $\mf D_0$ are suitable preimages in the evaluation function at $(u+1)/2$. Thus, it makes sense to restrict $\mu$ on each of these sets and
$$
1=\mu(\hat{\mf D})=\mu(\mf D_{\R_+})+\mu(\mf D_{\R_+^{\rm op}})+\mu(\mf D_{\T\R_+})+\mu(\mf D_{\T\R_+^{\rm op}})+\mu(\mf D_0).
$$
This result will come in handy also later. We discuss the separation of the different sets $\mf D_k$, $k=1,\ldots,d$, later.

Let us start by proving implication (i). Let $\Delta\in\mf D_{\R_+}$ and consider $x_k:=(u^{-k}+1)/2$ with $k\in\N$. Note that when we evaluate $\Delta$ on $x_k$, \eqref{eq:barycentreDelta} simplifies into
$$
\Delta(x_k)=\int_{\hat{\mf D}}\Delta'(x_k)\,d\mu(\Delta').
$$
Let $\Phi_\Delta\in\Sigma_{\R_+}$ be the monotone homomorphism underlying $\Delta$, i.e., $\Phi_\Delta=\exp{\big(\Delta(\cdot)\log{\Phi_\Delta(u)}\big)}$. We find that, for $\Delta'\in\mf D_{\R_+}$, $\Phi_{\Delta'}(x_k)=\big(\Phi_{\Delta'}(u)^{-k}+1\big)/2$ which converges to $1/2$ as $k\to\infty$. We obtain the same expression for $\Phi_{\Delta'}(x_k)$ when $\Delta'\in\mf D_{\R_+^{\rm op}}$ which grows limitlessly as $k\to\infty$. For $\Delta'\in\mf D_{\T\R_+}$, we find that $\Phi_{\Delta'}(x_k)=1$ and, for $\Delta\in\mf D_{\T\R_+^{\rm op}}$, $\Phi_{\Delta'}(x_k)=\Phi_{\Delta'}(u)^{-k}$. Moreover, for $\tilde{\Delta}\in\mf D_0$, we have $\tilde{\Delta}(x_k)=-k/2$. Denoting
$$
f(\Delta'):=\frac{\log{\Phi_\Delta}(u)}{\log{\Phi_{\Delta'}(u)}}
$$
for all $\Delta'\in\mf D_{\rm nd}$, we now have
\begin{align*}
\frac{1}{2}\big(\Phi_\Delta(u)^{-1}+1\big)=&\Phi_\Delta(x_k)=\exp{\big(\Delta(x_k)\log{\Phi_\Delta(u)}\big)}\\
=&\exp{\left(\log{\Phi_\Delta(u)}\int_{\hat{\mf D}}\Delta'(x_k)\,d\mu(\Delta')\right)}\\
=&\exp{\left(\int_{\mf D_{\rm nd}}f(\Delta')\log{\Phi_{\Delta'}(x_k)}\,d\mu(\Delta')\right)}\times\\
&\times\exp{\left(\log{\Phi_\Delta(u)}\int_{\mf D_0}\tilde{\Delta}(x_k)\,\d\mu(\tilde{\Delta})\right)}\\
=&\exp{\left(\int_{\mf D_{\R_+}}f(\Delta')\log{\left[\frac{1}{2}\big(\Phi_{\Delta'}(u)^{-k}+1\big)\right]}\,d\mu(\Delta')\right)}\times\\
&\times \exp{\left(\int_{\mf D_{\R_+^{\rm op}}}f(\Delta')\log{\left[\frac{1}{2}\big(\Phi_{\Delta'}(u)^{-k}+1\big)\right]}\,d\mu(\Delta')\right)}\times\\
&\times\exp{\left(\int_{\mf D_{\T\R_+}}f(\Delta')\log{1}\,d\mu(\Delta')\right)}\times\\
&\times\exp{\left(-k\int_{\mf D_{\T\R_+^{\rm op}}}f(\Delta')\log{\Phi_{\Delta'}(u)}\,d\mu(\Delta')\right)}\times\\
&\times\exp{\left(-\frac{k}{2}\int_{\mf D_0}\,d\mu(\tilde{\Delta})\right)}\\
=&\exp{\left(\int_{\mf D_{\R_+}}f(\Delta')\log{\left[\frac{1}{2}\big(\Phi_{\Delta'}(u)^{-k}+1\big)\right]}\,d\mu(\Delta')\right)}\times\\
&\times \exp{\left(\int_{\mf D_{\R_+^{\rm op}}}f(\Delta')\log{\left[\frac{1}{2}\big(\Phi_{\Delta'}(u)^{-k}+1\big)\right]}\,d\mu(\Delta')\right)}\times\\
&\times\exp{\left[-k\left(\mu(\mf D_{\T\R_+^{\rm op}})+\frac{1}{2}\mu(\mf D_0)\right)\log{\Phi_\Delta(u)}\right]}.
\end{align*}
The left-hand side of the above equation converges to $1/2$ as $k\to\infty$. For large $k\in\N$, the term in the first exponential above on the right-hand side behaves as $-(k/2)\mu(\mf D_{\R_+})\log{\Phi_\Delta(u)}$ whereas the term in the second exponential goes to $-\infty$ as $k\to\infty$ as long as $\mf D_{\R_+^{\rm op}}$ supports $\mu$, i.e.,\ $\mu(\mf D_{\R_+^{\rm op}})>0$; note that, on $\mf D_{\R_+^{\rm op}}$, $f$ is strictly negative and the logarithm term grows limitlessly as $k\to\infty$. The term in the final exponential patently goes to $-\infty$ as $k\to\infty$ if $\mu(\T\R_+^{\rm op})>0$ and $\mu(\mf D_0)>0$. Thus, we must have $\mu(\mf D_{\R_+^{\rm op}})=\mu(\mf D_{\T\R_+^{\rm op}})=\mu(\mf D_0)=0$, so that ${\rm supp}\,\mu\subseteq\mf D_{\R_+}\cup\mf D_{\T\R_+}$.

To separate $\mf D_{\T\R_+}$, consider
$$
y_k:=a_k\left(1+u+\frac{1}{2}u^2+\cdots+\frac{1}{k!}u^k\right),\quad a_k:=\left(2+\frac{1}{2}+\cdots+\frac{1}{k!}\right)^{-1}.
$$
As $k\to\infty$, we have $a_k\to 1/e$. This means that $\Phi_\Delta(y_k)\to e^{\Phi_\Delta(u)-1}$ as $k\to\infty$. All $\Delta'\in\mf D_{\R_+}$ behave in the same way. For $\Delta'\in\mf D_{\T\R_+}$, however, $\Phi_{\Delta'}(y_k)=\Phi_{\Delta'}(u)^k$ which grows exponentially in $k$. We now have
$$
\Phi_\Delta(y_k)=\exp{\left(\int_{\mf D_{\R_+}}f(\Delta')\log{\Phi_{\Delta'}(y_k)}\,d\mu(\Delta')\right)}\exp{\big(k\mu(\mf D_{\T\R_+})\log{\Phi_\Delta(u)}\big)},
$$
where the left-hand side converges to $e^{\Phi_\Delta(u)-1}$ as $k\to\infty$ while, at the same time, the last term grows exponentially if $\mu(\mf D_{\T\R_+})>0$ (the first term certainly not going to 0). Thus, we must have $\mu(\mf D_{\T\R_+})=0$, and the first implication is proven.

For the implication in (ii), the proof is essentially the same as with the implication in (i). In this case, we just consider $x_k=(u^k+1)/2$ with $k\in\N$ and continue in the same way as above.

Let us prove implication (iii). Recall the fact pointed out at the beginning of this proof that the element $(u+1)/2$ separates all the constituents $\mf D_{\mb K}$ with $\mb K\in\{\R_+,\R_+^{\rm op},\T\R_+,\T\R_+^{\rm op}\}$ and $\mf D_0$ of the spectrum. Let now $\Delta\in\mf D_{\T\R^+}$. Define measures $\mu_{\R_+}:\mc B(\mf D_{\R_+})\to[0,1]$ and $\mu_{\R_+^{\rm op}}:\mc B(\mf D_{\R_+^{\rm op}})\to[0,1]$ through
\begin{align*}
\mu_{\R_+}(A)&=\left\{\begin{array}{ll}
\frac{\mu(A)}{\mu(\mf D_{\R_+})},&\mu(\mf D_{\R_+})>0,\\
0,&\mu(\mf D_{\R_+})=0,
\end{array}\right.\quad\mu_{\R_+^{\rm op}}(B)=\left\{\begin{array}{ll}
\frac{\mu(B)}{\mu(\mf D_{\R_+^{\rm op}})},&\mu(\mf D_{\R_+^{\rm op}})>0,\\
0,&\mu(\mf D_{\R_+^{\rm op}})=0,
\end{array}\right.
\end{align*}
for all $A\in\mc B(\mf D_{\R_+})$ and $B\in\mc B(\mf D_{\R_+^{\rm op}})$. We have
\begin{align*}
1=&\Delta\left(\frac{u+1}{2}\right)=\int_{\mf D_{\T\R_+^{\rm op}}}0\,d\mu(\Delta')+\int_{\mf D_{\R_+}}\Delta'\left(\frac{u+1}{2}\right)\,d\mu(\Delta')+\int_{\mf D_0}\frac{1}{2}\,d\mu(\Delta')\\
&+\int_{\mf D_{\R_+^{\rm op}}}\Delta'\left(\frac{u+1}{2}\right)\,d\mu(\Delta')+\int_{\mf D_{\T\R_+}}1\,d\mu(\Delta')\\
=&\mu(\mf D_{\T\R_+^{\rm op}})\cdot 0+\mu(\mf D_{\R_+^{\rm op}})\underbrace{\int_{\mf D_{\R_+}}\Delta'\left(\frac{u+1}{2}\right)\,d\mu_{\R_+^{\rm op}}(\Delta')}_{<1/2}\\
&+\mu(\mf D_0)\cdot\frac{1}{2}+\mu(\mf D_{\R_+})\underbrace{\int_{\mf D_{\R_+}}\Delta'\left(\frac{u+1}{2}\right)\,d\mu_{\R_+}(\Delta')}_{<1}+\mu(\mf D_{\T\R_+})
\end{align*}
which is strictly less than 1 if $\mu(\mf D_{\T\R_+})<1$. Thus, $\mu(\mf D_{\T\R_+})=1$. Implication in (iv) is proven in essentially the same way.

Let us now assume that $\Delta\in\mf D_0$ and prove the implication in (v). Let us consider $x_k=(|k|-1+u^k)/|k|$ with $k\in\Z\setminus\{0\}$. For any derivation $\Delta'\in\mf D_0$ and $k\in\Z\setminus\{0\}$, we have $\Delta'(x_k)={\rm sign}(k)$. For $\Delta'\in\mf D_{\R_+}\cup\mf D_{\R_+^{\rm op}}$ and $k\in\Z\setminus\{0\}$, we have
$$
\Delta'(x_k)=\frac{\log{\left(\frac{1}{|k|}\big(|k|-1+\Phi_{\Delta'}(u)^k\big)\right)}}{\log{\Phi_{\Delta'}(u)}}\to\left\{\begin{array}{ll}
\infty&{\rm when}\ \Delta'\in\mf D_{\R_+}\ {\rm as}\ k\to\infty\\
&{\rm or}\ \Delta'\in\mf D_{\R_+^{\rm op}}\ {\rm as}\ k\to-\infty\\
0&{\rm when}\ \Delta'\in\mf D_{\R_+}\ {\rm as}\ k\to-\infty\\
&{\rm or}\ \Delta'\in\mf D_{\R_+^{\rm op}}\ {\rm as}\ k\to\infty.
\end{array}\right.
$$
Moreover, when $\Delta'\in\mf D_{\T\R_+}$, we have $\Delta'(x_k)=k$ when $k\geq 1$ and $\Delta'(x_k)=0$ when $k\leq -1$ and, when $\Delta'\in\mf D_{\T\R_+^{\rm op}}$, then $\Delta'(x_k)=0$ when $k\geq 1$ and $\Delta'(x_k)=k$ when $k\leq -1$. Thus, denoting by $H$ the Heaviside function (i.e.,\ $H(x)=1$ for $x\geq0$ and $H(x)=0$ otherwise),
\begin{align*}
{\rm sign}(k)=\Delta(x_k)=&\int_{\mf D_{\R_+}}\frac{\log{\left(\frac{1}{|k|}\big(|k|-1+\Phi_{\Delta'}(u)^k\big)\right)}}{\log{\Phi_{\Delta'}(u)}}\,d\mu(\Delta')\\
&+\int_{\mf D_{\R_+^{\rm op}}}\frac{\log{\left(\frac{1}{|k|}\big(|k|-1+\Phi_{\Delta'}(u)^k\big)\right)}}{\log{\Phi_{\Delta'}(u)}}\,d\mu(\Delta')\\
&+kH(k)\mu(\mf D_{\T\R_+})+kH(-k)\mu(\mf D_{\T\R_+^{\rm op}})+{\rm sign}(k)\mu(\mf D_0)
\end{align*}
where the right-hand side grows limitlessly as $k\to\infty$ if $\mu(\mf D_{\R_+})>0$ or $\mu(\mf D_{\T\R_+})>0$ and as $k\to-\infty$ if $\mu(\mf D_{\R_+^{\rm op}})>0$ or $\mu(\mf D_{\T\R_+^{\rm op}})>0$. Thus, we must have $\mu(\mf D_0)=1$, i.e.,\ ${\rm supp}\,\mu\subseteq\mf D_0$.

Let now $\Delta\in\mf D_k$ with some $k\in\{1,\ldots,d\}$. We prove the implication in (vk). The implication in (v) implies that $\mu(\mf D_0)=1$, so that
\begin{equation}\label{eq:apu2}
\Delta(x)=\sum_{\ell=1}^d\left(\int_{\mf D_\ell}\tilde{\Delta}(x)\,d\mu(\tilde{\Delta})+a_\ell\log{\|x\|_{(\ell)}}\right)
\end{equation}
for all $x\in X$. Let us denote by $\Delta'$ the original derivation giving rise to $\Delta$, i.e.,\ $\Delta'(x)=\|x\|_{(k)}\Delta(x)$ for all $x\in S$ (where it is understood that $\Delta'(0)=0$). Using the Leibniz rule, we have that
$$
\Delta'\left(\frac{x}{y}\right)=\frac{\Delta'(x)\|y\|_{(k)}-\|x\|_{(k)}\Delta'(y)}{\|y\|_{(k)}^2}.
$$
Using this, we have, for all $x\in S$,
$$
\Delta'\left(\frac{xu+1}{x+u}\right)=\frac{\|x\|_{(k)}-1}{\|x\|_{(k)}+1}.
$$
Since $\|(xu+1)/(x+u)\|_{(k)}=1$, we also have
$$
\Delta\left(\frac{xu+1}{x+u}\right)=\frac{\|x\|_{(k)}-1}{\|x\|_{(k)}+1}
$$
for all $x\in S$ for the modified derivation $\Delta\in\mf D_k$. We have similar forms for all the other derivations $\tilde{\Delta}\in\mf D_\ell$, $\ell\neq k$. Thus, from \eqref{eq:apu2}, we obtain
\begin{align*}
\frac{\|x\|_{(k)}-1}{\|x\|_{(k)}+1}&=\Delta\left(\frac{xu+1}{x+u}\right)=\sum_{\ell=1}^d\frac{\|x\|_{(\ell)}-1}{\|x\|_{(\ell)}+1}\mu(\mf D_\ell)
\end{align*}
for all $x\in S$. Let $x\in S$ be now such that $\|x\|_{(k)}=2$ and $\|x\|_{(\ell)}=1$ for all $\ell\neq k$ (recall that $\|\cdot\|:S\to\R_{>0}^d\cup\{(0,\ldots,0)\}$ is surjective). Substituting this into the above formula yields $1/3=\mu(\mf D_k)/3$, i.e.,\ $\mu(\mf D_k)=1$.

Let us now assume that $\Delta\in\mf D_{\R_+}$ and prove that ${\rm supp}\,\mu=\{\Delta\}$. We first show that $\nu:\mc B(\mf D)\to[0,1]$, $\nu(B)=\int_B f(\Delta')\,d\mu(\Delta')$, is a probability measure. We now know that $\mu$ is supported on $\mf D_{\R_+}$ where $f\geq0$, so that $\nu$ is a positive measure. Using again $x_k=(u^{-k}+1)/2$ for $k\in\N$, we have
\begin{align*}
\log{\frac{1}{2}}&=\lim_{k\to\infty}\log{\Phi_\Delta(x_k)}=\lim_{k\to\infty}\int_{\mf D_{\R_+}}f(\Delta')\log{\Phi_{\Delta'}(x_k)}\,d\mu(\Delta')\\
&=\int_{\mf D_{\R_+}}f(\Delta')\lim_{k\to\infty}\log{\Phi_{\Delta'}(x_k)}\,d\mu(\Delta')\\
&=\left(\log{\frac{1}{2}}\right)\int_{\mf D_{\R_+}}f(\Delta')\,d\mu(\Delta')=\left(\log{\frac{1}{2}}\right)\nu(\mf D_{\R_+}),
\end{align*}
where we have used the monotone convergence theorem in the third equality. Thus, $\nu(\hat{\mf D})=\nu(\mf D_{\R_+})=1$, so that $\nu$ is indeed a probability measure.

Using Jensen's inequality, we have, for all $x\in S_N$,
\begin{align}
\Phi_\Delta(x)&=\exp{\log{\Phi_\Delta(x)}}=\exp{\left(\int_{\mf D_{\R_+}}f(\Delta')\log{\Phi_{\Delta'}(x)}\,d\mu(\Delta')\right)}\nonumber\\
&=\exp{\left(\int_{\mf D_{\R_+}}\log{\Phi_{\Delta'}(x)}\,d\nu(\Delta')\right)}\nonumber\\
&\leq\int_{\mf D_{\R_+}}\exp{\log{\Phi_{\Delta'}(x)}}\,d\nu(\Delta')=\int_{\mf D_{\R_+}}\Phi_{\Delta'}(x)\,d\nu(\Delta').\label{eq:Jensen}
\end{align}
Let us now consider $x(s):=sx+(1-s)1$ for a fixed $x\in S_N$ and $s\in[0,1]\cap\Q$. Naturally, the following derivative can be equivalently viewed as the limit $\lim_{k\to\infty} k\big[\Phi_\Delta\big(x(1/k)\big)-1\big]$. With straightforward calculation, we find that the easily seen condition
$$
\left.\frac{d}{ds}\Phi_\Delta\big(x(s)\big)\right|_{s=0}=\Phi_\Delta(x)-1
$$
together with
$$
\Phi_\Delta\big(x(s)\big)=\exp{\left(\int_{\mf D_{\R_+}}\log{\Phi_{\Delta'}\big(x(s)\big)}\,d\nu(\Delta')\right)}
$$
implies (together with the Leibniz integral rule)
$$
\Phi_\Delta(x)-1=\int_{\mf D_{\R_+}}\big(\Phi_{\Delta'}(x)-1\big)\,d\nu(\Delta')=\int_{\mf D_{\R_+}}\Phi_{\Delta'}(x)\,d\nu(\Delta')-1.
$$
This means that we actually have an equality in \eqref{eq:Jensen}. This means that $\nu$ is supported by a set $B\in\mc B(\hat{\mf D})$ such that $A\ni a\mapsto\exp{a}$ is affine, where $A\subset\R$ is the image of the map $B\ni\Delta'\mapsto\Phi_{\Delta'}(x)\in\R$. Thus, $A$ must be a singleton. This means that all $\Phi_{\Delta'}(x)$, $\Delta'\in{\rm supp}\,\nu$, coincide. This holds for all $x\in S_N$, so all $\Delta'$ in the support of $\nu$ coincide. Thus, this support is a singleton, obviously ${\rm supp}\,\mu={\rm supp}\,\nu=\{\Delta\}$. Especially, $\Delta$ is not a convex combination of different elements of $\mf D_{\R_+}$ nor of the rest of $\hat{\mf D}$, i.e.,\ $\Delta\in{\rm ext}\,\tilde{\mf D}$. The case for $\Delta\in\mf D_{\R_+^{\rm op}}$ is dealt with in a similar fashion.

Let us now assume that $\Delta\in\mf D_{\T\R_+}$. For all $x,y\in S$, we have
\begin{align*}
\Delta\left(\frac{x+y}{2}\right)&=\frac{\log{\Phi_\Delta\left(\frac{x+y}{2}\right)}}{\log{\Phi_\Delta(u)}}=\frac{\log{\max\{\Phi_\Delta(x),\Phi_\Delta(y)\}}}{\log{\Phi_\Delta(u)}}\nonumber\\
&=\frac{\max\{\log{\Phi_\Delta(x)},\log{\Phi_\Delta(y)}\}}{\log{\Phi_\Delta(u)}}=\max\{\Delta(x),\Delta(y)\}.
\end{align*}
Moreover, for all $x\in S$,
$$
\Delta(x)=\int_{\mf D_{\T\R_+}}\Delta'(x)\,d\mu(\Delta')+\sum_{k=1}^d a_k\log{\|x\|_{(k)}},
$$
since ${\rm supp}\,\mu\subseteq\mf D_{\T\R_+}$. We shall subsequently use these facts.

Let us make the counter assumption that there is $\Delta_1\in\mf D_{\T\R_+}\setminus\{\Delta\}$ such that $\Delta_1\in{\rm supp}\,\mu$. Again following the proof of Proposition 8.5 of \cite{FritzII}, we find $x_0\in S_N$ (i.e.,\ $x_0\sim 1$), such that $\Delta(x_0)\neq\Delta_1(x_0)$. We may assume that $\Delta(x_0)>\Delta_1(x_0)$; the other case is treated similarly to what follows. We are free to assume that $\Delta(x_0)>0>\Delta_1(x_0)$. Indeed, if this is not already the case, we can choose $m,n\in\N$ such that $\Delta(x_0)>n/m>\Delta_1(x_0)$. Replacing $x_0$ with $x'_0:=x^m u^{-n}$, the desired inequalities hold.

We may define the sets
$$
V_\varepsilon:=\big\{\Delta'\in\mf D_{\T\R_+}\,\big|\,|\Delta_1(x_0)-\Delta'(x_0)|<\varepsilon\big\},\quad\varepsilon>0,
$$
which are clearly neighbourhoods of $\Delta_1$. As $\Delta_1\in{\rm supp}\,\mu$, we have $\mu(V_\varepsilon)>0$ for all $\varepsilon>0$ and, for suitably small $\varepsilon>0$ (namely, when $\Delta\notin V_\varepsilon$), $\mu(V_\varepsilon)<1$. Similarly as in the proof of Lemma \ref{lemma:extlemma}, we may define divergences $D_\varepsilon$ and $D'_\varepsilon$,
\begin{align*}
D_\varepsilon(x)&=\frac{1}{\mu(V_\varepsilon)}\left(\int_{V_\varepsilon}\Delta'(x)\,d\mu(\Delta')+\sum_{k=1}^d a_k\log{\|x\|_{(k)}}\right),\\
D'_\varepsilon(x)&=\frac{1}{\mu(\mf D_{\T\R_+}\setminus V_\varepsilon)}\left(\int_{\mf D_{\T\R_+}\setminus V_\varepsilon}\Delta'(x)\,d\mu(\Delta')+\sum_{k=1}^d a_k\log{\|x\|_{(k)}}\right),
\end{align*}
so that $\Delta=\mu(V_\varepsilon)D_\varepsilon+\big(1-\mu(V_\varepsilon)\big)D'_\varepsilon$. We now have that $|D_\varepsilon(x_0)-\Delta_1(x_0)|<\varepsilon$. Especially, when $\varepsilon>0$ is small enough, we have that $D_\varepsilon(x_0)<0$ since this number is $\varepsilon$-close to $\Delta_1(x_0)<0$. Thus, we find $t\in(0,1)$ and divergences $D$ and $D'$ with $tD+(1-t)D'=\Delta$ where $\Delta\neq D$ and, even more, $D(x_0)<0$. Moreover, there is a set $B\in\mc B(\hat{\mf D})$ with $t=\mu(B)$ and
\begin{align*}
D(x)&=\frac{1}{t}\left(\int_B\Delta'(x)\,d\mu(\Delta')+\sum_{k=1}^d a_k\log{\|x\|_{(k)}}\right),\\
D(x)&=\frac{1}{1-t}\left(\int_{\mf D_{\T\R_+}\setminus B}\Delta'(x)\,d\mu(\Delta')+\sum_{k=1}^d a_k\log{\|x\|_{(k)}}\right).
\end{align*}

We have, for all $x,y\in S_N$,
\begin{align*}
D\left(\frac{x+y}{2}\right)&=\frac{1}{t}\int_B \Delta'\left(\frac{x+y}{2}\right)\,d\mu(\Delta')\\
&=\frac{1}{t}\int_B \max\{\Delta'(x),\Delta'(y)\}\,d\mu(\Delta')\\
&\geq\frac{1}{t}\max\left\{\int_B \Delta'(x)\,d\mu(\Delta'),\,\int_B \Delta'(y)\,d\mu(\Delta')\right\}\\
&=\max\{D(x),D(y)\}.
\end{align*}
Similarly, $D'\big((x+y)/2\big)\geq\max\{D'(x),D'(y)\}$. Thus, with $x,y\in S_N$,
\begin{align*}
\max\{\Delta(x),\Delta(y)\}&=\Delta\left(\frac{x+y}{2}\right)=tD\left(\frac{x+y}{2}\right)+(1-t)D'\left(\frac{x+y}{2}\right)\\
&\geq t\max\{D(x),D(y)\}+(1-t)\max\{D'(x),D'(y)\}\\
&\geq \max\big\{tD(x)+(1-t)D'(x),\,tD(y)+(1-t)D'(y)\big\}\\
&=\max\{\Delta(x),\Delta(y)\},
\end{align*}
implying that we have equalities in all the inequalities above. Especially, $D\big((x_0+1)/2\big)=\max\{D(x_0),0\}=0$. Similarly, $D'\big((x_0+1)/2\big)=\max\{D'(x_0),0\}$. We now have
\begin{align*}
\Delta(x_0)&=\max\{\Delta(x_0),0\}=\Delta\left(\frac{x_0+1}{2}\right)=tD\left(\frac{x_0+1}{2}\right)+(1-t)D'\left(\frac{x_0+1}{2}\right)\\
&=(1-t)\max\{D'(x_0),0\}=(1-t)D'(x_0),
\end{align*}
where the final equality is due to the fact that, if $0\geq D'(x_0)$, the final equality would give the contradictory $0<\Delta(x_0)=0$. Combining the above with
$$
\Delta(x_0)=tD(x_0)+(1-t)D'(x_0),
$$
we arrive at the contradictory $0>D(x_0)=0$. Thus, ${\rm supp}\,\mu$ must be a singleton, i.e.,\ $\mu=\delta_\Delta$. Consequently, $\Delta\in{\rm ext}\,\hat{\mf D}$. The case for $\Delta\in\mf D_{\T\R_+^{\rm op}}$ is dealt with similarly.
\end{proof}

\section{Implications for quantum divergences and relative entropies}\label{sec:quantum}

Theorems \ref{thm:Vergleichsstellensatz}, \ref{thm:barycentre}, and \ref{theor:extreme} can be applied in many settings, including the case of the comparison of finite statistical experiments which was discussed in Example \ref{ex:classical}. These results may also be applied in a quantum-theoretic counterpart of this classical problem which we shall call {\it quantum majorization}. For the duration of this section, we fix $d\in\N$ and study tuples $\vec{\rho}=\big(\rho^{(k)}\big)_{k=1}^d$ of positive semi-definite operators $\rho^{(k)}$ on a finite-dimensional Hilbert space. Typically these operators are quantum states, i.e.,\ they are of unit trace. From now on, by `positive operator' we mean `positive semi-definite operator'.

\begin{definition}\label{def:DQ}
We say that a tuple $\vec{\rho}=\big(\rho^{(k)}\big)_{k=1}^d$ of positive operators {\it majorizes} another tuple $\vec{\sigma}=\big(\sigma^{(k)}\big)_{k=1}^d$ of positive operators if there is a quantum channel (completely positive trace-preserving linear map) $\Phi$ such that $\Phi\big(\rho^{(k)}\big)=\sigma^{(k)}$ for $k=1,\ldots,d$. The finite-dimensional Hilbert space where $\sigma^{(k)}$ reside may differ from the finite-dimensional Hilbert space where $\rho^{(k)}$ reside. In this case, we denote $\vec{\rho}\succeq\vec{\sigma}$. Denote by $\mc T_n$ the set of non-zero positive operators on $\C^n$ for all $n\in\N$. We denote by $\mf D_Q^d$ the set of maps $D:\bigcup_{n=1}^\infty\mc T_n^d\to\R$ satisfying the following conditions:
\begin{itemize}
\item[(a)] For any tuples $\vec{\rho}=\big(\rho^{(k)}\big)_{k=1}^d$ and $\vec{\sigma}=\big(\sigma^{(k)}\big)_{k=1}^d$ of positive operators, we have
$$
D\big(\rho^{(1)}\otimes\sigma^{(1)},\ldots,\rho^{(d)}\otimes\sigma^{(d)}\big)=D(\vec{\rho})+D(\vec{\sigma})\quad\textrm{(extensivity).}
$$
\item[(b)] Whenever $\vec{\rho}\succeq\vec{\sigma}$, we have $D(\vec{\rho})\geq D(\vec{\sigma})$ (data processing inequality).
\end{itemize}
We call elements of $\mf D_Q^d$ {\it $d$-variate quantum divergences}. In the $d=2$ case, we call these {\it quantum relative entropies} since this term has already been firmly established in the bivariate case.
\end{definition}

The set $\mf D_Q^2$ (the bivariate case) has been extensively studied. This set includes, e.g.,\ the following relative entropies:
\begin{enumerate}
\item {\it $\alpha$-$z$ relative entropies} $D_{\alpha,z}$ \cite{Audenaert_Datta_2015,Hiai_Jencova_2024,Jaksic2012},
\begin{equation}\label{eq:alphazed}
D_{\alpha,z}(\rho\|\sigma)=\frac{1}{\alpha-1}\log{\tr{\left(\rho^{\frac{\alpha}{2z}}\sigma^{\frac{1-\alpha}{z}}\rho^{\frac{\alpha}{2z}}\right)^z}}.
\end{equation}
These always satisfy condition (a) and, according to \cite{Carlen_et_al_2018,Hiai2013,Zhang2020}, condition (b) is satisfied if and only if
\begin{equation}\label{eq:alphazedcond}
0<\alpha<1,\ \max\{\alpha,1-\alpha\}\leq z\quad{\rm or}\quad\alpha>1,\ \max\{\alpha/2,\alpha-1\}\leq z\leq\alpha.
\end{equation}
The choice $z=1$ results in the Petz-type quantum relative entropies \cite{Petz_85,Petz_1986} and $\alpha=z$ leads to the `sandwiched' quantum R\'{e}nyi relative entropies \cite{Muller-Lennert_et_al_2013}. The pointwise limit $(\alpha,z)\to(1,1)$ of these divergences coincides with the Umegaki quantum divergence $D_{\rm U}$,
\begin{equation}\label{eq:Umegaki}
D_{\rm U}(\rho\|\sigma)=\tr{\rho(\log{\rho}-\log{\sigma})}.
\end{equation}
This is also an element of $\mf D_Q^2$.
\item {\it Matrix means and the related Kubo-Ando quantum relative entropies} $D^{\rm mm}_\alpha$ \cite{Belavkin_Staszewski_82,Kubo_Ando_79},
\begin{equation}\label{eq:KuboAndo}
D^{\rm mm}_\alpha(\rho\|\sigma)=\frac{1}{\alpha-1}\log{\tr{\sigma^{1/2}\left(\sigma^{-1/2}\rho\sigma^{-1/2}\right)^\alpha\sigma^{1/2}}}.
\end{equation}
These relative entropies always satisfy condition (a) and, when $\alpha\in[0,2]$ they also satisfy (b). The expression inside the trace coincides with the unique positive operator minimizing the $\alpha$-weighted Euclidean distances to $\rho$ and $\sigma$. At the pointwise limit $\alpha\to 1$ these divergences converge to the Belavkin-Staszewski quantum relative entropy $D_{\rm BS}$,
\begin{equation}\label{eq:BS}
D_{\rm BS}(\rho\|\sigma)=-\tr{\sigma\log{\left(\sigma^{-1/2}\rho\sigma^{-1/2}\right)}}
\end{equation}
which is also an element of $\mf D_Q^2$.
\item {\it The quantum max-relative entropy} $D_{\rm max}$,
\begin{equation}\label{eq:maxdiv}
D_{\rm max}(\rho\|\sigma)=\inf\big\{\lambda\geq0\big|\rho\leq\exp{(\lambda)}\sigma\big\}.
\end{equation}
\end{enumerate}

The matrix means of item (2) above can also generalized in the general $d\in\N$ case \cite{Bhatia_Karandikar_2012}: Define the Euclidean distance $\delta_2$ between positive operators $\rho$ and $\sigma$ through
$$
\delta_2(\rho,\sigma)=\sqrt{\sum_j\left(\log{\lambda_j(\rho^{-1}\sigma)}\right)^2}
$$
where $\lambda_j(\tau)$ are the singular values of $\tau$. Given a $d$-tuple $\vec{\rho}=\big(\rho^{(k)}\big)_{k=1}^d$ of positive operators on a finite-dimensional Hilbert space and a probability vector $\underline{\alpha}=(\alpha_1,\ldots,\alpha_d)$, the (Euclidean) matrix mean $G_{\underline{\alpha}}(\vec{\rho})$ of $\vec{\rho}$ is the positive operator $G$ minimizing the expression $\sum_{k=1}^d\alpha_k\delta_2\big(G,\rho^{(k)}\big)^2$. We may now define the $d$-variate divergence $D^{\rm mm}_{\underline{\alpha}}$ through
\begin{equation}\label{eq:MatrixMeanDiv}
D^{\rm mm}_{\underline{\alpha}}(\vec{\rho})=\frac{1}{\max_{1\leq k\leq d}\alpha_k-1}\log{\tr{G_{\underline{\alpha}}(\vec{\rho})}}.
\end{equation}
These divergences satisfy both (a) and (b) whenever $\underline{\alpha}$ is a non-extreme probability vector. The additivity property (a) is easily seen and the monotonicity property (b) follows from Theorem 4.1 and the discussion of Section 5 of \cite{Bhatia_Karandikar_2012}.

A number of multivariate quantum divergences (barycentric R\'{e}nyi divergences) were introduced in \cite{mosonyi2024geometric}. Let us briefly discuss a subset of them. Consider a finite set $A$ and fix probability vectors
\begin{align}
\underline{\beta}&=(\beta_a)_{a\in A},\\
\underline{\beta}^a&=(\beta^a_1,\ldots,\beta^a_d),\quad a\in A.
\end{align}
Using these and fixing $\alpha,z\geq0$ satisfying \eqref{eq:alphazedcond}, we define the `staged' quantity $D_{\alpha,z,\underline{\beta},(\underline{\beta}^a)_{a\in A}}$,
\begin{equation}\label{eq:staged}
D_{\alpha,z,\underline{\beta},(\underline{\beta}^a)_{a\in A}}(\vec{\rho})=D_{\alpha,z}\left(\rho^{(1)}\middle\|G_{\underline{\beta}}\big(\big[G_{\underline{\beta}^a}(\vec{\rho})\big]_{a\in A}\big)\right)
\end{equation}
for all $\vec{\rho}=\big(\rho^{(1)},\ldots,\rho^{(d)}\big)$. This type of `staging' can be also done in a number of different ways too many to list here; similar definitions have also been made in \cite{BuVra2021,Furuya_et_al_2023}. Using the monotonicity properties of the matrix mean \cite{Bhatia_Karandikar_2012}, one can show that, for any quantum channel $\Phi$ and any $d$-tuple $\vec{\rho}=\big(\rho^{(1)},\ldots,\rho^{(d)}\big)$,
$$
G_{\underline{\beta}}\Big(\Big[G_{\underline{\beta}^a}\big(\Phi(\rho^{(1)}),\ldots,\Phi\big(\rho^{(d)}\big)\big)\Big]_{a\in A}\big)\geq\Phi\Big(G_{\underline{\beta}}\big(\big[G_{\underline{\beta}^a}(\vec{\rho})\big]_{a\in A}\big)\Big)
$$
Using this, one can show that the map $D_{\alpha,z,\underline{\beta},(\underline{\beta}^a)_{a\in A}}$, is monotone under the quantum majorization preorder $\preceq$. It is also easily seen to satisfy condition (b) of Definition \ref{def:DQ}, i.e.,\ $D_{\alpha,z,\underline{\beta},(\underline{\beta}^a)_{a\in A}}$ is a $d$-variate quantum divergence.

In what follows we will see that, according to Theorems \ref{thm:barycentre} and \ref{theor:extreme} we may express every divergence $D\in\mf D_Q^d$ as a barycentre over the set of extreme points of $\mf D_Q^d$ which coincides with a subset of a test spectrum of a particular preordered semiring. The bulk of the test spectrum is extreme in $\mf D_Q^d$ and its elements are characterized by an extra additivity property. Interestingly, in the case $d=2$, all the $\alpha$-$z$-relative entropies satisfying condition (b) of Definition \ref{def:DQ} and the bivariate matrix mean relative entropies $D^{\rm mm}_\alpha$ with $\alpha\in[0,2]$ are within the set of extreme points of $\mf D_Q^2$. Moreover, the $d$-variate matrix mean divergences $D^{\rm mm}_{\underline{\alpha}}$ and the `staged' divergences $D_{\alpha,z,\underline{\beta},(\underline{\beta}^a)_{a\in A}}$ are also extreme in $\mf D_Q^d$ whenever $\underline{\alpha}$ is a non-extreme probability vector.

\subsection{The quantum majorization semiring}\label{subsec:quantumsemiring}

For any $n\in\N$, we denote by $\mc Q^d_n$ the set of $d$-tuples $\vec{\rho}=\big(\rho^{(1)},\ldots,\rho^{(d)}\big)$ of $\rho^{(k)}\in\mc T_n$ with coinciding supports, i.e.,
$$
{\rm supp}\,\rho^{(1)}=\cdots={\rm supp}\,\rho^{(d)}.
$$
Furthermore, we denote
$$
\mc Q^d:=\bigcup_{n=1}^\infty \mc Q^d_n.
$$
For $\vec{\rho}=\big(\rho^{(1)},\ldots,\rho^{(d)}\big)$ and $\vec{\sigma}=\big(\sigma^{(1)},\ldots,\sigma^{(d)}\big)$, we define
\begin{align*}
\vec{\rho}\boxplus\vec{\sigma}&=\big(\rho^{(1)}\oplus\sigma^{(1)},\ldots,\rho^{(d)}\oplus\sigma^{(d)}\big),\\
\vec{\rho}\boxtimes\vec{\sigma}&=\big(\rho^{(1)}\otimes\sigma^{(1)},\ldots,\rho^{(d)}\otimes\sigma^{(d)}\big).
\end{align*}
We also denote $\vec{\rho}\succeq\vec{\sigma}$ whenever there exists a quantum channel (a completely positive trace-preserving linear map) $\Phi$ such that $\Phi\big(\rho^{(k)}\big)=\sigma^{(k)}$ for $k=1,\ldots,d$. Note that this preorder is exactly the quantum majorization order set up in Definition \ref{def:DQ}.

Suppose now that $\vec{\rho}=\big(\rho^{(1)},\ldots,\rho^{(d)}\big)\in\mc Q^d_n$ and $\vec{\sigma}=\big(\sigma^{(1)},\ldots,\sigma^{(d)}\big)\in\mc Q^d_m$. We denote $\vec{\rho}\approx\vec{\sigma}$ if there is $\ell\geq n,m$ and isometries $U:\C^n\to\C^\ell$ and $V:\C^m\to\C^\ell$ such that $U\rho^{(k)}U^*=V\sigma^{(k)}V^*$ for $k=1,\ldots,d$. We denote the $\approx$-equivalence class defined by $\vec{\rho}$ by $[\vec{\rho}]$. We lift $\boxplus$, $\boxtimes$, and $\succeq$ to the coset space $\mc Q^d/\!\approx$ to obtain the {\it quantum majorization semiring}
$$
Q^d=(\mc Q^d/\!\approx,0,1,+,\cdot,\rgeq)
$$
where $0=[(0,\ldots,0)]$ and $1=[(1,\ldots,1)]$ where the equivalence classes are determined here by elements residing in $\mc Q^d_1$. From now on, with a slight abuse of notation, we identify elements of $Q^d$ with their representatives in $\mc Q^d$ to lighten our notation. This should cause no confusion.

We easily see that $Q^d$ is a preordered semidomain. We may also define the map $\|\cdot\|:Q^d\to\R_{>0}^d\cup\{(0,\ldots,0)\}$ through
$$
\left\|\big(\rho^{(1)},\ldots,\rho^{(d)}\big)\right\|=\left(\tr{\rho^{(1)}},\ldots,\tr{\rho^{(d)}}\right).
$$
Clearly, $\|\vec{\rho}\|=(0,\ldots,0)$ if and only if $\vec{\rho}=(0,\ldots,0)$, i.e.,\ $\|\cdot\|$ has trivial kernel. If $\vec{\rho}\succeq\vec{\sigma}$, then $\|\vec{\rho}\|=\|\vec{\sigma}\|$. This is because the operators in $\vec{\sigma}$ are obtained from the operators of $\vec{\rho}$ through a trace-preserving map. Let now $\|\vec{\rho}\|=\|\vec{\sigma}\|=(a_1,\ldots,a_d)$. Since the trace is a (completely) positive linear map and $\tr{\rho^{(k)}}=a_k=\tr{\sigma^{(k)}}$ for $k=1,\ldots,d$, we have
$$
\vec{\rho}\succeq(a_1,\ldots,a_d)\preceq\vec{\sigma},
$$
i.e.,\ $\vec{\rho}\sim\vec{\sigma}$. Thus, $\|\vec{\rho}\|=\|\vec{\sigma}\|$ $\Rightarrow$ $\vec{\rho}\sim\vec{\sigma}$, and $\|\cdot\|$ satisfies the conditions of Definition \ref{def:deg}, i.e.,\ $Q^d$ is of degeracy $d$. We also know that, if $\vec{\tau}\in\mc Q^d$ is a power universal, then $\|\vec{\tau}\|=(1,\ldots,1)$. The next lemma shows that power universals indeed exist (i.e.,\ $Q^d$ is of polynomial growth) and that the power universals are plentiful and can be easily characterized.

\begin{lemma}\label{lemma:Quniversal}
The preordered semidomain $Q^d$ is of polynomial growth and $\vec{\tau}=\big(\tau^{(1)},\ldots,\tau^{(d)}\big)$ is a power universal if and only if $\tr{\tau^{(k)}}=1$ for all $k$ and $\tau^{(k)}\neq\tau^{(\ell)}$ whenever $k\neq\ell$.
\end{lemma}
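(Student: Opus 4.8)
The plan is to prove the two directions separately, with the converse (which simultaneously gives polynomial growth) carrying the weight. For \textbf{necessity}, suppose $\vec{\tau}$ is a power universal. Then $\vec{\tau}\sim 1$, hence $\|\vec{\tau}\|=(1,\ldots,1)$ by the observation made right after Theorem~\ref{thm:Vergleichsstellensatz}, which is precisely $\tr{\tau^{(k)}}=1$ for every $k$. Suppose moreover that $\tau^{(k)}=\tau^{(\ell)}$ for some $k\neq\ell$. Choose a tuple $\vec{\rho}\in\mc Q^d$ of full-rank unit-trace operators on $\C^2$ with $\rho^{(k)}\neq\rho^{(\ell)}$, and let $\vec{\pi}$ be the tuple all of whose entries equal one fixed full-rank state $\pi$ on $\C^2$. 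The replacement channel $X\mapsto\tr{X}\,\pi$ witnesses $\vec{\rho}\succeq\vec{\pi}$. However, in $\vec{\pi}\boxtimes\vec{\tau}^{\boxtimes m}$ the $k$-th and $\ell$-th entries both equal $\pi\otimes(\tau^{(k)})^{\otimes m}$, so any channel realising $\vec{\pi}\boxtimes\vec{\tau}^{\boxtimes m}\succeq\vec{\rho}$ would have to send a single operator simultaneously to $\rho^{(k)}$ and to $\rho^{(\ell)}\neq\rho^{(k)}$; no such channel exists for any $m$, contradicting power universality.

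For the \textbf{converse} assume $\tr{\tau^{(k)}}=1$ for all $k$ and $\tau^{(k)}\neq\tau^{(\ell)}$ for $k\neq\ell$ (such tuples plainly exist, so this will also establish polynomial growth), and put $u=\vec{\tau}$. Given $x\succeq y$ in $Q^d$ we must produce $k$ with $yu^k\succeq x$. If $x=0$ then $y=0$ and the claim is trivial, so let $x=\vec{\rho}\neq 0$, $y=\vec{\sigma}$, with common norm $\|x\|=\|y\|=(a_1,\ldots,a_d)$ and all $a_j>0$. The rescaled tuple $\hat{\vec{\rho}}=(\rho^{(1)}/a_1,\ldots,\rho^{(d)}/a_d)$ has unit-trace entries sharing the common support of $\vec{\rho}$. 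I first reduce to the claim that for every such $\hat{\vec{\rho}}$ there is $k\in\N$ with $\vec{\tau}^{\boxtimes k}\succeq\hat{\vec{\rho}}$: indeed, if $\Psi$ is a channel with $\Psi\big((\tau^{(j)})^{\otimes k}\big)=\rho^{(j)}/a_j$ for all $j$, then the channel $\Phi$ obtained by first tracing out the first tensor factor and then applying $\Psi$ satisfies $\Phi\big(\sigma^{(j)}\otimes(\tau^{(j)})^{\otimes k}\big)=a_j\cdot(\rho^{(j)}/a_j)=\rho^{(j)}$, i.e.\ $\vec{\sigma}\boxtimes\vec{\tau}^{\boxtimes k}\succeq\vec{\rho}$, which is exactly $yu^k\succeq x$.

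To prove the claim I would invoke standard multiple-hypothesis quantum state discrimination: since $\tau^{(1)},\ldots,\tau^{(d)}$ (acting on some $\C^{d_\tau}$) are pairwise distinct, for each $k$ there is a $d$-outcome POVM $\{E^{(k)}_1,\ldots,E^{(k)}_d\}$ on $(\C^{d_\tau})^{\otimes k}$ whose confusion matrix $M^{(k)}_{ij}:=\tr{E^{(k)}_j(\tau^{(i)})^{\otimes k}}$ satisfies $M^{(k)}\to\mathrm{Id}$ as $k\to\infty$. Each $M^{(k)}$ has all row sums equal to $1$ (the $i$-th row sums to $\tr{(\tau^{(i)})^{\otimes k}}=1$), so for $k$ large it is invertible with $(M^{(k)})^{-1}\to\mathrm{Id}$ and with all row sums of $(M^{(k)})^{-1}$ again equal to $1$. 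Set $\omega^{(k)}_i:=\sum_j\big[(M^{(k)})^{-1}\big]_{ij}\,(\rho^{(j)}/a_j)$; these operators are Hermitian, of trace $1$, supported on the common support $S$ of the $\rho^{(j)}$, and they converge to $\rho^{(i)}/a_i$, which is strictly positive on $S$, so for $k$ large they are genuine density operators. The measure-and-prepare channel $\Psi^{(k)}(X):=\sum_i\tr{E^{(k)}_i X}\,\omega^{(k)}_i$ is then completely positive and trace-preserving, and a direct computation gives $\Psi^{(k)}\big((\tau^{(j)})^{\otimes k}\big)=\sum_i M^{(k)}_{ji}\,\omega^{(k)}_i=\sum_\ell\big(M^{(k)}(M^{(k)})^{-1}\big)_{j\ell}\,(\rho^{(\ell)}/a_\ell)=\rho^{(j)}/a_j$, which proves the claim and hence that $Q^d$ is of polynomial growth.

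The main obstacle is getting an \emph{exact} reconstruction rather than an asymptotically accurate one: naively preparing $\rho^{(j)}/a_j$ upon discrimination outcome $j$ reproduces $\hat{\vec{\rho}}$ only up to a vanishing error. The correction step above resolves this, and it works precisely because the confusion matrix $M^{(k)}$ has unit row sums (so its inverse preserves traces and keeps the $\omega^{(k)}_i$ normalised) and because the corrected preparation states, being linear combinations of the $\rho^{(j)}/a_j$, remain inside the common support $S$ and hence stay positive once $k$ is large. The only external input is the standard fact that pairwise distinct states become jointly perfectly distinguishable in the i.i.d.\ limit, which follows from the fidelity of $\tau^{(i)}$ and $\tau^{(j)}$ being strictly less than $1$ for $i\neq j$ together with a tournament of pairwise tests.
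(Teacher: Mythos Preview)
Your proof is correct and follows essentially the same route as the paper's. Both directions match: for necessity you use the observation $\|u\|=(1,\ldots,1)$ and a contradiction via a tuple with distinct $k$-th and $\ell$-th entries (the paper takes $\vec{\pi}=(1,\ldots,1)$ in dimension one, you use a general repeated state, which is equivalent); for sufficiency you reduce to $\vec{\tau}^{\boxtimes k}\succeq\hat{\vec{\rho}}$ and build the exact channel via asymptotic hypothesis testing, inversion of the confusion matrix, and a measure-and-prepare map with corrected output states---exactly the construction the paper carries out (citing \cite{Li_2016,Nussbaum_Szkola_2009} for the discrimination step, where your appeal to a pairwise-test tournament is a valid alternative justification).
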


\begin{proof}
Let us introduce some notation that we use throughout this proof. A map $M:\{1,\ldots,m\}\to\mc E(\hil)$ is a (finite-outcome) POVM in a Hilbert space $\hil$ ($\mc E(\hil)$ denoting the set of effect operators, i.e.,\ the operator interval $[0,\id_\hil]$) if $M(1)+\cdots+M(m)=\id_\hil$. For any POVM $M:\{1,\ldots,m\}\to\mc E(\hil)$ and any positive (trace-class) operator $\tau$, we define the vector $p^M_\tau\in\R_+^m$ according to
$$
p^M_\tau=\left(\tr{\tau M(1)},\ldots,\tr{\tau M(m)}\right).
$$
When $\tau$ is a state, $p^M_\tau$ is a probability vector describing the outcome statistics of the quantum observable described by $M$ in the state $\tau$.

Let us pick $\vec{\tau}=\big(\tau^{(1)},\ldots,\tau^{(d)}\big)\in\mc Q^d$ where $\tau^{(k)}$ are trace-one operators on a finite-dimensional Hilbert space $\hil$ and, whenever $k\neq\ell$, $\tau^{(k)}\neq\tau^{(\ell)}$. According to \cite{Li_2016} (see also \cite{Nussbaum_Szkola_2009} for the case $d=2$), there exists a sequence $\{M_n\}_{n=1}^\infty$ of POVMs $M_n:\{1,\ldots,d\}\to\mc E(\hil^{\otimes n})$ such that, for all $k\in\{1,\ldots,d\}$ and $n\in\N$,
\begin{align*}
p^{M_n}_{(\tau^{(k)})^{\otimes n}}&=\left(\tr{\big(\tau^{(k)}\big)^{\otimes n}M_n(\ell)}\right)_{\ell=1}^d\\
&=\left(\varepsilon_{k,1}^{(n)},\ldots,\varepsilon_{k,k-1}^{(n)},1-\varepsilon_{k,k}^{(n)},\varepsilon_{k,k+1}^{(n)},\ldots,\varepsilon_{k,d}^{(n)}\right)
\end{align*}
where $\varepsilon_{k,\ell}^{(n)}\leq\exp{\big(-nC(\vec{\tau})\big)}$ where, in turn,
\begin{align*}
C(\vec{\tau})&:=\min_{k,\ell:k\neq\ell}\sup_{0\leq s\leq 1}\left\{-\log{\tr{\big(\tau^{(k)}\big)^s\big(\tau^{(\ell)}\big)^{1-s}}}\right\}\\
&\geq\min_{k,\ell:k\neq\ell}\left\{-\log{\tr{\big(\tau^{(k)}\big)^{1/2}\big(\tau^{(\ell)}\big)^{1/2}}}\right\}>-\max_{k,\ell:k\neq\ell}\log{\sqrt{\tr{\tau^{(k)}}\tr{\tau^{(\ell)}}}}=0
\end{align*}
where the final strict inequality is due to the Cauchy-Schwarz inequality, strictness following from the fact that the states $\tau^{(k)}$ all differ from each other. When we view
$$
U^{(n)}:=\left(p^{M_n}_{(\tau^{(1)})^{\otimes n}},\ldots,p^{M_n}_{(\tau^{(d)})^{\otimes n}}\right)
$$
as a $(d\times d)$-matrix with columns $p^{M_n}_{(\tau^{(k)})^{\otimes n}}$, we see that $U^{(n)}$ converges to the $(d\times d)$ identity matrix $I_d$ as $n\to\infty$. Thus, when $n$ is large enough, $U^{(n)}$ enters a neighbourhood of the identity matrix where all matrices are invertible, Thus, for $n$ large enough, $U^{(n)}$ has an inverse $T^{(n)}$ whose columns we denote by $t^{(n,k)}$ for $k=1,\ldots,d$. Since $U^{(n)}$ is a stochastic matrix, the columns of $T^{(n)}$ sum up to 1. Let us quickly show this. First, defining the horizontal row $e:=(1\,\cdots\,1)$,
$$
\left(\sum_{\ell=1}^d t^{(n,k)}_\ell\right)_{k=1}^d=eT^{(n)}=eU^{(n)}T^{(n)}=e,
$$
where the second equality is due to the fact that, since the columns of $U^{(n)}$ sum up to 1, then $eU^{(n)}=e$ and the second equality is due to the fact that $U^{(n)}T^{(n)}=I_d$.

Let us consider any $\vec{\sigma}=\big(\sigma^{(1)},\ldots,\sigma^{(d)}\big)\in\mc Q^d$ where $\sigma^{(k)}$ are states on a finite-dimensional Hilbert space $\mc K$ the set of operators on which we denote by $\mc T(\mc K)$. We next show that there is $n\in\N$ such that $\vec{\tau}^{\boxtimes n}\succeq\vec{\sigma}$. Let $n$ be large enough so that $T^{(n)}$ exists. We define the linear preparation map $\Phi^{(n)}:\R_+^d\to\mc T(\mc K)$ through $\Phi^{(n)}(e_k)=\sigma_k$ for $k=1,\ldots,d$ where $e_k$ are the natural basis vectors having 1 in the $k$th slot and zeros everywhere else. We also define the preparation map $\Psi^{(n)}:\R^d\to\mc T(\mc K)$ through $\Psi^{(n)}(p)=\Phi^{(n)}(T^{(n)}p)$ for all $p\in\R^d$. $\Phi^{(n)}$ is naturally positive and $\tr{\Phi^{(n)}(p)}=p_1+\cdots+p_d$ for all $p=(p_1,\ldots,p_d)\in\R^d$. Due to the fact that the columns of $T^{(n)}$ all sum up to 1, we see that the latter condition holds also for $\Psi^{(n)}$. Since $T^{(n)}\to I_d$ as $n\to\infty$ (due to the fact that $U^{(n)}\to I_d$ as $n\to\infty$ and the continuity of the matrix inverse) we see that the absolute value of the possible negative matrix entries in $T^{(n)}=\big(t^{(n,k)}_\ell\big)_{k,\ell=1}^d$ converge to 0. We now have, for all $k\in\{1,\ldots,d\}$,
\begin{align*}
\Psi^{(n)}(e_k)&=\Phi^{(n)}(T^{(n)}e_k)=\Phi^{(n)}(t^{(n,k)})=\sum_{ell=1}^d t^{(n,k)}_\ell\Phi^{(n)}(e_\ell)=\sum_{\ell=1}^d t^{(n,k)}_\ell\sigma^{(\ell)}\geq0
\end{align*}
where the final operator inequality holds for $n$ large enough so that the negative numbers among $t^{(n,k)}_\ell$ have small enough absolute values so that the sum operator is positive semi-definite; recall that $\sigma^{(\ell)}$ all share the same support. Thus, with $n$ large enough, $\Psi^{(n)}$ is a positive state preparator. Thus, when $n$ is sufficiently large, we may define this positive state preparator $\Psi^{(n)}$ which combined to the quantum-to-classical map $\Phi_{M_n}:\rho\mapsto p^{M_n}_{\rho}$ gives us
\begin{align*}
(\Psi^{(n)}\circ\Phi_{M_n})\big((\tau^{(k)})^{\otimes n}\big)&=\Psi^{(n)}\left(p^{M_n}_{(\tau^{(k)})^{\otimes n}}\right)\\
&=\Phi^{(n)}\left(T^{(n)}p^{M_n}_{(\tau^{(k)})^{\otimes n}}\right)=\Phi^{(n)}(e_k)=\sigma^{(k)}
\end{align*}
for $k=1,\ldots,d$. Since positive linear quantum-to-classical and classical-to-quantum maps are completely positive, so is $\Lambda^{(n)}:=\Psi^{(n)}\circ\Phi_{M_n}$, i.e.,\ $\Lambda^{(n)}$ is a quantum channel (completely positive trace-preserving linear map). Thus, $\Lambda^{(n)}\big((\tau^{(k)})^{\otimes n}\big)=\sigma^{(k)}$ for $k=1,\ldots,d$, i.e.,\ $\vec{\tau}^{\boxtimes n}\succeq\vec{\sigma}$ for sufficiently large $n\in\N$.

Let us assume that $\vec{\rho}=\big(\rho^{(1)},\ldots,\rho^{(d)}\big)\in\mc Q^d$ and $\vec{\sigma}=\big(\sigma^{(1)},\ldots,\sigma^{(d)}\big)\in\mc Q^d$ are such that $\vec{\rho},\vec{\sigma}\notin 0$ and $\vec{\sigma}\succeq\vec{\rho}$. Thus, $\|\vec{\rho}\|=\|\vec{\sigma}\|=:(a_1,\ldots,a_d)\in\R_{>0}^d$. We also have
$$
\left(\frac{1}{a_1},\ldots,\frac{1}{a_d}\right)\boxtimes\vec{\sigma}\succeq\left(\frac{1}{a_1},\ldots,\frac{1}{a_d}\right)\boxtimes\vec{\rho}
$$
where the operator tuples consist of trace-one operators. Thus, we may freely assume that $\rho^{(k)}$ and $\sigma^{(k)}$ are of unit trace. Now $\vec{\rho}\succeq(1,\ldots,1)$, which combined with our earlier observation yields
$$
\vec{\tau}^{\boxtimes n}\boxtimes\vec{\rho}\succeq\vec{\tau}^{\boxtimes n}\boxtimes(1,\ldots,1)=\vec{\tau}^{\boxtimes n}\succeq\vec{\sigma}
$$
for sufficiently large $n\in\N$ showing that $\vec{\tau}$ is a power universal.

Let us show the `only if' part of the claim. Suppose that $\vec{\tau}=\big(\tau^{(1)},\ldots,\tau^{(d)}\big)$ is a power universal. We have $\|\vec{\tau}\|=(1,\ldots,1)$, meaning $\tr{\sigma^{(k)}}=1$ for $k=1,\ldots,d$. Let us make the counter assumption that there are $k,\ell\in\{1,\ldots,d\}$, $k\neq\ell$, such that $\tau^{(k)}=\tau^{(\ell)}$. Let $\vec{\rho}=\big(\rho^{(1)},\ldots,\rho^{(d)}\big)\in\mc Q^d$ be such that $\|\vec{\rho}\|=(1,\ldots,1)$ and $\rho^{(k)}\neq\rho^{(\ell)}$. Since $\vec{\rho}\succeq(1,\ldots,1)$, there is $n\in\N$ such that $\vec{\tau}^{\boxtimes n}\succeq\vec{\rho}$. This means that there is a channel $\Phi$ such that
$$
\rho^{(k)}=\Phi\big((\tau^{(k)})^{\otimes n}\big)=\Phi\big((\tau^{(\ell)})^{\otimes n}\big)=\rho^{(\ell)}\neq\rho^{(k)},
$$
a contradiction. Thus, $\vec{\tau}$ does not contain repeating operators.
\end{proof}

We now find that the preordered semiring $Q^d$ satisfies all the conditions of Theorem \ref{thm:Vergleichsstellensatz} and, thus, also Theorems \ref{thm:barycentre} and \ref{theor:extreme} can be applied to $Q^d$. The set $\mf D(Q^d)$ of divergences as defined in Definition \ref{def:mondiv} essentially coincides with the set $\mf D_Q^d$ of $d$-variate quantum divergences introduced in Definition \ref{def:DQ}. Thus, according to Theorem \ref{thm:barycentre}, all $d$-variate quantum divergences are barycentres over the test spectrum $\hat{\mf D}(Q^d)$. Moreover, according to Theorem \ref{theor:extreme}, the extreme points of $\mf D_Q^d$ coincide with the union of the non-derivation part of $\hat{\mf D}(Q^d)$ and the set of the extreme points of the sets $\mf D_k(Q^d)$, $k=1,\ldots,d$, of derivations.

We easily obtain the following result (which has already been proven in the classical case \cite{Farooq_et_al_2024,Verhagen_et_al_2024}) which states that we do not have to bother about the $\T\R_+^{\rm op}$-part of the test spectrum.

\begin{lemma}
$\mf D(Q^d,\T\R_+^{\rm op})=\emptyset$.
\end{lemma}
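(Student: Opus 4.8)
The goal is to show that $\mf D(Q^d,\T\R_+^{\rm op})=\emptyset$, i.e., there is no nondegenerate monotone homomorphism $\Phi\colon Q^d\to\T\R_+^{\rm op}$. The plan is to derive a contradiction by exhibiting, for any such $\Phi$, two elements $\vec\rho\succeq\vec\sigma$ in $Q^d$ forcing $\Phi(\vec\rho)\leq^{\rm op}\Phi(\vec\sigma)$, i.e., $\Phi(\vec\rho)\geq\Phi(\vec\sigma)$ in the usual order on $\R_+$, while some independent structural feature forces the reverse strict inequality, or alternatively forces $\Phi$ to be degenerate. Recall that a homomorphism to $\T\R_+^{\rm op}$ is a map that is multiplicative in the usual sense, sends $\boxplus$ to $\max$, and is \emph{order-reversing}: $\vec\rho\succeq\vec\sigma\Rightarrow\Phi(\vec\rho)\leq\Phi(\vec\sigma)$.

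First I would recall that any power universal $\vec\tau$ satisfies $\vec\tau\succeq 1$ (indeed $\vec\tau\succeq(1,\ldots,1)$ via the trace-and-prepare channel used in Lemma~\ref{lemma:Quniversal}), and $1\succeq\vec\tau$ is false in general; more importantly, for a nondegenerate $\Phi$ there must exist $x\rgeq y$ with $\Phi(x)\neq\Phi(y)$, and by the order-reversal this forces $\Phi(x)<\Phi(y)$ with $x\rgeq y$. The key structural input is that in $Q^d$ we can always \emph{dilate}: for any $\vec\rho$ with trace-one components there is a channel realizing $\vec\rho\succeq(1,\ldots,1)$ (preparation/trace map), i.e., $1$ sits at the bottom of $\succeq$ among normalized elements, and also $(1,\ldots,1)\boxplus(1,\ldots,1)\sim 2$ with $\vec\rho\boxplus\vec\rho\succeq$ nothing new, but crucially $\vec\rho\succeq(1,\ldots,1)$ gives $\Phi(\vec\rho)\leq\Phi(1)=1$ for every normalized $\vec\rho$. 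Next, apply multiplicativity: $\Phi(\vec\rho^{\boxtimes n})=\Phi(\vec\rho)^n$. If $\Phi(\vec\rho)<1$ for some normalized $\vec\rho$, then $\Phi(\vec\rho^{\boxtimes n})\to 0$. Now compare with the power universal $\vec\tau$: by polynomial growth, $\vec\tau^{\boxtimes N}\succeq\vec\rho^{\boxtimes n}$ for suitable $N$, so order-reversal gives $\Phi(\vec\tau)^N=\Phi(\vec\tau^{\boxtimes N})\leq\Phi(\vec\rho^{\boxtimes n})=\Phi(\vec\rho)^n$. Since $\vec\tau\succeq(1,\ldots,1)$ we also have $\Phi(\vec\tau)\leq 1$; but if $\Phi(\vec\tau)<1$ this inequality $\Phi(\vec\tau)^N\leq\Phi(\vec\rho)^n$ is not yet contradictory. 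The genuine obstruction comes from the $\max$-additivity: for any normalized $\vec\rho,\vec\sigma$ we have $\vec\rho\boxplus\vec\sigma\succeq(2,\ldots,2)$ (again trace-and-prepare to a two-dimensional $(1,1,\ldots)$), hence $\max\{\Phi(\vec\rho),\Phi(\vec\sigma)\}=\Phi(\vec\rho\boxplus\vec\sigma)\leq\Phi((2,\ldots,2))$; but also $(2,\ldots,2)\succeq$ any normalized $\vec\rho$ scaled, giving the reverse, so $\Phi$ is constant $=1$ on all normalized elements — degeneracy.

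More precisely, the cleanest route: show $\Phi\equiv 1$ on $S_N$. Take any normalized $\vec\rho$. On one hand $\vec\rho\succeq(1,\ldots,1)$ forces $\Phi(\vec\rho)\leq 1$. On the other hand, $(1,\ldots,1)\succeq$ nothing below it, but $\vec\rho\boxplus\vec\rho'\succeq(1,\ldots,1)$ for a two-dim normalized target is irrelevant; instead use that $\vec\rho$ itself, together with ANY normalized $\vec\sigma$, satisfies $\vec\rho\boxplus\vec\sigma\succeq (2,0,\ldots)$-type element whose norm is $(2,\ldots,2)$, and simultaneously $\vec\rho\boxplus\vec\sigma$ can be reached from a suitable normalized power universal power, pinning $\Phi(\vec\rho\boxplus\vec\sigma)=1$, whence $\max\{\Phi(\vec\rho),\Phi(\vec\sigma)\}=1$ for all such pairs, forcing $\Phi(\vec\rho)=1$ (by varying $\vec\sigma$ to have $\Phi(\vec\sigma)<1$ if possible, contradiction; so all values are $1$). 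Thus $\Phi$ is degenerate, contradicting $\Phi\in\Sigma(Q^d,\T\R_+^{\rm op})$. The main obstacle I anticipate is verifying that the relevant $\succeq$-relations in $Q^d$ genuinely hold — specifically that from a normalized element one can always channel both "down to $1$" and "up to a component of a high power of the power universal" — but these are exactly the preparation-map and power-universal arguments already carried out in detail in the proof of Lemma~\ref{lemma:Quniversal}, so they transfer essentially verbatim. The $\R_+^{\rm op}$/$\T\R_+^{\rm op}$ asymmetry is the crux: the order-reversal combined with $\vec\rho\succeq(1,\ldots,1)$ caps $\Phi$ from above by $1$ everywhere on $S_N$, and the $\max$-additivity together with reachability-from-above then caps it from below, leaving no room for nondegeneracy in the tropical-opposite case (whereas in the $\R_+^{\rm op}$ case the \emph{sum}-additivity does leave room, which is why $\mf D(Q^d,\R_+^{\rm op})$ is nonempty).
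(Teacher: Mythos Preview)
Your argument has a genuine gap in the ``cap from below'' step. You correctly observe that every normalized $\vec\rho$ satisfies $\vec\rho\succeq(1,\ldots,1)$, whence order-reversal gives $\Phi(\vec\rho)\leq 1$. But your route to $\Phi(\vec\rho)\geq 1$ does not work. You propose to use that $\vec\rho\boxplus\vec\sigma$ ``can be reached from a suitable normalized power universal power, pinning $\Phi(\vec\rho\boxplus\vec\sigma)=1$.'' Concretely this would require some $\vec\gamma\succeq\vec\rho\boxplus\vec\sigma$ with $\Phi(\vec\gamma)=1$, and the natural candidate is $\vec\gamma=(2,\ldots,2)\boxtimes\vec\tau^{\boxtimes N}$. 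However $\Phi(\vec\gamma)=\Phi(2,\ldots,2)\,\Phi(\vec\tau)^N=\Phi(\vec\tau)^N$, and nondegeneracy actually \emph{forces} $\Phi(\vec\tau)<1$: if $x\rgeq y$ with $\Phi(x)<\Phi(y)$ (which exists by nondegeneracy and order-reversal), then polynomial growth gives $y\vec\tau^{\,k}\rgeq x$, so $\Phi(y)\Phi(\vec\tau)^k\leq\Phi(x)<\Phi(y)$, i.e.\ $\Phi(\vec\tau)<1$. Thus $\Phi(\vec\gamma)=\Phi(\vec\tau)^N<1$ and no pinning occurs. The power-universal machinery only ever produces elements \emph{above} in $\succeq$, and order-reversal turns this into upper bounds on $\Phi$-values, never lower bounds.

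The paper's proof uses a different mechanism that you are missing: the channel that collapses a direct sum to a convex mixture gives $t\vec\rho\boxplus(1-t)\vec\sigma\succeq t\vec\rho+(1-t)\vec\sigma$, and hence (after establishing $\Phi(\lambda,\ldots,\lambda)=1$) the quasi-concavity $\Phi\big(t\vec\rho+(1-t)\vec\sigma\big)\geq\max\{\Phi(\vec\rho),\Phi(\vec\sigma)\}$. The point is that any two tuples $\vec\rho,\vec\sigma$ sharing the \emph{same support} can each be written as a genuine convex combination involving the other, so quasi-concavity sandwiches $\Phi(\vec\rho)=\Phi(\vec\sigma)$. Then evaluating on the maximally mixed tuple on that support gives $\Phi(\vec\rho)=1$. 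This convex-combination/same-support idea is the missing ingredient; reachability from power universals cannot substitute for it.
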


\begin{proof}
When $\Phi\in\Sigma(Q^d,\mb K)$, $\mb K\in\{\R_+,\R_+^{\rm op},\T\R_+,\T\R_+^{\rm op}\}$, we can prove in exactly the same way as in the proof of Proposition 13 of \cite{Farooq_et_al_2024} that, 
whenever $(a_1,\ldots,a_d)\in\R_{>0}^d$,
$$
\Phi(a_1,\ldots,a_d)=\left\{\begin{array}{ll}
a^{\alpha_1}\cdots a_d^{\alpha_d},&\underline{\alpha}\in A_1\cup\cdots\cup A_d\ {\rm and}\ \mb K=\R_+\\
&{\rm or}\ \underline{\alpha}\in A_+\ {\rm and}\ \mb K=\R_+^{\rm op},\\
a^{\beta_1}\cdots a^{\beta_d},&\underline{\beta}\in B_1\cup\cdots\cup B_d\ {\rm and}\ \mb K\in\{\T\R_+,\T\R_+^{\rm op}\}.
\end{array}\right.
$$
Especially, when $\mb K\in\{\T\R_+,\T\R_+^{\rm op}\}$, $\Phi(\lambda,\ldots,\lambda)=1$ for all $\lambda>0$.

Let now $\Phi\in\Sigma(Q^d,\T\R_+^{\rm op})$. We have, for all $n\in\N$, $\vec{\rho},\vec{\sigma}\in\mc Q^d_n$, and $t\in(0,1)$,
\begin{align}
\Phi\big(t\vec{\rho}+(1-t)\vec{\sigma}\big)&\geq\Phi\big(t\vec{\rho}\boxplus(1-t)\vec{\sigma}\big)=\max\big\{\Phi(t\vec{\rho}),\Phi\big((1-t)\vec{\sigma}\big)\big\}\nonumber\\
&=\max\big\{\Phi(t,\ldots,t)\Phi(\vec{\rho}),\Phi(1-t,\ldots,1-t)\Phi(\vec{\sigma})\big\}\nonumber\\
&=\max\big\{\Phi(\vec{\rho}),\Phi(\vec{\sigma})\big\}.\label{eq:quasiconc}
\end{align}
We next show that $\Phi(\vec{\rho})$ only depends on the common support of the operators in $\vec{\rho}$. Let $\vec{\rho}=\big(\rho^{(1)},\ldots,\rho^{(d)}\big)$ and $\vec{\sigma}=\big(\sigma^{(1)},\ldots,\sigma^{(d)}\big)$ be elements of $\mc Q^d_n$ for some $n\in\N$ such that
$$
{\rm supp}\,\rho^{(1)}=\cdots={\rm supp}\,\rho^{(d)}={\rm supp}\,\sigma^{(1)}=\cdots={\rm supp}\,\sigma^{(d)}=:P
$$
where we view $P$ as a projection which we assume to be non-zero. There are $\vec{\tau_s}=\big(\tau_s^{(1)},\ldots,\tau_s^{(d)}\big)\in\mc Q^d_n$ with $s=1,2$ and $s,t\in(0,1)$ such that $\vec{\rho}=s\vec{\sigma}+(1-s)\vec{\tau_1}$ and $\vec{\sigma}=t\vec{\rho}+(1-t)\vec{\tau_2}$. Using \eqref{eq:quasiconc}, we may evaluate
\begin{align*}
\Phi(\vec{\sigma})&\geq\max\{\Phi(\vec{\rho}),\Phi(\vec{\tau_1})\}\geq\Phi(\vec{\rho})\geq\max\{\Phi(\vec{\sigma}),\Phi(\vec{\tau_2})\}\geq\Phi(\vec{\sigma}).
\end{align*}
Thus, $\Phi(\vec{\rho})=\Phi(\vec{\sigma})$. Denoting $\tr{P}=:m$, we now have
\begin{align*}
\Phi(\vec{\rho})&=\Phi\left(\frac{1}{m}P,\ldots,\frac{1}{m}P\right)=\Phi\left(\frac{1}{m},\ldots,\frac{1}{m}\right)=1
\end{align*}
Thus, $\Phi(\vec{\rho})=1$ whenever $\vec{\rho}$ is non-zero. Naturally, $\Phi(0)=0$. We have now seen that this is the only monotone homomorphism of $Q^d$ into $\T\R_+^{\rm op}$. This homomorphism is clearly degenerate, so that $\Sigma(Q^d,\T\R_+^{\rm op})=\emptyset=\mf D(Q^d,\T\R_+^{\rm op})$.
\end{proof}

We now know that the quantum semiring $Q^d$ satisfies the requirements of Theorem \ref{thm:Vergleichsstellensatz} and, thus, also the barycentric result Theorem \ref{thm:barycentre} and the extremality result Theorem \ref{theor:extreme} hold for the quantum divergences. We also know that the part $\hat{\mf D}(Q^d,\T\R_+^{\rm op})$ of the test spectrum is empty. However, the rest of the test spectrum remains elusive.

\begin{SCfigure}
\begin{overpic}[scale=0.4,unit=1mm]{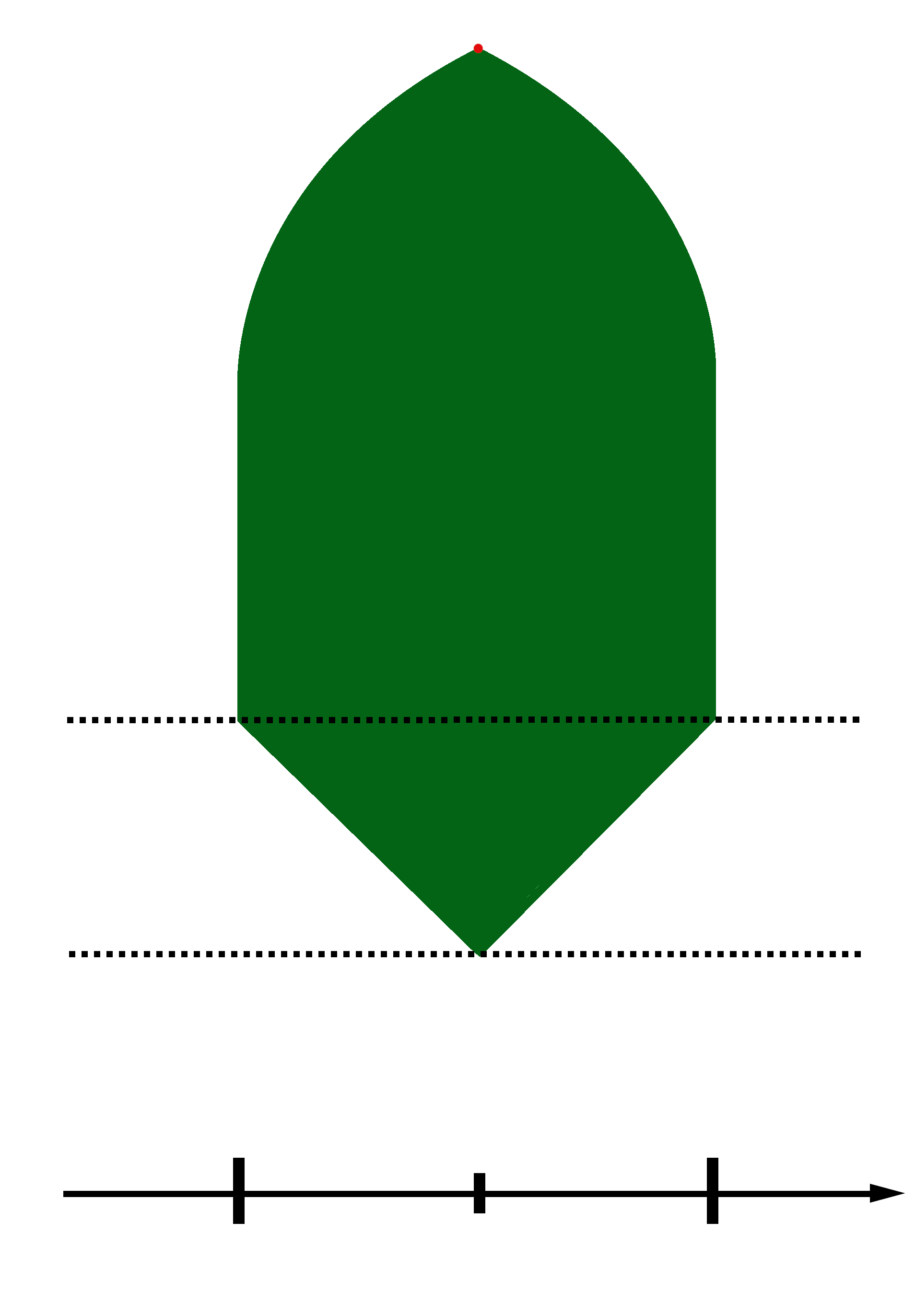}
\put(0,43){\begin{large}
$z=1$
\end{large}}
\put(0,27){\begin{large}
$z=1/2$
\end{large}}
\put(0,10){\begin{large}
$z=0$
\end{large}}
\put(11,4){$\alpha=0$}
\put(27,4){$\alpha=1/2$}
\put(45,4){$\alpha=1$}
\put(24,55){\begin{Huge}
$\textcolor{white}{\hat{D}_{\alpha,z}}$
\end{Huge}}
\put(35,88){\begin{large}
\punainen{$\hat{D}^\T$}
\end{large}}
\put(10,76){\begin{Huge}
\darker{$\overline{R}$}
\end{Huge}}
\end{overpic}
\caption{\label{fig:alphazed} A depiction of the test spectrum of the quantum semiring of pairs of states having the form of \eqref{eq:specialform}. The one-point compactification $\overline{R}$ of the set $R$ of those $(\alpha,z)$ such that $0\leq\alpha\leq 1$ and $z\geq\max\{\alpha,1-\alpha\}$ is presented in green. The bulk of this set consists of $\R_+^{\rm op}$- quantum relative entropies $\hat{D}_{\alpha,z}$ while the single point at $z=\infty$ is the $\T\R_+^{\rm op}$-quantum relative entropy $\hat{D}^\T$ (depicted in red) which is the pointwise limit of $\hat{D}_{\alpha,z}$ as $z\to\infty$.}
\end{SCfigure}

\begin{example}
Let us briefly mention a case where we can fully characterize the relevant test spectrum in a restricted quantum case.
In this bivariate setting the states simultaneously block-diagonalize the substates in the blocks being pure \cite{Verhagen_et_al_2025}. This means that the pairs of states we study have the form
\begin{equation}\label{eq:specialform}
\rho=\bigoplus_{i=1}^n p_i|\fii_i\>\<\fii_i|,\quad\sigma=\bigoplus_{i=1}^n q_i|\psi_i\>\<\psi_i|,
\end{equation}
where $(p_i)_i$ and $(q_i)_i$ are probability vectors sharing at least some common support and $\fii_i,\psi_i\in\hil_i$ are unit vectors in some Hilbert space $\hil_i$. In this special case, the test spectrum consists of the renormalized $\alpha$-$z$ quantum relative entropies $\hat{D}_{\alpha,z}:=\frac{1-\alpha}{z+1}D_{\alpha,z}$ for $0\leq\alpha<1$ and $z\geq\max\{\alpha,1-\alpha\}$ and their pointwise limits $\alpha\to 1$ (with any fixed $z$) and $z\to\infty$ (for any $\alpha$).

However, the preordered semiring involved in this setting dramatically differs from the semiring $Q^d$ studied earlier in this section: The states need not share the same support and the parts of the spectrum involving $\R_+$ and $\T\R_+$ as well as the derivation part are empty. However, there is an opposite-tropical element $\hat{D}^\T$ of the test spectrum which is given by the pointwise limit $z\to\infty$; recall that this part $\mf D(Q^d,\T\R_+^{\rm op})$ is empty for the semiring $Q^d$.

Since, according to \cite{Verhagen_et_al_2025}, the {\it Vergleichsstellensatz}, Theorem \ref{thm:Vergleichsstellensatz}, applies to this restricted quantum semiring, theorems \ref{thm:barycentre} and \ref{theor:extreme} are also applicable to this semiring. Denoting the the set of those $(\alpha,z)\in\R^2$ such that $0\leq\alpha\leq 1$ and $z\geq\max\{\alpha,1-\alpha\}$ by $R$, we have that, for any quantum relative entropy $D$, there is a finite inner and outer regular positive measure $\mu:\mc B(R)\to\R_+$ and $\lambda\geq0$ such that, for any states $\rho$ and $\sigma$ like in \eqref{eq:specialform},
$$
D(\rho\|\sigma)=-\int_R \log{\tr{\left(\sigma^{\frac{\alpha-1}{2z}}\rho^{\frac{\alpha}{z}}\sigma^{\frac{\alpha-1}{2z}}\right)^z}}\,d\mu(\alpha,z)-\lambda\log{\big\|({\rm supp}\,\rho)({\rm supp}\,\sigma)\big\|_\infty}
$$
where ${\rm supp}\,\tau$ is the support projection of a (positive) operator $\tau$ and $\|\cdot\|_\infty$ is the operator norm. The initial integral term above corresponds to a barycentre over the $\R_+^{\rm op}$-part of the test spectrum and the final term is the pointwise $z\to\infty$ limit $\hat{D}^\T$, the single $\T\R_+^{\rm op}$-quantum relative entropy of this semiring. See also Figure \ref{fig:alphazed} for the test spectrum of this semiring identified with the one-point compactification of $R$.
\end{example}

Next we review some facts that we may easily deduce for the quantum test spectrum $\hat{\mf D}(Q^d)$. Let us recall the notations and discussion of Example \ref{ex:classical}. The set of the equivalence classes of those $\vec{\rho}=\big(\rho^{(1)},\ldots,\rho^{(d)}\big)\in\mc Q^d$ such that $\rho^{(k)}\rho^{(\ell)}=\rho^{(\ell)}\rho^{(k)}$ for all $k,\ell=1,\ldots,d$, i.e.,\ {\it commutative tuples}, coincides exactly with the classical matrix majorization semiring $C^d$. Let us briefly elaborate this. For any commutative tuple $\vec{\rho}=\big(\rho^{(1)},\ldots,\rho^{(d)}\big)\in\mc Q^d_n$, we fix a common eigenbasis $\{h_i\}_{i=1}^n\subset\C^n$ for the operators $\rho^{(k)}$, and define non-normalized finite statistical experiment $P(\vec{\rho})=\big(p^{(1)}(\vec{\rho}),\ldots,p^{(d)}(\vec{\rho})\big)$ where $p^{(k)}(\vec{\rho})=\big(\sis{h_i}{\rho^{(k)}h_i}\big)_{i=1}^n$. Thus, $[P(\vec{\rho})]\in C^d$. We may obviously also define a classical-to-quantum preparation map that takes $P(\vec{\rho})$ back to $\vec{\rho}$. Thus, also each element of the quantum test spectrum $\hat{\mf D}(Q^d)$ reduces to exactly one element of the test spectrum of $C^d$. We can interpret this so that, for any $\Delta\in\hat{\mf D}(Q^d)$, there exists a unique $\Delta_{\rm cl}$ among the maps given in \eqref{eq:ClTemperate}, \eqref{eq:ClTropical}, and \eqref{eq:ClDeriv} such that $\Delta(\vec{\rho})=\Delta_{\rm cl}\big(P(\vec{\rho})\big)$ for all commutative tuples $\vec{\rho}$.

\begin{figure}
\begin{center}
\begin{overpic}[scale=0.5,unit=1mm]{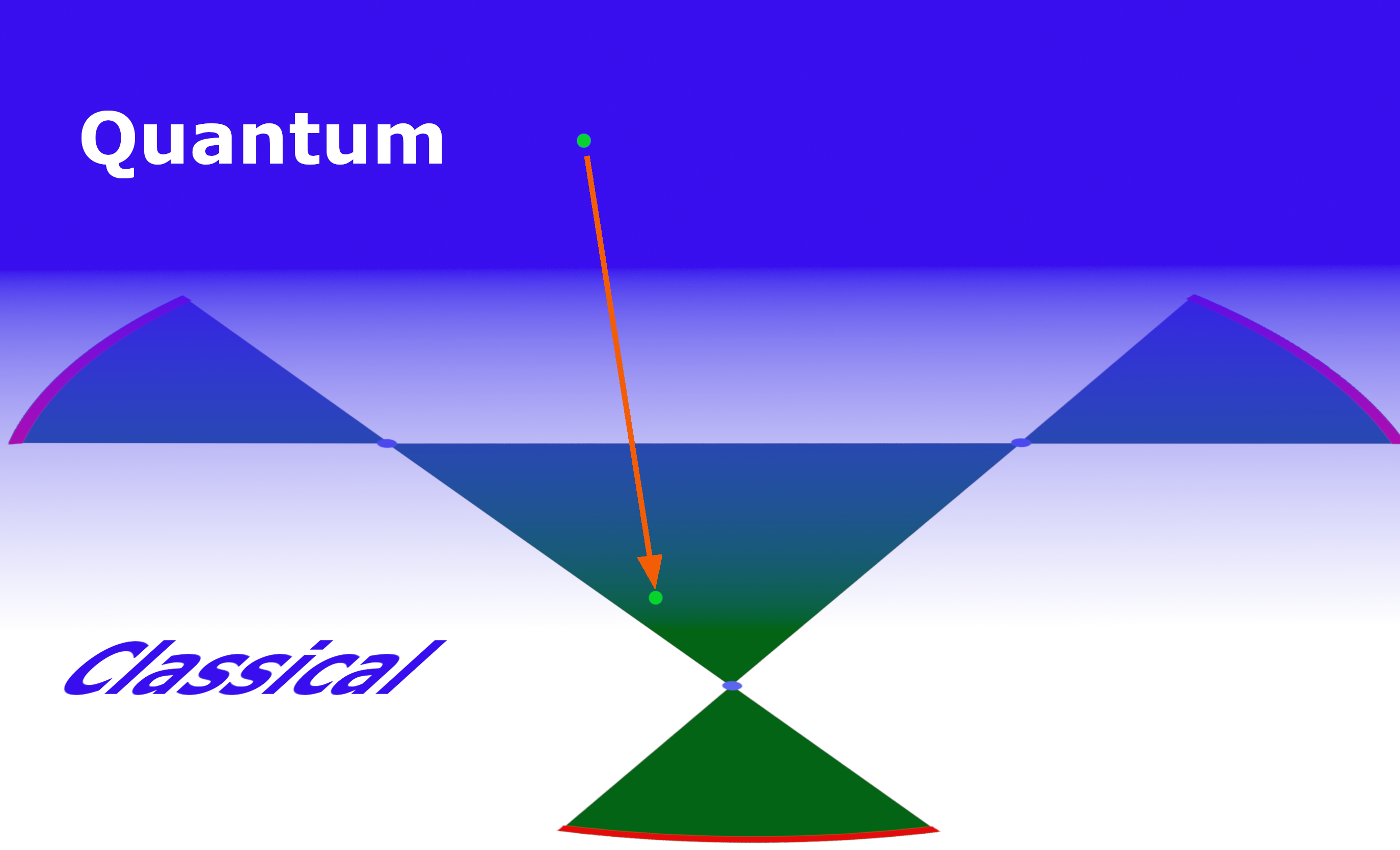}
\put(50,58){\begin{large}
$\textcolor{white}{\Delta\in\hat{\mf D}(Q^3)}$
\end{large}}
\put(55,20){\begin{large}
$\textcolor{white}{D_{\underline{\alpha}}}$
\end{large}}
\end{overpic}
\caption{\label{fig:quantum} An element of the quantum test spectrum $\hat{\mf D}(Q^d)$ (here a trivatiate divergence, $d=3$) reduces to a single classical divergence of the test spectrum $\hat{\mf D}(C^d)$ which in the depicted case is a temperate divergence $D_{\underline{\alpha}}$ as defined in \eqref{eq:ClTemperate}}
\end{center}
\end{figure}

In particular, for any $\Delta\in\mf D(Q^d,\R_+^{\rm op})$, there is a unique $\underline{\alpha}\in A_+$ such that $\Delta(\vec{\rho})=D_{\underline{\alpha}}\big(P(\vec{\rho})\big)$ for all commutative $\vec{\rho}$, i.e.,\ $\Delta\in\mf D(Q^d,\R_+^{\rm op})$ is associated to a particular $\underline{\alpha}\in A_+$. For any $\Delta\in\mf D(Q^d,\R_+)$, there exists unique $k\in\{1,\ldots,d\}$ and $\underline{\alpha}\in A_k$ such that $\Delta(\vec{\rho})=D_{\underline{\alpha}}\big(P(\vec{\rho})\big)$ for all commutative $\vec{\rho}$. Moreover, for any $\Delta\in\mf D(Q^d,\T\R_+)$, there exists unique $k\in\{1,\ldots,d\}$ and $\underline{\beta}\in B_k$ such that $\Delta(\vec{\rho})=D^\T_{\underline{\beta}}\big(P(\vec{\rho})\big)$ for all commutative $\vec{\rho}$. Finally, for any $k\in\{1,\ldots,d\}$ and any derivation $\Delta\in\mf D_k(Q^d)$, there exists a unique $\underline{\gamma}\in\R^d\setminus\{0\}$ such that $\Delta(\vec{\rho})=\Delta^{(k)}_{\underline{\gamma}}\big(P(\vec{\rho})\big)$ for all commutative $\vec{\rho}$. See also Figure \ref{fig:quantum}.

Some divergences presented earlier are immediately seen to be elements of the test spectrum as elaborated in the list below. For the non-derivation part of the spectrum (first three bullet points), all the divergences listed therein are extreme multivariate divergences (or relative entropies in the $d=2$ case) according to Theorem \ref{theor:extreme}.

\begin{itemize}
\item $\mf D(Q^d,\R_+^{\rm op})$: This set includes the matrix mean divergences $D^{\rm mm}_{\underline{\alpha}}$ of \eqref{eq:MatrixMeanDiv} whenever $\underline{\alpha}$ is a non-extreme probability vector. In the case $d=2$, these divergences reduce to the Kubo-Ando relative entropies $D^{\rm mm}_\alpha$ of \eqref{eq:KuboAndo} with $\alpha\in(0,1)$. Also in the case $d=2$, this part of the test spectrum includes the $\alpha$-$z$ relative entropies $D_{\alpha,z}$ of \eqref{eq:alphazed} whenever $\alpha\in(0,1)$ and conditions \eqref{eq:alphazedcond} hold. These include, in particular, the quantum R\'{e}nyi relative entropies of the Petz or sandwiched type with $\alpha\in(0,1)$. The classical counterparts of these are the divergences $D_{\underline{\alpha}}$ of \eqref{eq:ClTemperate} with $\underline{\alpha}$ a non-extreme probability vector. Also the staged divergences $D_{\alpha,z,\underline{\beta},(\underline{\beta}^a)_{a\in A}}$ of \eqref{eq:staged} are here whenever $\alpha\in[0,1)$.
\item $\mf D(Q^d,\R_+)$: In case $d=2$, this set includes the $\alpha$-$z$ relative entropies $D_{\alpha,z}$ of \eqref{eq:alphazed} whenever $\alpha>1$ and conditions \eqref{eq:alphazedcond} hold. These include, in particular, the quantum R\'{e}nyi relative entropies of the Petz and sandwiched type with $\alpha>1$. Also the Kubo-Ando relative entropies $D_{\alpha}^{\rm mm}$ for $1<\alpha\leq 2$ are here. The classical counterparts of these are the divergences $D_{\underline{\alpha}}$ of \eqref{eq:ClTemperate} with $\underline{\alpha}\in A_k$ for $k=1,\ldots,d$. Whenever $\alpha>1$, also the quantities $D_{\alpha,z,\underline{\beta},(\underline{\beta}^a)_{a\in A}}$ are here.
\item $\mf D(Q^d,\T\R_+)$: In the case $d=2$, this set includes only the max-quantum relative entropy $D_{\rm max}$ of \eqref{eq:maxdiv}. This corresponds to the classical max-relative entropy (the pointwise limit of the classical R\'{e}nyi divergences $D_\alpha$ as $\alpha\to\infty$).
\item  $\mf D_k(Q^d)$: In the case $d=2$ and $k=1$, this set includes both the Umegaki quantum relative entropy $D_{\rm U}$ of \eqref{eq:Umegaki} and the Belavkin-Staszewski quantum relative entropy$D_{\rm BS}$ of \eqref{eq:BS}. In the general case, we can, e.g.,\ define $\Delta^{(k)}_{\underline{\gamma},{\rm U}}\in\mf D_k(Q^d)$ for $\underline{\gamma}\in\R_+^d\setminus\{0\}$ through
$$
\Delta^{(k)}_{\underline{\gamma},{\rm U}}(\vec{\rho})=\frac{1}{\tr{\rho^{(k)}}}\sum_{\ell=1}^d \gamma_\ell D_{\rm U}\big(\rho^{(k)}\big\|\rho^{(\ell)}\big).
$$
We may define similar divergences w.r.t.\ the Belavkin-Staszewski divergence or mixing both Umegaki and Belavkin-Staszewski divergences. When $\underline{\gamma}$ is supported in more than one entry, these divergences are naturally not extreme in $\mf D_Q^d$.
\end{itemize}

\section{Conclusions}
We have defined extensive divergences on a preordered semiring and shown that, when the semiring $S$ is a preordered semidomain of polynomial growth and degeneracy $d\in\N$ (i.e.,\ the equivalence relation generated by the preorder is characterized by a vector-valued degenerate homomorphism $\|\cdot\|:S\to\R_{>0}^d\cup\{0\}$), then all the extensive divergences have a barycentric decomposition over the test spectrum of $S$. The test spectrum consists of extensive divergences that either satisfy certain extra additivity properties or Leibniz-type rules. We also show that the `bulk' of this test spectrum can be viewed as the set of convex extreme points of the convex set of extensive divergences.

All the above results are applicable to the problems of finite comparing classical or quantum experiments. Whereas we can fully characterize the test spectrum in the classical case, the quantum case poses problems that, for now, seem difficult to solve. However, we have shown that, once fully characterized, the quantum test spectrum is all that is needed to describe all extensive multivariate quantum divergences (or quantum relative entropies). However, most of the quantum divergences (relative entropies) proposed in the literature, such as the $\alpha-z$ relative entropies (including the Petz-type and sandwiched quantum R\'{e}nyi relative entropies, especially the Umegaki relative entropy) and the matrix mean relative entropies (Kubo-Ando relative entropies), are within this test spectrum and hence, due to our extremality result, essentially independent of each other. This strongly suggests that all of these varied quantum relative entropies are meaningful in some quantum information processing tasks, although we cannot say so definitively based solely on their extremality.

It is important to note that the real-algebraic methods and results are not so important for the barycentric result of Theorem \ref{thm:barycentre}. When perusing the proof of Theorem \ref{thm:barycentre}, the reader may notice that the only thing is needed is a variant of Lemma \ref{lemma:key} which identifies a set $\hat{\mf D}$ of extensive (when suitably defined) and monotone maps on the set $S_N$ of resources of interest which is compact in the topology of pointwise convergence such that, when $\Delta(x)\geq\Delta(y)$ for some $x,y\in S_N$, then $D(x)\geq D(y)$ for any extensive monotone map $D:S_N\to\R$. Proceeding in exactly the same way as in the proof of Theorem \ref{thm:barycentre}, one proofs similar barycentric results for extensive monotone maps on $S_N$.

One may also notice that we have proven a barycentric result first followed by an extremality result in Theorem \ref{theor:extreme}. This seems a bit odd since usually order is the opposite: we first identify the extreme points of a convex (and compact) set and then use a Choquet-like theorem to express any element of the convex set as a barycentre over the extreme points. It may well be that there is a more straight way to the extremality result Theorem \ref{theor:extreme} without using the barycentric result Theorem \ref{thm:barycentre}. From Theorem \ref{theor:extreme} one could then prove Theorem \ref{thm:barycentre} using Choquet integrals. We leave this possible more direct proof as an open problem for now.

\section*{Acknowledgements}
\noindent This project is supported by the National Research Foundation, Singapore through the National Quantum Office, hosted in A*STAR, under its Centre for Quantum Technologies Funding Initiative (S24Q2d0009). The author also thanks Frits Verhagen, Roberto Rubboli, Marco Tomamichel, Milan Mos\'{o}nyi, and P\'{e}ter Vrana for stimulating discussions when preparing this work.

\bibliographystyle{ultimate}
\bibliography{bibliography}

\end{document}